\let\newfloat\newfloat@ltx
\newcommand{\norm}[1]{\left \lVert #1 \right \rVert}
\newcommand{\abs}[1]{\lvert #1\rvert}      
\newcommand{\Abs}[1]{\left\lvert #1\right\rvert}
\newcommand{\bra}[1]{\langle #1|}
\newcommand{\ket}[1]{|#1 \rangle}
\newcommand{\braket}[2]{\langle #1|#2 \rangle}
\newcommand{\ketbra}[2]{|#1\rangle \langle #2|}
\newcommand{\expval}[1]{\langle #1 \rangle}
\newcommand{\dsC}{\mathbb{C}}
\newcommand{\dsE}{\mathbb{E}}
\newcommand{\dsZ}{\mathbb{Z}}
\newcommand{\scB}{\mathcal{B}}
\newcommand{\scC}{\mathcal{C}}
\newcommand{\scD}{\mathcal{D}}
\newcommand{\scE}{\mathcal{E}}
\newcommand{\scH}{\mathcal{H}}
\newcommand{\scL}{\mathcal{L}}
\newcommand{\scN}{\mathcal{N}}
\newcommand{\scO}{\mathcal{O}}
\newcommand{\scP}{\mathcal{P}}
\newcommand{\scS}{\mathcal{S}}
\newcommand{\Tr}{\operatorname{Tr}}
\theoremstyle{plain}
\newtheorem{theorem}{Theorem}[section]
\newtheorem{proposition}[theorem]{Proposition}
\newtheorem{lemma}[theorem]{Lemma}
\theoremstyle{definition}
\newtheorem{definition}[theorem]{Definition}
\theoremstyle{remark}
  \long\def\@makecaption#1#2{%
    \vskip\abovecaptionskip
    \begingroup\small
      \sbox\@tempboxa{#1: #2}%
      \ifdim\wd\@tempboxa<\hsize
        \hbox to\hsize{\hfil\box\@tempboxa\hfil}%
      \else
        \setlength{\parindent}{0pt}%
        \justifying
        \noindent #1: #2\par
      \fi
    \endgroup
    \vskip\belowcaptionskip
  }%
\begin{document}

\title{Hierarchy of discriminative power and complexity in learning quantum ensembles}

\author{Jian Yao}
\affiliation{Ming Hsieh Department of Electrical and Computer Engineering, University of Southern California, Los Angeles, California 90089, USA}

\author{Pengtao Li}
\affiliation{Department of Mathematics, University of Southern California, Los Angeles, California 90089, USA}

\author{Xiaohui Chen}
\affiliation{Department of Mathematics, University of Southern California, Los Angeles, California 90089, USA}

\author{Quntao Zhuang}
\email{qzhuang@usc.edu}
\affiliation{Ming Hsieh Department of Electrical and Computer Engineering, University of Southern California, Los Angeles, California 90089, USA}
\affiliation{Department of Physics and Astronomy, University of Southern California, Los Angeles, California 90089, USA}

\begin{abstract}

Distance metrics are central to machine learning, yet distances between ensembles of quantum states remain poorly understood due to fundamental quantum measurement constraints.
We introduce a hierarchy of integral probability metrics, termed MMD-$k$, which generalizes the maximum mean discrepancy to quantum ensembles and exhibit a strict trade-off between discriminative power and statistical efficiency as the moment order $k$ increases.
For pure-state ensembles of size $N$, estimating MMD-$k$ using experimentally feasible SWAP-test-based estimators requires $\Theta(N^{2-2/k})$ samples for constant $k$, and $\Theta(N^3)$ samples to achieve full discriminative power at $k = N$.
In contrast, the quantum Wasserstein distance attains full discriminative power with $\Theta(N^2 \log N)$ samples.
These results provide principled guidance for the design of loss functions in quantum machine learning, which we illustrate in the training quantum denoising diffusion probabilistic models.

\end{abstract}

\maketitle

\section{Introduction}

Distance metrics are fundamental in modern statistics and machine learning. They shape how we define learning objectives, compare distributions, generalize models and reason about robustness.
In particular, estimating the distance between two unknown data ensembles from finite samples is a key task in generative learning, where the cost functions must be evaluated without direct access to the underlying probability distributions. Widely used distances such as the Fisher information metric~\citep{amari2016information}, maximum mean discrepancy (MMD)~\citep{MMD_JMLR:v13:gretton12a} and Wasserstein distance (i.e. earth mover's distance)~\citep{Wasserstein_villani2003topics} play a crucial role in enabling efficient learning. 

\begin{figure}
    \includegraphics[width=\linewidth]{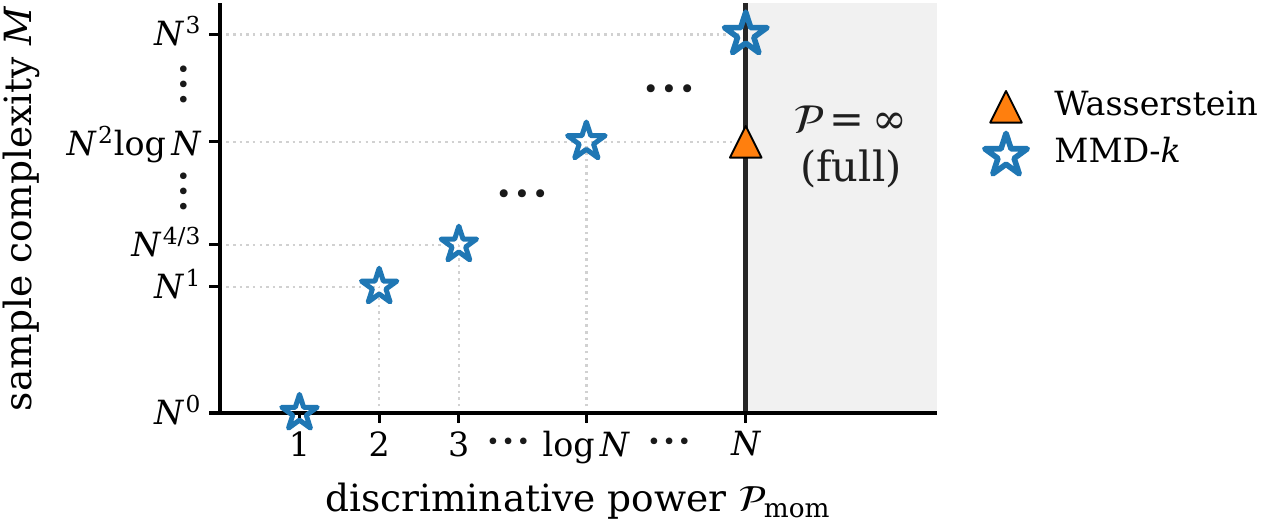}
    
    \caption{Conceptual plot of the relationship between sample complexity scaling and moment-based discriminative power. As shown by points of MMD-$k$ (blue star), the scaling of sample complexity is faster with higher discriminative power. For MMD-$k$, to reach the full discriminative power, the sample complexity needed is $N^3$, while Wasserstein (orange triangle) needs $N^2 \log N$. This hierarchy formalizes a fundamental tradeoff between what properties of a quantum ensemble can be detected and how efficiently they can be learned from measurements.}
    \label{fig: sc and dp}
\end{figure}

Quantum machine learning leverages quantum computation not only to accelerate learning tasks with classical data, but also to provide an effective way to learn quantum systems and dynamics where the data is inherently quantum. When classical data are embedded into quantum models, cost functions from classical machine learning can often be adopted directly. In contrast, when the data consists of an ensemble of quantum states, identifying suitable cost function becomes substantially more challenging, due to the intricacy of quantum physics. 

Learning and characterizing an ensemble of quantum states arises in a variety of settings. For instance, notions of quantum-state complexity are defined in terms of Haar-random ensembles and their approximations (t-designs)~\cite{ambainis2007quantumtdesignstwiseindependence}. The expressive power of quantum models for generating such ensembles has been systematically studied using frame potentials and moment operators~\cite{roberts2017chaos}. More recently, generative learning of general quantum states ensembles has been explored in applications ranging from device characterization to the study of many-body physics~\cite{QuDDPM_PhysRevLett.132.100602,tezuka2024generative}, where cost functions based on the quantum Wasserstein distance and fidelity-based MMD distance have been proposed.

Despite this progress, the discriminative power and sample complexity of distance metrics between quantum state ensembles remain poorly understood. While distance measures between individual quantum states are well established, extending these notions to ensembles introduces new challenges: ensemble descriptions must be invariant under relabeling, and distances must be estimated from finite copies of quantum states subject to measurement uncertainty. Understanding how these constraints fundamentally limit what ensemble properties a distance can detect, and how efficiently such distances can be learned from data, is the central focus of this work.

\subsection{Our contributions}
In this work, we establish a hierarchy of (pseudo) distance metrics for comparing ensembles of quantum states, formalizing a fundamental tradeoff between discriminative power and statistical sample complexity (see Fig.~\ref{fig: sc and dp}). 
We focus on integral probability metrics (IPM) and investigate how their ability to distinguish ensembles is constrained by the number of samples required for estimation.
To enable applications in quantum generative learning, we also propose an experimentally feasible scheme based on SWAP test~\cite{barenco1997stabilization,buhrman2001quantum} for estimating these distances.

We introduce a family of (pseudo) distances, termed MMD-$k$, which generalize the classical MMD distance. While the $k=1$ case has previously been used in quantum generative models~\cite{QuDDPM_PhysRevLett.132.100602}, we show that the discriminative power of MMD-$k$ strictly increases with the moment order $k$ and saturates at $k\sim N$, with $N$ being the number of states in each ensemble. The increased discriminative power comes at a necessary statistical cost: estimating MMD-$k$ with constant $k$ to a fixed additive error using our SWAP-test-based protocol requires a number of samples that grows as $\sim N^{2-{2}/{k}}$ (see Eq. \eqref{eq: sample complexity of MMD-k, upper bound for most case}). At saturation of $k=N$ with full discriminative power, the required number of samples scales as $\sim N^3$. As a comparison, the quantum Wasserstein distance has full discriminative power and requires the number of samples $\sim N^2\log N$, slightly lower than that of the MMD-$k$. Our theoretical predictions are confirmed with numerical simulation of the quantum measurement protocol. 

Beyond theory, our results provide a principled guideline for designing loss functions in quantum machine learning: one should use the lowest-order MMD-$k$ that can discriminate the target ensemble, thereby balancing discriminative power against statistical efficiency. We demonstrate the practical relevance of this principle by training quantum denoising diffusion probabilistic model (QuDDPM)~\cite{QuDDPM_PhysRevLett.132.100602}, where higher-order MMD-$k$ succeed in regimes where lower-order ones fail.

The trade-off hierarchy between the discriminative power and learning sample complexity originates from the measurement uncertainty of quantum physics.  More broadly, our work suggests that hierarchies of loss functions may be unavoidable in learning settings where only partial or noisy access to data is available.

\subsection{Related works}
{\bf Classical-quantum state approach}.
In the quantum setting, distance metrics between individual quantum states are well established, including the quantum trace distance and Bures metrics (fidelity). In contrast, distance metrics between ensembles of quantum states have been studied far less, partly due to the perception that an ensemble can always be represented by a single quantum state. This is indeed true for {\it unlabeled} quantum states, where a single mixed state fully characterizes all observable properties of the ensemble. 

For a set of labeled data, $\{(p_x,\rho_x)\}$, one may naively represent it by a classical-quantum (CQ) state, $\sigma_{\rm CQ}=\sum p_x \ketbra{x}{x}\otimes\rho_x$, via appending a set of orthonormal label states $\ket{x}$. Such an approach is standard in quantum communication, where labels corresponds to classical messages and the encoding ensemble is uniquely specified by the CQ state. In this context, distance metrics between single quantum states---including quantum adaptation of MMD distance~\cite{huang2021quantum} and Wasserstein distance~\cite{chakrabarti2019quantum,de2021quantum,zhou2022quantum,kiani2022learning}---can be directly applied to CQ states to compare ensembles.

However, for general quantum state ensembles such as Haar-random ensembles, the CQ-state description fails to capture the intrinsic ensemble structure. 
In these cases, the labels serve merely as arbitrary indices. As a result, different labelings lead to different CQ states even though the underlying ensemble remains unchanged, rendering distances defined on CQ states unsuitable for comparing ensembles in a label-invariant manner.

{\bf QuDDPM}.
\cite{QuDDPM_PhysRevLett.132.100602} proposes the QuDDPM for the generative learning of an ensemble of quantum states. MMD-$1$ distance and Wasserstein distances are used to train the QuDDPM and convergence is demonstrated in numerical examples. QuDDPM has the advantage over previous models such as quantum generative adversarial networks (QuGANs)~\cite{lloyd2018quantum} in the training, thanks to a divide-and-conquer strategy in the quantum circuit architecture. However, the discriminative power and sample complexity of the proposed distance metrics are not analyzed.

{\bf Wasserstein distance for ensembles}.
\cite{tezuka2024generative} studies the generative learning of an ensemble of quantum states using Wasserstein distance. In particular, Wasserstein distances between ensembles of pure states have been identified based on both trace distance and a local version of the fidelity. Utilizing the proposed distance metrics, they train an implicit generative model $\ket{\phi(\bm z)}=U(\bm z, \bm \theta)\ket{0}^{\otimes n}$ that generates the ensemble of quantum states via varying the latent variables $\bm z$. Numerical studies on the convergence of the approximate error in the training. Nevertheless, the sample complexity of estimating the Wasserstein distance [Eq.~\eqref{eq: calculation of Wasserstein}] is not explored.

\section{Preliminary: Quantum Ensembles and Distance Metrics}\label{Sec: ensembles and measures}
\subsection{Quantum states}
\label{intro:quantum states}
We first introduce the basics of quantum physics necessary for the discussion. A pure quantum state of $n$ qubits can be described by a vector $\ket{\phi}$ in the $2^n$-dimensional complex Hilbert space $\scH$. The basis of the Hilbert space can be formed by a tensor product of single qubit states $\otimes_{k=1}^n\ket{s_k} $, where $s_k=0,1$ corresponds to the $\ket{0}$ and $\ket{1}$ state of the $k$-th qubit. We denote the complex conjugate of vector as $\bra{\phi}=(\ket{\phi})^\dagger$ and the inner product between two vectors $\ket{\phi}, \ket{\psi}$ as $\braket{\phi}{\psi}$. A pure state is normalized, $\braket{\phi}{\phi}=1$. Closeness between two quantum states can be captured by fidelity $F(\ket{\phi},\ket{\psi})=|\braket{\phi}{\psi}|^2\in[0,1]$.

To describe noise in quantum states, we introduce the positive semi-definite density operator, $\rho\in \scB(\scH)$, where $\scB(\scH)$ is the set of linear operators on a finite Hilbert space $\scH$. Normalization of probability requires $\Tr(\rho) = 1$. Generally, the purity $\Tr(\rho^2)\in[1/2^n, 1]$; $\Tr(\rho^2)=1$ if and only if $\rho=\ketbra{\phi}{\phi}$ is a pure state. A nice property of density operator is that $\sum_x p_x \rho_x$ directly represents the probability mixture of mixed states $\rho_x$ according to the probability $p_x$.

\subsection{Quantum Ensembles}
The focus of the study is on ensembles of quantum states, which we define below.

\begin{definition}[\textbf{Ensemble of quantum states}]
    A quantum ensemble $\scE$ is a set of $N$ pairs $\{(p_x,\rho_x)\}_{x=1}^N$ of probabilities $p_x$ and distinct density operators $\rho_x\in \scB(\scH)$. Each index $[x]$ is assigned to a state. When we sample a state from $\scE$, we obtain a physical copy of state $\rho_x$ and the corresponding index $x$ with probability $p_x$. \label{def: quantum ensemble}
\end{definition}
In Definition~\ref{def: quantum ensemble}, by sampling from $\scE$ to obtain $\rho_x$, the state is stored in a quantum memory as a physical copy---the classical description of $\rho_x$ is unknown. To extract information from the sample $\rho_x$, measurements or other quantum operations are performed. The problem becomes interesting as quantum measurements are destructive and unknown quantum states can not be cloned perfectly.

\subsection{Distinguishing Two Ensembles}
\begin{definition}[\textbf{Equivalence of ensembles}]
    Given two ensembles $\scE_1 = \{(p_i,\ket{\psi_i})\}$ and $\scE_2 = \{(q_j,\ket{\phi_j})\}$, $\scE_1 = \scE_2$ if and only if $p_i=q_j, \ket{\psi_i} = \ket{\phi_j}$ for some map $[i]\rightarrow[j]$, otherwise we say $\scE_1\ne \scE_2$. \label{def: equivalence}
\end{definition}

As a result, any distance between ensembles must be invariant under permutations of ensemble elements, a property not satisfied by distances defined on CQ states discussed in the introduction. Indeed, Definition \ref{def: equivalence} avoids the shortcoming of the CQ state approach, and is applicable to describe quantum state ensemble such as the Haar random ensemble and t-design ensemble~\cite{ambainis2007quantumtdesignstwiseindependence}.

To quantify the distance between two ensembles, we need to introduce a distance metric. For example, one can adopt the Wasserstein distance \cite{Wasserstein_PhysRevA.79.032336,Wasserstein_MAL-073,Wasserstein_villani2003topics}; let $p$ and $q$ be histograms representing $\scE_1$ and $\scE_2$ and assume they have $N_1$ and $N_2$ states respectively, define the cost matrix $C$ by $C_{ij} = 1-\abs{\braket{\psi_i}{\phi_j}}^2$, we have the Wasserstein distance
\begin{equation}
\centering
\begin{aligned}
W(\scE_1,\scE_2) &= \min_{P}\; \langle P, C\rangle \\
\text{s.t.}\qquad & P\,\mathbf{1}_{N_1} = p,\\
& P^{\top}\mathbf{1}_{N_2} = q,\\
& P \ge 0 .
\end{aligned}\label{eq: calculation of Wasserstein}
\end{equation}
We say Wasserstein distance has full discriminative power as a distance metric of two ensembles, since $W(\scE_1,\scE_2) = 0$ if and only if $\scE_1=\scE_2$. However, not all pseudo distance metrics have the full discriminative power. Consider the maximum mean discrepancy (MMD) \cite{MMD_JMLR:v13:gretton12a} used in \cite{QuDDPM_PhysRevLett.132.100602}, which is defined as
\begin{equation}
    \scD_{\text{MMD}}(\scE_1,\scE_2) = \bar{F}(\scE_1, \scE_1)+\bar{F}(\scE_2, \scE_2) - 2\bar{F}(\scE_1, \scE_2), \label{eq:MMD}
\end{equation}
where $\bar{F}(\scE_a, \scE_b) = \dsE_{\ket{\psi}\sim\scE_a, \ket{\phi}\sim\scE_b} [\abs{\braket{\psi}{\phi}}^2]$. It turns out that $\scD_{\text{MMD}}(\scE_1,\scE_2) =0$ as long as $\scE_1$ and $\scE_2$ have the same average state, $\sum p_i\ketbra{\psi_i}{\psi_i}=\sum q_j \ketbra{\phi_j}{\phi_j}$.

It is obvious that the above MMD has less discriminative power than Wasserstein distance. We now give a definition of the discriminative power.
\begin{definition}[\textbf{Discriminative power}]
    Given a distance metric $\scD$ and a pair of ensembles $(\scE_1, \scE_2)$, we say $\scD$ can \emph{discriminate} $(\scE_1, \scE_2)$ if $\scD(\scE_1,\scE_2)\ne0$ when $\scE_1 \ne \scE_2$. For two distance metrics $\scD_1$ and $\scD_2$, denote the set of pairs of ensembles that can be discriminated by $\scD_1$ ($\scD_2$) as $\scC_1$ ($\scC_2$), we say $\scD_1$ has more (or no less) discriminative power than $\scD_2$ if $\scC_2 \subset\scC_1$ (or $\scC_2 \subseteq\scC_1$); if $\scD(\scE_1,\scE_2) = 0 \Longleftrightarrow \scE_1=\scE_2$, we say $\scD$ has the full discriminative power. \label{def: discriminative power}
\end{definition}
Notice that for two arbitrary pseudo distance metrics $\scD_1$ and $\scD_2$, there may exists $(\scE_1,\scE_2)$ and $(\scE_1',\scE_2')$, such that $\scD_1$ can \emph{discriminate} $(\scE_1,\scE_2)$, but can not \emph{discriminate} $(\scE_1',\scE_2')$; and $\scD_2$ can not \emph{discriminate} $(\scE_1,\scE_2)$, but can \emph{discriminate} $(\scE_1',\scE_2')$. In this case, we can not compare their discriminative power. In this paper, we use $\scP(\scD)$ to represent the discriminative power of $\scD$, and $\scP(\scD)=\infty$ it $\scD$ has the discriminative power. Generally, unless $\scP(\scD)=\infty$, we can not assign an absolute value to $\scP(\scD)$. Here we define a special kind of discriminative power based on moments.
\begin{definition}[\textbf{Discriminative power based on moments}]
    Suppose a metric $\scD$ can discriminate all pairs of ensembles $(\scE_1, \scE_2)$ with $\dsE_{\rho\sim\scE_1}[\rho^{\otimes k}] \ne \dsE_{\sigma\sim\scE_2}[\sigma^{\otimes k}]$, but can not discriminate any pair with $\dsE_{\rho\sim\scE_1}[\rho^{\otimes k}] = \dsE_{\sigma\sim\scE_2}[\sigma^{\otimes k}]$, we say the discriminative power based on moments of $\scD$ is $k$, denoted by $\scP_{\text{mom}}(\scD)=k$.\label{def: discriminative power moments}
\end{definition}

Intuitively, $\scP_{\text{mom}}(\scD)=k$ means that the distance $\scD$ can detect discrepancies between two ensembles only through differences in their $k$-th ensemble moments, while all lower-order moments coincide.

\section{MMD-$k$: distance metrics with a hierarchy of discriminative power}
In this section, we give the definition of a family of pseudo distance metrics MMD-$k$ and its properties (with proofs in Appendix~\ref{appsub: proofs of MMD-k's properties}).

\begin{definition}[\textbf{MMD-$k$ pseudo distance metrics}]
    The MMD-$k$ of two quantum ensembles $\scE_1$ and $\scE_2$ is
    \begin{equation}
    \scD^{(k)}(\scE_1,\scE_2) = \bar{F}^{(k)}(\scE_1, \scE_1)+\bar{F}^{(k)}(\scE_2, \scE_2) - 2\bar{F}^{(k)}(\scE_1, \scE_2), \label{eq: k-MMD}
\end{equation}
where $\bar{F}^{(k)}(\scE_a, \scE_b) = \dsE_{\ket{\psi}\sim\scE_a, \ket{\phi}\sim\scE_b} [\abs{\braket{\psi}{\phi}}^{2k}]$ with a tunable parameter $k$.\label{def:k-MMD}
\end{definition}
It is clear that the MMD used in \cite{QuDDPM_PhysRevLett.132.100602} is MMD-1. And by Definition \ref{def:k-MMD}, we have
\begin{proposition}
\label{prop:DK}
    \begin{equation}
        \scD^{(k)}(\scE_1,\scE_2) = \Tr[(\dsE_{\rho\sim\scE_1}[\rho^{\otimes k}]- \dsE_{\sigma\sim\scE_2}[\sigma^{\otimes k}])^2],
    \end{equation} where $\dsE_{\rho\sim\scE_1}[\rho^{\otimes k}]$ and $\dsE_{\sigma\sim\scE_2}[\sigma^{\otimes k}]$ are the $k$-th moment operator of two ensembles, that is, the average of the $k$-th fold of the density operators. 
\end{proposition}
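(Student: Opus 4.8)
The plan is to reduce the claim to one algebraic identity relating pure-state overlaps to traces of tensor powers, and then to push all expectations through by linearity. First I would record that for a pure state $\ket{\psi}$ with density operator $\rho=\ketbra{\psi}{\psi}$ and likewise $\sigma=\ketbra{\phi}{\phi}$, one has $\abs{\braket{\psi}{\phi}}^2=\braket{\psi}{\phi}\braket{\phi}{\psi}=\Tr[\rho\sigma]$. Raising this scalar to the $k$-th power, using multiplicativity of the trace over tensor products, $\Tr[A\otimes B]=\Tr[A]\,\Tr[B]$, together with the mixed-product identity $\rho^{\otimes k}\sigma^{\otimes k}=(\rho\sigma)^{\otimes k}$, yields the key identity
\begin{equation}
\abs{\braket{\psi}{\phi}}^{2k}=\bigl(\Tr[\rho\sigma]\bigr)^{k}=\Tr\!\bigl[\rho^{\otimes k}\sigma^{\otimes k}\bigr].
\end{equation}

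Next I would insert this identity into the definition of $\bar F^{(k)}$ from Definition~\ref{def:k-MMD}. Since $\ket{\psi}\sim\scE_a$ and $\ket{\phi}\sim\scE_b$ are drawn independently, the expectation over the pair factorizes, and linearity of the trace allows both expectations to be moved inside, giving
\begin{equation}
\bar F^{(k)}(\scE_a,\scE_b)=\Tr\!\Bigl[\,\dsE_{\rho\sim\scE_a}[\rho^{\otimes k}]\;\dsE_{\sigma\sim\scE_b}[\sigma^{\otimes k}]\,\Bigr]=\Tr\!\bigl[M_a^{(k)}M_b^{(k)}\bigr],
\end{equation}
where I abbreviate the $k$-th moment operator as $M_a^{(k)}:=\dsE_{\rho\sim\scE_a}[\rho^{\otimes k}]$.

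Finally I would substitute this expression for each of the three terms $\bar F^{(k)}(\scE_1,\scE_1)$, $\bar F^{(k)}(\scE_2,\scE_2)$, $\bar F^{(k)}(\scE_1,\scE_2)$ into Eq.~\eqref{eq: k-MMD}. Each $M_a^{(k)}$ is Hermitian, being a convex combination of orthogonal projectors, so $\Tr[M_1^{(k)}M_2^{(k)}]=\Tr[M_2^{(k)}M_1^{(k)}]$, and the expansion collapses to $\Tr[(M_1^{(k)})^2]+\Tr[(M_2^{(k)})^2]-2\Tr[M_1^{(k)}M_2^{(k)}]=\Tr[(M_1^{(k)}-M_2^{(k)})^2]$, which is exactly the asserted formula (and incidentally makes the non-negativity of $\scD^{(k)}$ manifest as a Hilbert--Schmidt norm squared). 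There is no genuine obstacle here; the only points deserving a line of care are the factorization of the joint expectation, valid precisely because the two states are sampled from the two ensembles independently, and the slot-by-slot identity $\rho^{\otimes k}\sigma^{\otimes k}=(\rho\sigma)^{\otimes k}$.
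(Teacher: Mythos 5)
Your proposal is correct and follows essentially the same route as the paper's proof: reduce $\abs{\braket{\psi}{\phi}}^{2k}$ to $\Tr[\rho^{\otimes k}\sigma^{\otimes k}]$, factor the independent expectations through the trace to get $\bar F^{(k)}(\scE_a,\scE_b)=\Tr[M_a^{(k)}M_b^{(k)}]$, and expand the square. Your explicit remarks on independence of the two draws and on Hermiticity of the moment operators are points the paper leaves implicit, but the argument is the same.
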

Proposition~\ref{prop:DK} indicates that the MMD-$k$ distance can be interpreted as the Hilbert-Schmidt distance between the two moment operators of the ensembles. Related to our definition of MMD-$k$, the $k$-th-order frame potential~\cite{roberts2017chaos} widely adopted in characterizing an ensemble of random quantum states is the trace of the moment operator.
\begin{theorem}[\textbf{The hierarchy of discriminative power of MMD-$k$}]
    $D^{(k)}(\scE_1,\scE_2)=0$ if and only if $\dsE_{\rho\sim\scE_1}[\rho^{\otimes k}] = \dsE_{\sigma\sim\scE_2}[\sigma^{\otimes k}]$; if $D^{(k)}(\scE_1,\scE_2)=0$, $D^{(k')}(\scE_1,\scE_2)=0$ and $\dsE_{\rho\sim\scE_1}[\rho^{\otimes k'}] = \dsE_{\sigma\sim\scE_2}[\sigma^{\otimes k'}]$ for $k \ge k'$. Then $\scP(\scD^{(k)})\ge\scP(\scD^{(k')})$ for $k\ge k'$. \label{thm: hierarachy of MMD-k}
\end{theorem}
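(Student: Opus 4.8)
The plan is to chain three elementary observations, all resting on Proposition~\ref{prop:DK}, which rewrites $\scD^{(k)}$ as a Hilbert--Schmidt distance between moment operators. First I would settle the iff-characterization: writing $M = \dsE_{\rho\sim\scE_1}[\rho^{\otimes k}] - \dsE_{\sigma\sim\scE_2}[\sigma^{\otimes k}]$, Proposition~\ref{prop:DK} gives $\scD^{(k)}(\scE_1,\scE_2) = \Tr[M^2]$. Each $\rho^{\otimes k}$ is positive semidefinite, hence Hermitian, so both moment operators and their difference $M$ are Hermitian; for such $M$, $\Tr[M^2] = \sum_i \lambda_i(M)^2 = \norm{M}_{\mathrm{HS}}^2 \ge 0$, vanishing iff every eigenvalue vanishes, i.e. iff $M = 0$. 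This yields $\scD^{(k)}(\scE_1,\scE_2) = 0 \iff \dsE_{\rho\sim\scE_1}[\rho^{\otimes k}] = \dsE_{\sigma\sim\scE_2}[\sigma^{\otimes k}]$.

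Second I would prove that equality of the $k$-th moment operators forces equality of all lower ones. The tool is the partial trace over the last $k - k'$ tensor factors: since the partial trace is linear and commutes with the (finite) ensemble average, applying it to $\dsE_{\rho\sim\scE_1}[\rho^{\otimes k}] = \dsE_{\sigma\sim\scE_2}[\sigma^{\otimes k}]$ and using $\Tr_{k'+1,\dots,k}(\rho^{\otimes k}) = \rho^{\otimes k'}(\Tr\rho)^{k-k'} = \rho^{\otimes k'}$ (by normalization $\Tr\rho = 1$) gives $\dsE_{\rho\sim\scE_1}[\rho^{\otimes k'}] = \dsE_{\sigma\sim\scE_2}[\sigma^{\otimes k'}]$ for every $k' \le k$. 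Combined with the first step, $\scD^{(k)}(\scE_1,\scE_2) = 0$ then implies $\scD^{(k')}(\scE_1,\scE_2) = 0$, establishing the middle clause of the theorem.

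Third I would convert this into the claimed ordering of discriminative power. Let $\scC_k$ denote the set of pairs discriminated by $\scD^{(k)}$ in the sense of Definition~\ref{def: discriminative power}. If $(\scE_1,\scE_2) \in \scC_{k'}$ with $k' \le k$, then $\scE_1 \ne \scE_2$ and, by step one, $\dsE_{\rho\sim\scE_1}[\rho^{\otimes k'}] \ne \dsE_{\sigma\sim\scE_2}[\sigma^{\otimes k'}]$; the contrapositive of step two then rules out equality of the $k$-th moment operators, so step one again gives $\scD^{(k)}(\scE_1,\scE_2) \ne 0$, i.e. $(\scE_1,\scE_2) \in \scC_k$. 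Hence $\scC_{k'} \subseteq \scC_k$, which is precisely $\scP(\scD^{(k)}) \ge \scP(\scD^{(k')})$ by Definition~\ref{def: discriminative power}.

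The only genuine subtlety I anticipate is the partial-trace identity in step two: one should verify explicitly that $\dsE$ and the partial trace commute (immediate from linearity and finite dimensionality) and that the traced-out factors contribute exactly $(\Tr\rho)^{k-k'} = 1$ because the states are normalized. Everything else is bookkeeping; the one other point worth flagging is the Hermiticity of $M$, which is what licenses the implication $\Tr[M^2] = 0 \Rightarrow M = 0$ and would fail for a general non-normal operator.
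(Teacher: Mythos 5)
Your proposal is correct and follows essentially the same route as the paper's proof: the iff-characterization via the spectral decomposition of the Hermitian difference of moment operators (from Proposition~\ref{prop:DK}), the partial-trace argument to descend from the $k$-th to the $k'$-th moment, and the set-containment reading of Definition~\ref{def: discriminative power}. Your extra care about the normalization factor $(\Tr\rho)^{k-k'}=1$ and the Hermiticity of the difference is sound but does not change the argument.
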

Theorem \ref{thm: hierarachy of MMD-k} shows that $\scP(D^{(k)})$ increases as $k$ increases, and we care about the threshold of $k$ that gives the full discriminative power.
\begin{theorem}[\textbf{The threshold for MMD-$k$ to reach full discriminative power}]
    Suppose the ensembles we care about each have at most $N$ states with finite dimension $d\ge2$, then $\scP(\scD^{(k)})=\infty$ when $k\ge N$. Using the notations we defined, $\scP_{\text{mom}}(D^{(k)})\ge N \Longrightarrow\scP(D^{(k)})=\infty$.\label{thm: k to reach full disriminative power}
\end{theorem}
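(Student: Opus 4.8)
The statement to prove is Theorem \ref{thm: k to reach full disriminative power}: if ensembles have at most $N$ states in dimension $d \geq 2$, then $\scP(\scD^{(k)}) = \infty$ for $k \geq N$. By Theorem \ref{thm: hierarachy of MMD-k}, $\scD^{(k)}$ has full discriminative power precisely when the equality of $k$-th moment operators $\dsE_{\rho\sim\scE_1}[\rho^{\otimes k}] = \dsE_{\sigma\sim\scE_2}[\sigma^{\otimes k}]$ forces $\scE_1 = \scE_2$. So the whole content reduces to: \emph{for pure-state ensembles of size $\leq N$, equality of the $k$-th moment operators with $k \geq N$ implies the ensembles are identical (as multisets of $(p_x, \ket{\psi_x})$, modulo relabeling).}

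The plan is to reduce this to a statement about linear independence of tensor powers and then invoke a Vandermonde/moment-matching argument. First I would recall that for a pure state $\ket{\psi}$, $(\ketbra{\psi}{\psi})^{\otimes k} = \ketbra{\psi^{\otimes k}}{\psi^{\otimes k}}$, so the $k$-th moment operator of $\scE_1 = \{(p_i, \ket{\psi_i})\}$ is $M_1^{(k)} = \sum_i p_i \ketbra{\psi_i^{\otimes k}}{\psi_i^{\otimes k}}$, supported on the symmetric subspace $\mathrm{Sym}^k(\scH)$. The key linear-algebra fact I would use: for any finite collection of $m$ distinct (up to global phase) pure states $\ket{\psi_1}, \dots, \ket{\psi_m}$ in a Hilbert space of dimension $d \geq 2$, the vectors $\ket{\psi_1^{\otimes k}}, \dots, \ket{\psi_m^{\otimes k}}$ are linearly independent once $k \geq m - 1$ — this follows from a Vandermonde-type determinant argument, since distinct rank-one projectors give distinct points in projective space and the Veronese embedding $\ket{\psi} \mapsto \ket{\psi^{\otimes k}}$ separates them with independent images for $k$ large enough relative to the number of points (concretely, $k \geq m-1$ suffices, and certainly $k \geq N$ works when the \emph{combined} set of states from $\scE_1, \scE_2$ has size $\leq 2N$; I should be careful here whether the bound $k\ge N$ in the theorem is tight, but $k \ge N$ suffices because we can work one ensemble at a time using the decomposition trick below).

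The main argument would then go: suppose $M_1^{(k)} = M_2^{(k)} =: M$ with $k \geq N$. Write $\scE_1 = \{(p_i, \ket{\psi_i})\}_{i=1}^{N_1}$ and $\scE_2 = \{(q_j, \ket{\phi_j})\}_{j=1}^{N_2}$, $N_1, N_2 \leq N$. Since $M$ is a fixed positive operator, its range $\mathcal{R} := \mathrm{range}(M)$ is determined; both $\{\ket{\psi_i^{\otimes k}}\}$ and $\{\ket{\phi_j^{\otimes k}}\}$ span $\mathcal{R}$ (as $p_i, q_j > 0$), so $\dim \mathcal{R} = N_1 = N_2 =: m$ using the linear-independence fact (each family has $\le N \le k+1$ vectors, hence is linearly independent, hence a basis of its span). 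Now I want to conclude the \emph{pointwise} matching. The cleanest route: consider the eigendecomposition is not canonical, so instead argue directly that the set of rays $\{[\ket{\psi_i^{\otimes k}}]\}$ is intrinsically recoverable from $M$. One way: the rank-one projectors $\ketbra{\psi_i^{\otimes k}}{\psi_i^{\otimes k}}$ lying in the support of $M$ and of the form $(\ketbra{\psi}{\psi})^{\otimes k}$ are exactly those extreme rays; more robustly, I would use the fact that a decomposition of $M$ into a positive combination of Veronese-type rank-one projectors $\sum_i p_i \ketbra{\psi_i^{\otimes k}}{\psi_i^{\otimes k}}$ with $\{\ket{\psi_i^{\otimes k}}\}$ linearly independent is \emph{unique}: given another such decomposition $\sum_j q_j \ketbra{\phi_j^{\otimes k}}{\phi_j^{\otimes k}}$, both sets of vectors are bases of $\mathcal{R}$, and writing one basis in terms of the other and comparing with the operator equation forces (via a Gram-matrix / change-of-basis computation) that the bases agree up to phase and permutation and that $p_i = q_{\pi(i)}$. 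Finally, $\ket{\psi_i^{\otimes k}} \propto \ket{\phi_j^{\otimes k}}$ with $k \geq 1$ and both normalized implies $\ket{\psi_i} = \ket{\phi_j}$ up to a $k$-th root of unity times phase, i.e.\ the same ray, hence $\scE_1 = \scE_2$ in the sense of Definition \ref{def: equivalence}. The last sentence of the theorem, $\scP_{\text{mom}}(\scD^{(k)}) \geq N \Rightarrow \scP(\scD^{(k)}) = \infty$, is then immediate from Definition \ref{def: discriminative power moments} combined with what was just shown.

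The main obstacle I anticipate is establishing the uniqueness of the decomposition rigorously — i.e.\ promoting ``the two bases span the same space'' to ``the two ensembles coincide element-by-element including the weights.'' The linear-independence of $k$-th tensor powers for $k \geq N-1$ is the enabling lemma (a Vandermonde argument, choosing a generic line in $\scH$ through which to evaluate, reduces it to the classical fact that $k+1$ distinct nodes give an invertible Vandermonde matrix), but the passage to uniqueness requires care: I would probably argue that if $A$ and $B$ are the ``synthesis'' matrices whose columns are $\sqrt{p_i}\ket{\psi_i^{\otimes k}}$ and $\sqrt{q_j}\ket{\phi_j^{\otimes k}}$ respectively, then $AA^\dagger = BB^\dagger = M$ with $A, B$ of full column rank $m$ implies $B = AO$ for some $m \times m$ unitary $O$; then I must exploit the special product structure of the columns (each is a pure tensor power, an extreme point of the symmetric states) to show $O$ is a permutation times phases — this is where the geometry of the Veronese variety, specifically that its only linear combinations that are again points on it are trivial, does the work. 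I would isolate this as a self-contained lemma and keep the rest of the proof short.
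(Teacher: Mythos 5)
Your overall strategy---reduce to ``equality of $k$-th moment operators forces the ensembles to coincide,'' then exploit linear independence of the Veronese images $\ket{\psi_i}^{\otimes k}$---is a viable \emph{primal} counterpart to the paper's argument, which works dually: the paper constructs, for each would-be extra state, an explicit homogeneous polynomial $f(\ket{\psi})=\prod_j\braket{a_j}{\psi}$ (a product of linear forms obtained from a hyperplane-separation lemma) and uses the identity $\dsE_{\scE}[|f|^2]=\Tr(\ketbra{v_f}{v_f}\,M_t(\scE))$ to show the two moment operators must differ; a second family of Lagrange-type polynomials then pins down the weights. Your ``enabling lemma'' (linear independence of $k$-th tensor powers of distinct states) is in fact \emph{equivalent} to the existence of those separating polynomials, so the two proofs are close cousins; the paper's dual route has the advantage that the separating functional is written down explicitly, which makes the lemma's proof for general $d\ge 2$ a one-liner rather than a ``generic line plus Vandermonde'' reduction that you would still need to carry out carefully.

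The genuine gap is in your uniqueness step, and it is exactly where you flag discomfort. Linear independence of the $m$ vectors $\ket{\psi_i}^{\otimes k}$ for $k\ge m-1$ does \emph{not} by itself imply uniqueness of the decomposition $M=\sum_i p_i\ketbra{\psi_i^{\otimes k}}{\psi_i^{\otimes k}}$: for $k=1$, $m=2$ the hypothesis $k\ge m-1$ holds, yet a rank-two density matrix has infinitely many pure-state decompositions, so any argument that only invokes independence of the $m$ given points (e.g.\ your $AA^\dagger=BB^\dagger\Rightarrow B=AO$ step followed by ``the Veronese geometry forces $O$ to be a permutation'') cannot close as stated. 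The fix is to strengthen the lemma by one point: since $k\ge N\ge N_1$, any $N_1+1\le k+1$ \emph{distinct} pure states have linearly independent $k$-th tensor powers. Then if some $\ket{\phi_j}$ of $\scE_2$ were distinct from every $\ket{\psi_i}$, the vector $\ket{\phi_j}^{\otimes k}$ would lie in $\mathrm{range}(M)=\mathrm{span}\{\ket{\psi_i}^{\otimes k}\}$, exhibiting a linear dependence among $N_1+1$ distinct Veronese points---a contradiction. This dispenses with the unitary $O$ and the Veronese-variety lemma entirely; the weights then follow because the rank-one operators $\ketbra{\psi_i^{\otimes k}}{\psi_i^{\otimes k}}$ over a linearly independent family are themselves linearly independent. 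Note also that with this repair you are implicitly re-deriving the paper's hyperplane-separation construction (it is the standard way to prove the $(k+1)$-point independence for general $d$), and that neither your proposal nor the repaired version addresses tightness of the threshold, for which the paper supplies an explicit hard pair of ensembles agreeing on all moments below order $N$; that part is not required by the literal statement, but it is what justifies calling $k=N$ a ``threshold.''
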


It is worthy of mentioning that in classical statistics, an ensemble of $N$ elements with equal probability also requires up to the $N$-th moment to fully specify. However, in the quantum case, knowing the $k$-th moment operator automatically gives all lower order moment operators by tracing out, while this is not the case for the classical case. 

In later sections, we will show there is also a hierarchy of sample complexity for MMD-$k$.

\section{Sampling Setup}
In this section, we present the specific process to estimate the values of distance metrics. While in the classical setting the evaluation of distance can be entirely completed by classical computation, in the quantum setting, quantum measurements are necessary to estimate such distance metrics. Moreover, due to the random and destructive nature of quantum measurements, multiple copies of quantum samples are generally needed, even just to estimate property of a single quantum state $\ket{\phi}$. For example, it is well known that to obtain the full classical description of a $n$-qubit quantum state, i.e. quantum state tomography, an exponential $\mathcal{O}(2^n)$ number of copies of the quantum state are needed. 

In this work, we focus on the direct estimation of the distance metrics between ensembles, without necessarily needing the full quantum state tomography. We adopt the standard approach of SWAP test, as we explain below.

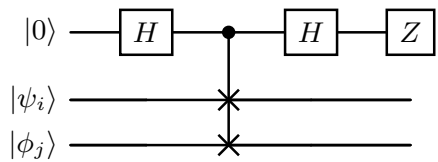
\begin{figure}[t]
\resizebox{0.7\columnwidth}{!}{ %
\begin{quantikz}[row sep=0.5cm, column sep=0.55cm]
\lstick{$\ket{0}$}    & \gate{H} & \ctrl{1}  & \gate{H} & \gate{Z}  \\
\lstick{$\ket{\psi_i}$} & \qw      & \swap{1}  & \qw      & \qw      \\
\lstick{$\ket{\phi_j}$} & \qw      & \swap{-1} & \qw      & \qw   
\end{quantikz}%
}
\caption{The circuit of SWAP test. $H$ is the Hadamard gate, $Z$ represents Pauli-Z measurements in the computation basis, the connected dot and crosses represent a controlled-NOT gate, with the dot indicating the control qubit. See Appendix~\ref{app: quantum circuit} for the definition of the above gates.}
\label{fig:swap-test}
\end{figure}
Given two quantum ensembles $\scE_1$ and $\scE_2$ with $N$ states each, suppose each time we can randomly obtain one state $\ket{\psi_i}\sim\scE_1$ and the other state $\ket{\phi_j}\sim\scE_2$, while knowing the index $i$ and $j$ (we use $\ell$ to denote $(i,j)$ to lighten the notation). To access the information of their fidelity $X_{\ell}=\abs{\braket{\psi_i}{\phi_j}}^2$, we can use SWAP test (see Fig.~\ref{fig:swap-test}). The probability of obtaining $0$ in the Pauli-Z measurement on the ancilla qubit of the output state is
\begin{equation}
    \Pr[\text{ancilla qubit in} ~\ket{0}] = \frac{1}{2}(1+\abs{\braket{\psi_i}{\phi_j}}^2) =\frac{1}{2}(1+X_{\ell}).
\end{equation}

We denote the SWAP-test outcome by a Bernoulli random variable $Y_\ell\in\{0,1\}$,
where $Y_\ell=1$ if the ancilla is measured in $\ket{0}$ and $Y_\ell=0$ otherwise.
Define $R_\ell:=2Y_\ell-1\in\{-1,+1\}$, so that $R_\ell=+1$ when obtaining $0$
and $R_\ell=-1$ when obtaining $1$ on the ancilla qubit. Then
$\mathbb E[R_\ell\mid \ell]=|\langle\psi_i|\phi_j\rangle|^2=X_\ell$. We write the data $(R_{\ell},\ell)$ as one sample, and the two states collapse after SWAP test.

Suppose we repeat the sampling process $M$ times and then we have $M$ samples $\{(R_{\ell_t},\ell_t)_{t=1}^M\}$, which can used to estimate some distance metrics of the two ensembles. However, due to the destructive and probabilistic nature of quantum measurements, we need a certain number of samples $s$ on the same label $\ell$ to estimate the value of $X_{\ell}$ accurately. However, for each sample, we can not deterministically control which label to be sampled, so the total number of samples to reach $s$ on the same label is roughly $N^2s$, which is expensive since $N$ is usually very large. For example, to estimate Wasserstein distance, we need to estimate $X_{\ell}$ for every $\ell \in \{1,...,N^2\}$ accurately, so the number of samples needed is very large. However, for MMD-$1$, since we have

\begin{equation}
    \bar{F}^{(1)}(\scE_a, \scE_b) = \dsE_{\ell}[X_{\ell}] = \dsE_{\ell}[\dsE[R_{\ell_t} \mid \ell_t=\ell]] =\dsE_{\ell_t}[R_{\ell_t}], \label{eq: estimation of MMD-1}
\end{equation}
where $\dsE[R_{\ell_t} \mid \ell_t=\ell]$ is the expectation of $R_{\ell_t}$ only with $\ell_t=\ell$, gives $X_\ell$, and $\dsE_{\ell_t}[R_{\ell_t}]$ is the expectation over all $R_{\ell_t}$ obtained. Eq. (\ref{eq: estimation of MMD-1}) means that we can ignore the information of labels when estimating MMD-$1$, avoiding the overhead of sampling. As discussed in Section \ref{Sec: ensembles and measures}, Wasserstein has the full discriminative power and MMD-$1$ only has $\scP_M(\scD^{(1)})=1$. Hence, there may exist a tradeoff between discriminative power and sample complexity for distance metrics, which will be discussed in later sections.

\section{Sample Complexity of Estimating Distance Metrics}\label{sec: sample complexity}
We define sample complexity based on SWAP test as follows. 
\begin{definition}[\textbf{Sample complexity}]
    Given a distance metric $\scD$ and two quantum ensembles $\scE_1$ and $\scE_2$ each consisting of $N$ states, suppose that we can randomly assess two states from the two ensembles (either from the same or the different ensembles) and conduct a SWAP test on the two states, obtaining one sample $(R_{\ell},\ell)$, where $\ell$ is the label representing $(i,j)$, the indices of two states. Then the sample complexity $M(N,\epsilon,\delta)$ is the number of samples needed to estimate $\scD(\scE_1, \scE_2)$ for some estimator using $\{(R_{\ell_t},\ell_t)_{t=1}^M\}$, up to an additive error $\epsilon$ and failure probability $\delta$.\label{def: sample complexity}
\end{definition}
Notice that by Definition \ref{def: sample complexity}, we should refer to the specific estimator used when we talk about sample complexity of estimating the value of a distance metric. For the simplicity of demonstration, we only consider uniform pure ensembles in this section, while the extension to general nonuniform pure ensembles can be found in Appendix~\ref{app: general case of sample complexity}.

\subsection{Estimating process and sample complexity of Wasserstein distance}\label{secsub: sc of Wass}
Here, we examine the sample complexity of Wasserstein distance. To estimate the Wasserstein distance $W(\scE_1,\scE_2)$ of two ensembles $\scE_1$ and $\scE_2$, we conduct experiment to obtain $M$ samples $\{(R_{\ell_t},\ell_t)_{t=1}^M\}$. Suppose $M$ is large enough so that every label has samples. Then we can estimate the fidelity of all pairs of states using $\widehat{X_{\ell}}=\sum_{\ell_t=\ell}R_{\ell_t}$ and then obtain the estimated cost matrix $\widehat{C}$, which works as input parameter in the optimization problem Eq. (\ref{eq: calculation of Wasserstein}) to obtain the estimated value of Wasserstein distance $\widehat{W}$.
\begin{theorem}[\textbf{Sample complexity of Wasserstein distance}]
The sample complexity to estimate the Wasserstein distance between two $N$-state uniform pure quantum ensembles, using the above estimation process is
    \begin{equation}
        M=\scO\Big(\frac{N^2}{\epsilon^2}\log\frac{N^2}{\delta}\Big), \label{eq: sample compelxity wasserstein}
    \end{equation}
    where $\epsilon$ is the additive error between estimated value and true value and $\delta$ is the failure probability.\label{thm: sample complexity Wasserstein}
\end{theorem}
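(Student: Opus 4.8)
The plan is to control the additive error $|\widehat{W} - W|$ by first bounding the entrywise error in the estimated cost matrix $\widehat{C}$, and then using the fact that a linear program of the form \eqref{eq: calculation of Wasserstein} with a fixed transportation polytope is $1$-Lipschitz in the cost matrix with respect to the $\ell^\infty$ norm. Concretely, for any feasible $P$ one has $\langle P, \widehat{C}\rangle - \langle P, C\rangle = \langle P, \widehat{C}-C\rangle$, and since $P\geq 0$ with row/column sums summing to $1$ (for uniform ensembles $\mathbf 1^\top p = 1$), we get $|\langle P, \widehat C - C\rangle| \leq \|\widehat C - C\|_{\max}$. Taking minima over $P$ on both sides yields $|\widehat W - W| \leq \|\widehat C - C\|_{\max} = \max_{\ell} |\widehat{X_\ell} - X_\ell|$. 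So it suffices to guarantee that every one of the $N^2$ fidelity estimates is within $\epsilon$ of its true value with total failure probability at most $\delta$.

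Next I would analyze the per-label estimator $\widehat{X_\ell} = \frac{1}{s_\ell}\sum_{\ell_t = \ell} R_{\ell_t}$, where $s_\ell$ is the number of samples landing on label $\ell$ and the $R_{\ell_t}\in\{-1,+1\}$ are i.i.d.\ with mean $X_\ell$ conditioned on $\ell$. By Hoeffding's inequality, $\Pr[|\widehat{X_\ell} - X_\ell| > \epsilon \mid s_\ell] \leq 2\exp(-s_\ell \epsilon^2/2)$. A union bound over the $N^2$ labels shows that if every $s_\ell \geq \frac{2}{\epsilon^2}\log\frac{2N^2}{\delta}$, then all estimates are simultaneously $\epsilon$-accurate with probability at least $1-\delta$. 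Since each sample lands on a uniformly random label in $\{1,\dots,N^2\}$, the number of draws $M$ needed so that $\min_\ell s_\ell \geq s^\star := \lceil \frac{2}{\epsilon^2}\log\frac{2N^2}{\delta}\rceil$ is a coupon-collector-type quantity: standard concentration (e.g.\ a Chernoff bound on each $s_\ell \sim \mathrm{Binomial}(M, 1/N^2)$ together with a union bound) gives that $M = \Theta(N^2 s^\star) = \scO\!\big(\frac{N^2}{\epsilon^2}\log\frac{N^2}{\delta}\big)$ suffices, absorbing the $\log 2$ and the coupon-collector $\log N^2$ correction into the stated bound. One should split the failure budget $\delta$ into a piece for "enough samples per label" and a piece for "Hoeffding accuracy," each $\delta/2$, which only changes constants.

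The main obstacle — really the only nontrivial point — is the Lipschitz step: justifying that the optimal value of the linear program is stable under perturbations of $C$ with the correct (dimension-free) constant, rather than a constant that scales with $N$. The clean way is the argument above using $\langle P, \widehat C - C\rangle$ for the \emph{same} near-optimal $P$ on both sides (the optimizer of one LP is feasible, though not optimal, for the other), which gives the two-sided bound $|\widehat W - W| \leq \|\widehat C - C\|_{\max}$ without any $N$ dependence because the couplings are normalized probability matrices. A secondary, more routine concern is making the coupon-collector tail rigorous — ensuring $M$ is large enough that with high probability no label is starved — but this is handled by a Binomial Chernoff bound showing $s_\ell \geq M/(2N^2)$ for all $\ell$ once $M \gtrsim N^2 \log(N^2/\delta)$, after which choosing $M/(2N^2) \geq s^\star$ closes the argument.
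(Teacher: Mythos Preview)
Your proposal is correct and follows essentially the same approach as the paper: establish the $1$-Lipschitz stability of the LP value in $\|\widehat C - C\|_{\max}$ via the feasibility-swap argument (this is the paper's Lemma~\ref{lm: maxCij}), then apply Hoeffding per label, a union bound over the $N^2$ labels to get the required $T_{\min}$, and finally a binomial/Chernoff coupon-collector bound to translate $T_{\min}$ into $M$. The only cosmetic difference is that the paper phrases the occupancy step by invoking its general result Eq.~\eqref{eq: M needed to achieve Tmin}, while you sketch the Chernoff-plus-union-bound directly; the content is the same.
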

The proof is in Appendix~\ref{appsub: proof of Wasserstein upper bound}. An implicit lower bound of $M$ is the number of samples needed to make all labels $\ell$ have at least one sample to give an estimated value of $X_\ell$, given by $M=\Omega(N^2\log N)$ (see Appendix~\ref{appsub: minM to reach T}). Then we have $M=\Theta(N^2 \log N)$.
\subsection{Estimating process and sample complexity of MMD-$k$}
To give an estimated value $\widehat{\scD^{(k)}(\scE_1,\scE_2)}$ of $\scD^{(k)}(\scE_1,\scE_2)$, the key is to obtain $\widehat{\bar{F}^{(k)}(\scE_a, \scE_b)}$ for $(a,b)=(1,1),(1,2),(2,2)$. Suppose for each $(a,b)$ we have $M/3$ samples. To estimate $\bar{F}^{(k)}=\dsE[X^k_{\ell}]$, we first utilize the kernel from Unbiased-statistics (U-stat) \cite{U-stat_10.1214/aoms/1177730196} to estimate $X_{\ell}^k$:
\begin{equation}
    Z_{\ell} = \frac{1}{\binom{T_{\ell}}{k}}\sum_{1\le r_1 <r_2<...<r_k \le T_{\ell}} \prod_{s=1}^k R_{\ell_{(r_s)}}, \label{eq: U-stat for MMD-k}
\end{equation}
where $T_{\ell}$ denotes the number of samples on the same label $\ell$, and $\ell_{(r_s)}$ is rewritten from $\ell_t$ to index the data of the same label. Notice that Eq.(\ref{eq: U-stat for MMD-k}) can be calculated only for $T_{\ell} \ge k$. We have $\dsE[Z_{\ell}]=X_{\ell}^k$, the estimator for $\bar{F}^{(k)}$ we use is:
\begin{equation}
    \widehat{\bar{F}^{(k)}}=\frac{1}{m}\sum_{\ell:T_{\ell}\ge k}Z_{\ell},\label{eq: estimator of MMD-k}
\end{equation}
where $m=\sum_{\ell=1}^{N^2} \mathbf{1}\{T_{\ell} \ge k\}$, representing the number of labels with no less than $k$ samples. Below, we analyze the sample complexity and optimality of the above estimator.

We begin with the sample complexity to estimate MMD-$k$ (with proof in Appendix~\ref{appsub: MMD-k upper bound}). For simplicity, we focus on the common case of large ensembles $N^2 \ge \frac{k!}{\epsilon^2}\log\frac{1}{\delta}$, while the general case can be found in Theorem~\ref{theorem:general_version}.

\begin{theorem}[\textbf{Sample complexity of MMD-$k$ with fixed $k$}]
    For two $N$-state uniform pure quantum ensembles, with $k$ fixed, the sample complexity to estimate MMD-$k$ between the two ensembles using the U-stat estimator Eq. (\ref{eq: estimator of MMD-k}) is:

    \begin{equation}
        M = \scO\Big((\frac{k!}{\epsilon^2}\log \frac{1}{\delta})^{\frac{1}{k}}N^{2-\frac{2}{k}}\Big). \label{eq: sample complexity of MMD-k, upper bound for most case}
    \end{equation}\label{thm: sample complexity of MMD-k, upper bound, most general case}
\end{theorem}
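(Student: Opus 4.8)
The plan is to estimate the three fidelity--moment terms in $\scD^{(k)} = \bar{F}^{(k)}(\scE_1,\scE_1)+\bar{F}^{(k)}(\scE_2,\scE_2)-2\bar{F}^{(k)}(\scE_1,\scE_2)$ separately and combine by the triangle inequality, so it suffices to show that, with $M/3$ samples devoted to a fixed pair $(a,b)$, the U-stat estimator $\widehat{\bar{F}^{(k)}}$ of Eq.~\eqref{eq: estimator of MMD-k} satisfies $|\widehat{\bar{F}^{(k)}}-\bar{F}^{(k)}|\le \epsilon/4$ with probability at least $1-\delta/3$; a union bound over the three terms then gives additive error $\epsilon$ with failure probability $\delta$ (the coefficients $1,1,2$ contribute the factor $4$). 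The structural fact that makes the analysis tractable is that, because the ensembles are uniform, the $M/3$ sampled labels are i.i.d.\ uniform over the $N^2$ pairs, so the occupancy counts $(T_\ell)$ are multinomial and the active set $S=\{\ell:T_\ell\ge k\}$, with $m=|S|$, is a function of the label assignments alone, independent of the SWAP outcomes $R$.

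Next I record that the estimator is unbiased once we condition on the selection. Conditioning on $(T_\ell)$, outcomes on distinct labels are independent and the $T_\ell$ outcomes on an active label are fresh i.i.d.\ $\pm1$ variables with mean $X_\ell$, so the order-$k$ U-statistic with kernel $\prod_s R_{\ell_{(r_s)}}$ obeys $\dsE[Z_\ell\mid T_\ell]=X_\ell^k$; hence $\dsE[\widehat{\bar{F}^{(k)}}\mid (T_\ell)]=\tfrac1m\sum_{\ell\in S}X_\ell^k$. Since $(T_\ell)$ is exchangeable, conditioned on $m$ the set $S$ is uniform over $m$-subsets of $[N^2]$, so $\dsE[\tfrac1m\sum_{\ell\in S}X_\ell^k\mid m]=\bar{F}^{(k)}$. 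I therefore split $\widehat{\bar{F}^{(k)}}-\bar{F}^{(k)}=\big(\widehat{\bar{F}^{(k)}}-\tfrac1m\sum_{\ell\in S}X_\ell^k\big)+\big(\tfrac1m\sum_{\ell\in S}X_\ell^k-\bar{F}^{(k)}\big)$ into an estimation-noise term and a subsampling term, both of which are controlled as soon as $m$ is known to be large.

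The crux is a high-probability lower bound on $m$. Writing $\lambda=(M/3)/N^2$ for the expected number of hits per label, one has $\dsE[m]=N^2\Pr[T_\ell\ge k]\ge N^2\Pr[T_\ell=k]\ge c\,N^2\lambda^k/k!$ for an absolute constant $c$, valid in the sparse regime $\lambda\le 1/2$ (and $M/3\ge 2k$, automatic for the chosen $M$). The indicators $\mathbf{1}\{T_\ell\ge k\}$ are negatively associated, so a Chernoff bound gives $m\ge m_0:=\tfrac12\dsE[m]\asymp (M/3)^k/(k!\,N^{2k-2})$ except with probability $\le\delta/9$, provided $m_0\gtrsim\log(1/\delta)$. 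On the event $m\ge m_0$: the estimation-noise term is an average of $m$ independent $Z_\ell\in[-1,1]$, so Hoeffding bounds it by $\epsilon/8$ except with probability $\le\delta/9$ once $m_0\gtrsim\epsilon^{-2}\log(1/\delta)$; the subsampling term is the error of an $m$-sample mean drawn without replacement from the population $\{X_\ell^k\}\subset[0,1]$, so Hoeffding--Serfling bounds it by $\epsilon/8$ under the same condition on $m_0$. A union bound then yields $|\widehat{\bar{F}^{(k)}}-\bar{F}^{(k)}|\le\epsilon/4$ with failure probability $\le\delta/3$.

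It remains to choose $M$ so that $m_0\asymp (M/3)^k/(k!\,N^{2k-2})$ exceeds $\tfrac{C}{\epsilon^2}\log\tfrac{1}{\delta}$ for the absolute constant $C$ coming out of the three concentration steps; solving for $M$ gives $M=\scO\big((\tfrac{k!}{\epsilon^2}\log\tfrac{1}{\delta})^{1/k}N^{2-2/k}\big)$, and one checks that in the regime $N^2\ge\tfrac{k!}{\epsilon^2}\log\tfrac{1}{\delta}$ (with a suitable implicit constant) this $M$ does satisfy $\lambda\le 1/2$, closing the loop. The main obstacle -- and the origin of both the $N^{2-2/k}$ exponent and the $(k!)^{1/k}$ prefactor -- is the combinatorial estimate $\Pr[T_\ell\ge k]\asymp\lambda^k/k!$ together with its transfer to a high-probability lower bound on the random count $m$ despite the dependence among the occupancy indicators; by comparison the selection-bias bookkeeping (discarding labels with $T_\ell<k$) and the two Hoeffding steps are routine.
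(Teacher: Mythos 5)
Your proposal is correct and follows essentially the same route as the paper's proof (Case~1 of the general version in Appendix~B.3): the same decomposition into a U-statistic estimation-noise term and a subset-selection term handled by Hoeffding and Serfling respectively, the same key estimate $\dsE[m]=\Theta\big(M^k/(k!\,N^{2k-2})\big)$ in the sparse regime, and the same use of negative association plus Chernoff to lower-bound $m$ before solving for $M$. The only cosmetic difference is that you bound the noise term by plain Hoeffding on the bounded conditionally independent $Z_\ell$ rather than via the sub-Gaussian combination the paper invokes, which yields the same rate.
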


Theorem~\ref{thm: sample complexity of MMD-k, upper bound, most general case} gives the upper bound of sample complexity to estimate MMD-$k$. Importantly, this scaling is information-theoretically optimal, as shown by the matching lower bound below (with proof in Appendix~\ref{appsub: MMD-k proof of lower bound}).

\begin{theorem} [\textbf{Sample complexity of MMD-$k$, lower bound}]
    There exist two $N$-state uniform pure quantum ensembles $\scE_1$ and $\scE_2$, such that the sample complexity to estimate MMD-$k$ between the two ensembles using \textbf{any estimator} is lower bounded by:
    \begin{equation}
        M = \Omega\Big((\frac{k!}{\epsilon^2}\log \frac{1}{\delta})^{\frac{1}{k}}N^{2-\frac{2}{k}}\Big), \label{eq: M of MMD-k, lower bound}
    \end{equation}
    where $\epsilon$ is the additive error between estimated value and true value and $\delta$ is the failure probability.\label{thm: lower bound of sample complexity MMD-k}
\end{theorem}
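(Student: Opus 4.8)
The plan is to prove a matching minimax lower bound by an information‑theoretic (Bayesian) argument. The guiding intuition is that, once the ensemble moments of order $<k$ are matched, a label carrying fewer than $k$ SWAP outcomes conveys \emph{no} information about the $k$‑th moment operators, while a label carrying $\ge k$ outcomes amounts to only a unit‑variance noisy estimate of its $X_\ell^k$; consequently any estimator accurate to $\epsilon$ with confidence $1-\delta$ must effectively observe $\Omega(\epsilon^{-2}\log(1/\delta))$ labels with $\ge k$ outcomes, whereas $M$ total samples produce at most $O\!\big(M^k/(k!\,N^{2k-2})\big)$ such labels. Equating the two yields Eq.~\eqref{eq: M of MMD-k, lower bound}.

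Concretely, I would fix two probability measures $\pi_0,\pi_1$ on $[0,1]$ that agree on their first $k-1$ moments but whose $k$‑th moments differ; writing $\pi_1=\pi_0+\epsilon\,\nu$ for a fixed signed measure $\nu$ with $\int x^j\,d\nu=0$ for $j\le k-1$ and $\int x^k\,d\nu\neq 0$ (a standard Chebyshev‑type moment‑matching construction with $\Theta(k)$ atoms) makes the $\epsilon$‑dependence explicit. Then I would build two families of pairs of $N$‑state pure ensembles $(\scE_1^{(b)},\scE_2^{(b)})$, $b\in\{0,1\}$ — e.g.\ in dimension $d=\Theta(N)$, where a Gram matrix with unit diagonal is essentially unconstrained — whose pairwise fidelities (within each ensemble and across) are, for each $b$, close to i.i.d.\ draws from fidelity laws whose first $k-1$ moments coincide for $b=0$ and $b=1$, while the cross‑fidelity law has its $k$‑th moment shifted by $\Theta(\epsilon)$. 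By concentration over the $\Theta(N^2)$ nearly independent entries, each $\bar{F}^{(k)}(\scE_a^{(b)},\scE_c^{(b)})$ equals the corresponding moment up to $O(1/N)$, so $\scD^{(k)}(\scE_1^{(0)},\scE_2^{(0)})$ and $\scD^{(k)}(\scE_1^{(1)},\scE_2^{(1)})$ are separated by $\Theta(\epsilon)$ whenever $\epsilon\gg 1/N$, which holds in the stated regime $N^2\ge \frac{k!}{\epsilon^2}\log\frac1\delta$.

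The information bound is the core step. Condition on the multiplicity profile $(T_\ell)_\ell$ of the $M$ samples, which has the same law under $b=0$ and $b=1$ because it depends only on the value‑independent, uniform choice of which pair to SWAP‑test; given the profile, the outcomes on distinct labels are independent. On a label with $T_\ell$ outcomes these outcomes are exchangeable $\pm1$ variables whose joint law is determined by its Walsh coefficients, and the degree‑$j$ Walsh coefficient equals the $j$‑th fidelity moment; hence for $T_\ell\le k-1$ the conditional law is identical under the two hypotheses, and the KL divergence between the two sample distributions equals $\dsE\big[\sum_{\ell:\,T_\ell\ge k} d_{\mathrm{KL}}\!\big(Q^{(0)}_{T_\ell}\,\|\,Q^{(1)}_{T_\ell}\big)\big]$, where $Q^{(b)}_t$ is the outcome law on a multiplicity‑$t$ label. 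A Taylor expansion in $\epsilon$ gives $d_{\mathrm{KL}}(Q^{(0)}_t\|Q^{(1)}_t)=O(\epsilon^2\,2^{t})$ (the $2^t$ absorbing the count of Walsh coefficients of order $\ge k$), and since $\dsE\big[\#\{\ell:T_\ell=t\}\big]\le M^t/(t!\,N^{2t-2})$ and $M\le N^2$, the resulting series is dominated by $t=k$ and sums to $O\big(\epsilon^2\,2^{k}M^{k}/(k!\,N^{2k-2})\big)$ up to $k$‑dependent constants. Finally, an $\epsilon$‑accurate, confidence‑$(1-\delta)$ estimator induces a test separating the two hypotheses with both error probabilities $\le\delta$, so by the data‑processing inequality (Le Cam's two‑point method in KL form, $D_{\mathrm{KL}}\ge D_{\mathrm{KL}}(\mathrm{Bern}(\delta)\|\mathrm{Bern}(1-\delta))=\Omega(\log(1/\delta))$) the KL divergence is $\Omega(\log(1/\delta))$; combining with the upper bound above forces $M^k=\Omega\big(k!\,N^{2k-2}\,\epsilon^{-2}\log(1/\delta)\big)$, i.e.\ Eq.~\eqref{eq: M of MMD-k, lower bound}.

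I expect the main obstacle to be the quantum realizability of these hard instances: one must exhibit genuine pure‑state ensembles whose pairwise fidelities reproduce the prescribed moment‑matched laws while still giving a $\Theta(\epsilon)$ gap in $\scD^{(k)}$, with enough approximate independence among fidelities for the concentration step. I anticipate this needs either a large local dimension or a gadget built from a few ``anchor'' states perturbed into $N$ distinct states, together with a Gram‑matrix positivity check. The remaining analytic points — validity of the $\epsilon$‑expansion of $d_{\mathrm{KL}}$ for a sufficiently small constant $\epsilon$, and checking that labels with $T_\ell>k$ do not dominate — are routine given the multiplicity tail bound. It is worth noting that a naive two‑point argument bounding only total variation would lose a factor and yield the weaker exponent $\epsilon^{-1/k}$; the correct exponent $\epsilon^{-2/k}$ genuinely requires the KL/$\chi^2$‑based ``fuzzy hypotheses'' version built from i.i.d.\ randomized fidelity configurations.
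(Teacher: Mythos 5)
Your proposal follows essentially the same route as the paper's proof: a two-point (Le Cam/Bretagnolle--Huber) argument built on moment-matched fidelity laws, a label-wise KL decomposition conditional on the multiplicity profile, and a per-label $\chi^2$ bound obtained by expanding the conditional outcome law in elementary symmetric (Walsh) functions whose degree-$j$ coefficients are the $j$-th fidelity moments, so that labels with fewer than $k$ outcomes carry zero information. The one obstacle you flag---quantum realizability of the prescribed fidelity table---is resolved in the paper by supporting the fidelity laws on $[0,\alpha/N]$, which makes any such table realizable in dimension $2N$ (Lemma~\ref{lem:realize-bounded-fidelity-table}) and simultaneously keeps $Q_{0,s}$ bounded below and forces the higher-order coefficient differences to decay geometrically, tightening your per-label bound $O(\epsilon^2 2^{t})$ to $O\big(\epsilon^2\binom{t}{k}\big)$.
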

Combining Theorem~\ref{thm: sample complexity of MMD-k, upper bound, most general case} and Theorem~\ref{thm: lower bound of sample complexity MMD-k}, we conclude that to guarantee the estimation of MMD-$k$, the optimal scaling for sample complexity $M$ with $N$ is that $M \sim N^{2-\frac{2}{k}}$, and it can be achieved by the U-stat estimator Eq.~(\ref{eq: estimator of MMD-k}) we introduce.

\begin{figure}
    \centering
    \includegraphics[width=0.9\linewidth]{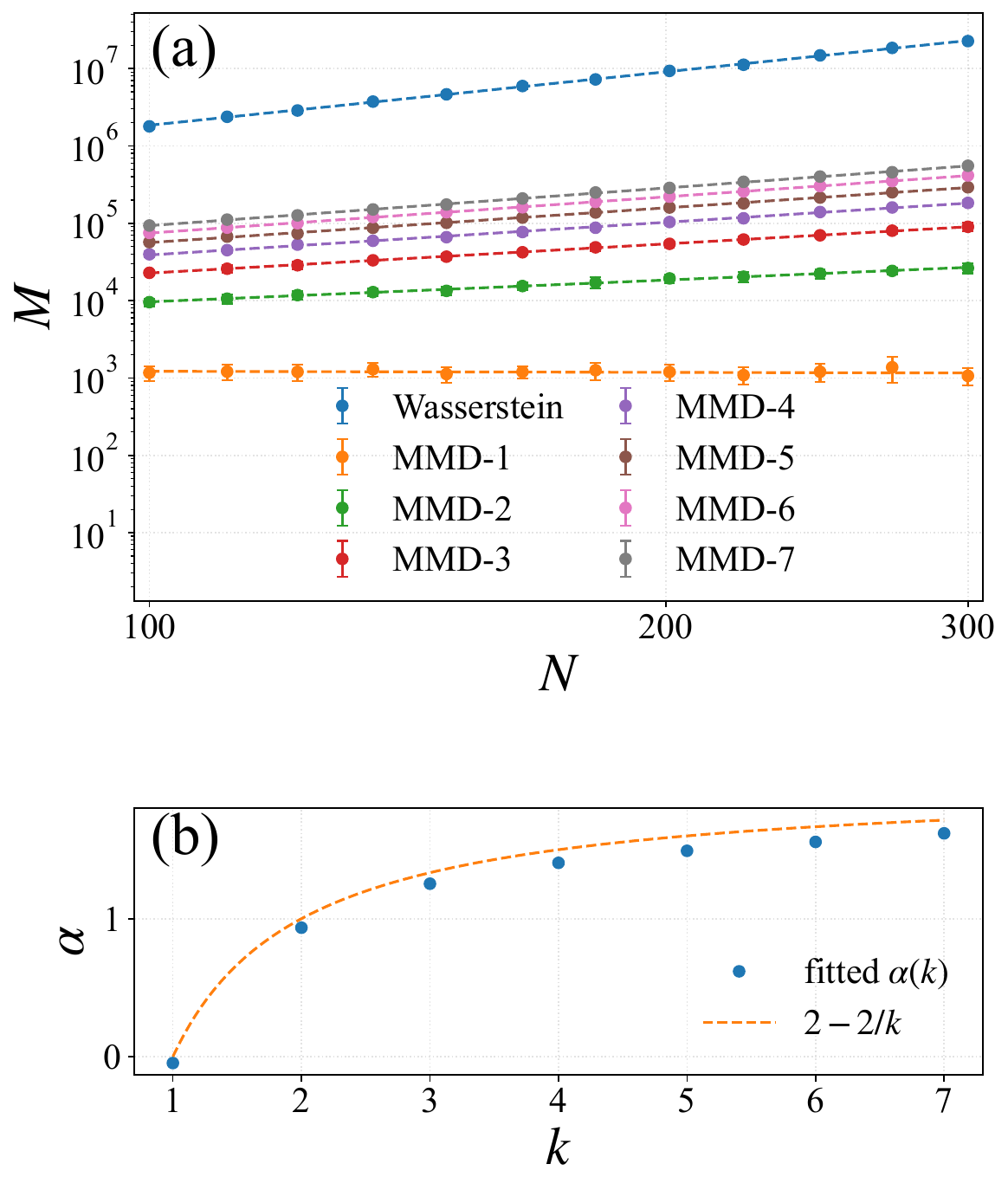}
    \caption{Numerical simulation of sample complexity. (a) Number of samples needed to estimate MMD-$k$ and Wasserstein as $N$ increases. Both axes are in logarithmic scale. (b) Scaling coefficient of MMD-$k$ with different values of $k$. The corresponding slope for Wasserstein is $2.29>2$ due to finite-size effects from the extra $\log(N)$ in Eq.~\eqref{eq: sample compelxity wasserstein}.     }
    \label{fig: numerical sample complexity}
\end{figure}

As shown in Theorem~\ref{thm: k to reach full disriminative power}, sometimes we consider the case when $k$ also scales with $N$, such as $k\sim N$ (with proof in Appendix~\ref{appsub: MMD-k k large}).

\begin{theorem}[\textbf{Sample Complexity of MMD-$k$ with $k\sim N$}]
    When $k$ has the scaling as $k\sim N$.
    Consider two $N$-state uniform pure quantum ensembles, the sample complexity to estimate MMD-$k$ between them using the U-stat estimator Eq. (\ref{eq: estimator of MMD-k}) has the scaling
    \begin{equation}
        M = \Theta\Big(N^2k\Big).
    \end{equation}
        \label{thm: sample complexity of MMD-k, k large}
\end{theorem}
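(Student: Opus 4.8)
The plan is to adapt the analysis of Theorem~\ref{thm: sample complexity of MMD-k, upper bound, most general case} to the regime $k \sim N$, tracking carefully how the factorial $k!$ and the exponent $2 - 2/k$ behave when $k$ is no longer constant. First I would revisit the upper-bound argument: to evaluate $\widehat{\bar{F}^{(k)}}$ in Eq.~\eqref{eq: estimator of MMD-k} we need a constant fraction of the $N^2$ labels to accumulate at least $k$ samples, since the U-stat $Z_\ell$ in Eq.~\eqref{eq: U-stat for MMD-k} is only defined when $T_\ell \ge k$. With $M$ samples distributed (essentially uniformly) over $N^2$ labels, each label receives $\mathrm{Poisson}(M/N^2)$ samples in expectation; to get $T_\ell \ge k$ with good probability for a constant fraction of labels we need $M/N^2 = \Theta(k)$, i.e. $M = \Theta(N^2 k)$. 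This concentration step — showing that when $M/N^2 \gtrsim k$ a constant fraction of labels clear the threshold $k$, with failure probability controlled by a Chernoff/Poissonization bound — is the heart of the lower bound on $M$ that is intrinsic to the estimator. For the variance side, I would bound $\mathrm{Var}(\widehat{\bar{F}^{(k)}})$: the within-label variance of $Z_\ell$ is $\le 1$ (it is an average of $\pm 1$ products), and conditioned on $m$ labels clearing the threshold the estimator averages $m = \Theta(N^2)$ roughly-independent terms, giving variance $\scO(1/N^2)$, which is $\scO(\epsilon^2)$ for any constant $\epsilon$ once $N$ is large; a union bound over labels handles the $\delta$ dependence. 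So the sampling bottleneck, not the statistical averaging, dominates here, which is exactly why the $k!/\epsilon^2 \log(1/\delta)$ factor disappears from the leading scaling and we are left with $M = \scO(N^2 k)$.

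For the matching lower bound $M = \Omega(N^2 k)$, I would argue that \emph{any} estimator using the SWAP-test data $\{(R_{\ell_t}, \ell_t)\}$ must, for a suitably chosen hard pair $(\scE_1, \scE_2)$, effectively learn the $k$-th moment operator, and that with $M = o(N^2 k)$ a non-vanishing fraction of labels receive fewer than $k$ samples. The clean way is an information-theoretic / indistinguishability argument: construct two ensembles that differ only in the joint fidelity pattern on a subset of labels in a way that is invisible unless one observes $\ge k$ SWAP outcomes correlated on those labels (mirroring the construction in Appendix~\ref{appsub: MMD-k proof of lower bound} but now with $k$ growing). Since $\dsE[\prod_{s=1}^j R_{\ell_{(r_s)}} \mid \ell] = X_\ell^{\,?}$ only reproduces the $j$-th power when the $j$ samples are on distinct repetitions of the same label, fewer than $k$ samples on a label carry zero Fisher information about $X_\ell^k$ beyond lower moments; a standard coupon-collector-type / balls-in-bins lower bound then forces $M/N^2 = \Omega(k)$ to ensure enough labels are "$k$-informative."

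The main obstacle I anticipate is the concentration analysis in the super-constant-$k$ regime: when $k$ grows with $N$, the Poisson tail bounds $\Pr[T_\ell < k]$ with mean $M/N^2 = ck$ must be shown to be bounded away from $1$ (for the upper bound) or bounded away from $0$ (for the lower bound) \emph{uniformly} in $N$, and then translated into a statement about the fraction of the $N^2$ labels clearing the threshold via a second-moment or bounded-differences argument over the (weakly dependent) label counts. One must also verify that $\binom{T_\ell}{k}$ in the denominator of $Z_\ell$ does not blow up the variance — it does not, since $Z_\ell$ is manifestly bounded by $1$ in absolute value — and that the bias introduced by conditioning on the event $\{T_\ell \ge k\}$ is negligible, which follows because $\dsE[Z_\ell \mid T_\ell \ge k] = X_\ell^k$ exactly by the U-stat unbiasedness property holding pointwise for every fixed $T_\ell \ge k$. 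Assembling these pieces, the upper and lower bounds meet at $M = \Theta(N^2 k)$.
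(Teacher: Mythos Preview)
Your overall strategy is sound, but you are working harder than necessary on both sides, and there is one genuine gap in your bias argument.

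\textbf{Lower bound.} The theorem is stated for the U-stat estimator in Eq.~\eqref{eq: estimator of MMD-k} specifically, not for arbitrary estimators. The paper exploits this: set $M=(1-\eta)N^2k$, so each $T_\ell\sim\mathrm{Bin}(M,1/N^2)$ has mean $(1-\eta)k$. A Chernoff upper-tail bound gives $\Pr(T_\ell\ge k)\le \exp(-c_\eta k)$, and a union bound over $N^2$ labels shows that with high probability \emph{no} label has $T_\ell\ge k$ (since $k\sim N\gg \log N^2$). Thus $m=0$ and the estimator is simply undefined. No hard-instance construction or Fisher-information argument is needed. Your information-theoretic route would prove something stronger (a lower bound for \emph{any} estimator) and could be made to work by pushing the construction of Appendix~\ref{appsub: MMD-k proof of lower bound} to growing $k$, but it is overkill for the theorem as stated.

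\textbf{Upper bound.} The paper again takes the sharper route: with $M=(1+\eta)N^2k$, a Chernoff lower-tail plus union bound shows $T_{\min}\ge k$ w.h.p., so $m=N^2$ exactly and the subset-selection error vanishes identically. One then invokes the $m=n$ case (Case~2) of Theorem~\ref{theorem:general_version}, whose estimation term $\tfrac{k}{\epsilon^2}\log\tfrac{1}{\delta}$ is dominated by the occupancy term $N^2k$ for large $N$. Your ``constant fraction of labels'' approach also works but leaves a subset-selection bias to control.

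\textbf{The gap.} Your dismissal of that bias is incorrect: the statement $\dsE[Z_\ell\mid T_\ell\ge k]=X_\ell^k$ is per-label unbiasedness and says nothing about whether $\tfrac{1}{m}\sum_{\ell:T_\ell\ge k}X_\ell^k$ equals $\tfrac{1}{N^2}\sum_\ell X_\ell^k$. That discrepancy is the subset-selection error in the paper's Case~1, controlled there by Serfling's inequality (since the selected labels form a random-without-replacement subsample of $[N^2]$). You would need that step if you insist on ``constant fraction''; the paper avoids it entirely by pushing to $m=N^2$.
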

Theorem~\ref{thm: sample complexity of MMD-k, k large} shows that when MMD-$k$ achieves the full discriminative power (given by Theorem~\ref{thm: k to reach full disriminative power}), the sample complexity scales as $M\sim N^3$, which is faster than the scaling $M\sim N^2 \log N$ that Wasserstein needs. It is reasonable since MMD-$k$ and Wasserstein process the same data $\{(R_{\ell_t},\ell_t)_{t=1}^M\}$ differently and reveal different aspects of the information in data. We summarize the relationship between sample complexity and discriminative power for MMD-$k$ and Wasserstein in Fig.~\ref{fig: sc and dp}.

\begin{figure*}[t]
    \centering
    \includegraphics[width=0.9\linewidth]{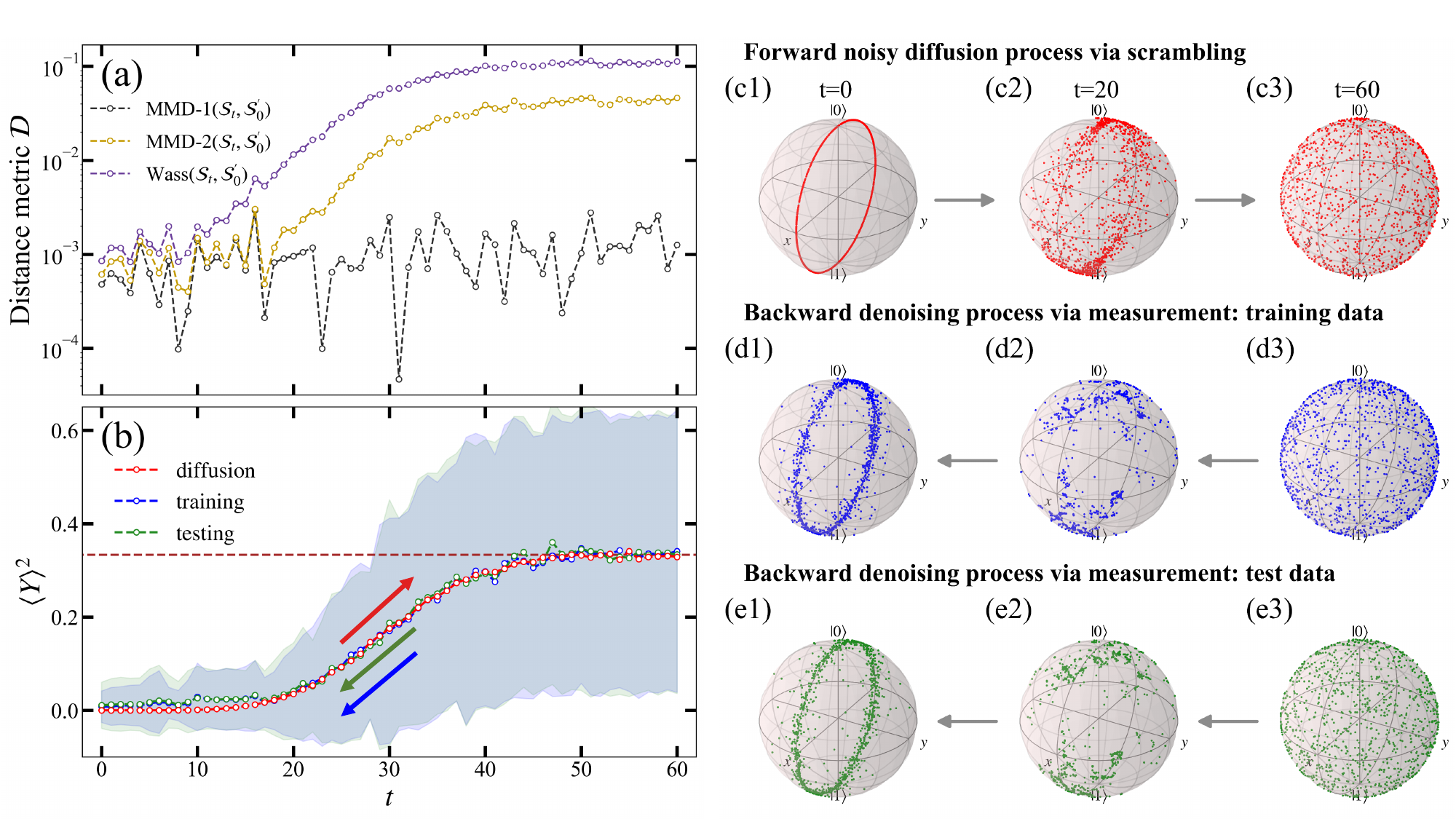}
    \caption{Analysis of effect of loss functions and training performance of QUDDPM. (a) Distance metrics (scaled by log) between ensemble $S_0$ and the ensemble through diffusion process at step $t$, $S_t$ in generation of the circular state ensemble. The data set size of the ensemble is $\abs{\scS}=1000$, and the total number of steps is $T=60$. Due to finite samples, the vanishment is not exact. (b) deviation of generated states from unit circle in X-Z plane. The deviation $\expval{Y}^2$ for forward diffusion (red), backward training (blue), and backward test (green) are plotted. The shaded area shows the sample standard deviation.(c)(d)(e) Bloch visualization of the forward (c1)–(c3) and backward (d1)–(d3),(e1)–(e3) process.}
    \label{fig: loss functions and QUDDPM performance}
\end{figure*}

\section{Numerical Simulations and Applications}
\subsection{Numerical simulations of sample complexity}
To verify the theoretical results in Section \ref{sec: sample complexity}, we conduct numerical simulations to verify the scaling of the number of samples $M(N,\epsilon,\delta)$ needed to estimate Wasserstein distance and MMD-$k$. We care about how $M$ scales with $N$. Suppose we have two infinity-state ensemble $\scE_1$ and $\scE_2$. For each $N$, we randomly sample $N$ states from $\scE_1$ and $\scE_2$ to form two $N$-state ensembles $S_1$ and $S_2$, and then calculate $M$ needed to estimate $\scD(S_1, S_2)$ up to a fixed additive error $\epsilon$ and a fixed failure probability $\delta$. The algorithm to calculate $M(N,\epsilon,\delta)$ for a distance metric $\scD$ and two ensembles $S_1$ and $S_2$ is Algorithm \ref{alg:estimate-M} in Appendix \ref{app: algorithms}. We choose $\scE_1$ and $\scE_2$ to be two simple ensembles used in the work of QuDDPM \cite{QuDDPM_PhysRevLett.132.100602}. Specifically, $\scE_1$ is the cluster ensemble, consisting of one qubit states $\ket{\psi}\sim \ket{0}+sc\ket{1}$ up to a normalization factor where $\Re (c), \Im(c) \sim \scN(0,1)$ is Gaussian distributed, with the scale factor chosen as $s=0.08$. $\scE_2$ is the circular state ensemble $\scE_{\rm cir}$, consisting of one qubit states $\ket{\psi}\sim e^{-i\theta Y}$, where $Y$ is the Pauli $Y$ and $\theta$ is generated by a uniform distribution on $[0,2\pi)$. With a fixed additive error $\epsilon=0.1$ and failure probability $\delta=1/3$, we evaluate the sample complexity $M$ to estimate MMD-$k$ and the Wasserstein distance as $N$ increases from $100$ to $300$, with $k \in \{1,\ldots,7\}$. To examine the scaling with $N$, we plot $\log M$ versus $\log N$ in Fig.~\ref{fig: numerical sample complexity}(a) and fit the data points to extract the slopes. The resulting slopes for estimating MMD-$k$ at different $k$ are shown in Fig.~\ref{fig: numerical sample complexity}(b), and compared with the predicted upper bound $2-2/k$ from Eq.~(\ref{eq: sample complexity of MMD-k, upper bound for most case}).

\subsection{Application: MMD-$2$ as the loss function to learn circular state ensemble}
As an application, we train a QuDDPM \cite{QuDDPM_PhysRevLett.132.100602} to learn the circular state ensemble using MMD-$2$ as the loss function. The goal of QuDDPM is to generate new elements from an unknown ensemble $\scE_0$, given only a finite uniform pure state ensemble $\scS_0=\{\ket{\psi_i}\}\sim\scE_0$. QuDDPM casts this task as a denoising diffusion process on quantum states: a forward noisy process progressively transforms $\scE_0$ into an analytically tractable noise ensemble $\scE_T$, while a parameterized reverse process learns to approximately invert each step of the forward dynamics and thus map samples from $\scE_T$ back to $\scE_0$. In practice, each reverse step is implemented by a trainable quantum circuit $U_{\theta}(t)$ acting on the system and possible ancillas, together with mid-circuit measurements and classical feedforward, and the parameters are optimized by minimizing a statistical discrepancy between the predicted ensemble at each step and the corresponding training ensemble, estimated from finitely many samples. More details about QuDDPM can be found in Appendix~\ref{app:quddpm}.

In QuDDPM, the loss function to estimate the discrepancy should at least be able to discriminate the target ensemble $\scE_0$ and the noise ensemble $\scE_T$. It turns out that MMD-$1$ cannot discriminate the circular state ensemble and the noise ensemble, which is close to the Haar random ensemble. As shown by Fig.~\ref{fig: loss functions and QUDDPM performance} (a), for circular state ensemble, MMD-$1$ (purple) vanishes, while MMD-$2$ (orange) and Wasserstein (blue) can characterize the diffusion of distribution. 

From above, we see that MMD-$2$ is the minimal statistically efficient choice for the task. Therefore, we use MMD-$2$ as the loss function to train QUDDPM to learn the circular state ensemble $\scE_{\text{cir}}$. As shown in Fig.~\ref{fig: loss functions and QUDDPM performance} (c)(d)(e), QUDDPM trained by MMD-$2$ successfully generates elements following the distribution of $\scE_{\text{cir}}$ by denoising process, both on training data and test data. To quantify the performance of QUDDPM, in Fig.~\ref{fig: loss functions and QUDDPM performance} (b), we evaluate the deviation by Pauli $Y$ expectation $\expval{Y}^2$, the values of $\expval{Y}^2$ at $t=0$ is $\expval{Y}^2_{\text{train}}=0.00807 \pm 0.0337$ and $\expval{Y}^2_{\text{test}}=0.01104 \pm 0.0496$ for training data and testing data, while the ground value is $\expval{Y}^2_{\text{data}}=0$. Our implementation of QUDDPM is a modification based on the code in \cite{QuDDPM_PhysRevLett.132.100602} that is publicly available~\cite{Github}.

\section{Generalization to weakly noisy mixed states}
In this section, we consider a practical scenario where quantum state ensembles are subject to noise. For simplicity, we focus on the weak noise limit where the noisy mixed states are close to pure. We begin by introducing the $\epsilon$-ball model.

\begin{definition}[\textbf{The $\epsilon$-ball around a quantum state}]
    Given a quantum state $\ket{\widetilde{\psi}}$, the $\epsilon$-ball with radius $\epsilon_b$ around $\ket{\widetilde{\psi}}$ is the set of states $\scB(\ket{\widetilde{\psi}},\epsilon_b)=\{\ket{\psi}:1-\abs{\braket{\psi}{\widetilde{\psi}}}^2\le\epsilon_b\}$, and $\ket{\widetilde{\psi}}$ is the central state of $\scB(\ket{\widetilde{\psi}},\epsilon_b)$.\label{def: epsilon ball}
\end{definition}
Based on the above definition, we have the following result in analogy to Eq. (\ref{eq: sample complexity of MMD-k, upper bound for most case}).

\begin{theorem}[\textbf{Sample complexity of MMD-$k$, given prior information}]
    Consider two $N$-state uniform pure quantum ensembles, suppose states in each of them can be devided into $n$ $\epsilon$-balls with radius $\epsilon_b$ around $n$ states uniformly, and we are given the prior information about which $\epsilon$-ball every state falls into. The sample complexity to estimate MMD-$k$ between the two ensembles using the U-stat estimator Eq. (\ref{eq: estimator of MMD-k}) is:
    \begin{equation}
        M = \scO\Big(\big(\frac{k!}{(\epsilon-32k\sqrt{\epsilon_b})^2}\log (\frac{1}{\delta})\big)^{\frac{1}{k}}n^{2-\frac{2}{k}}\Big),\label{eq: M of MMD-k, epsilon ball, upper bound}
    \end{equation}
    where $\epsilon$ is the additive error between estimated value and true value and $\delta$ is the failure probability. Here we assume $\epsilon > 32k\sqrt{\epsilon_b}$ and $n^2 \ge \frac{k!}{(\epsilon-32k\sqrt{\epsilon_b})^2}\log\frac{1}{\delta}$.  \label{thm: sample complexity MMD-k, prior information}
\end{theorem}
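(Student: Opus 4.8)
The plan is to reduce Theorem~\ref{thm: sample complexity MMD-k, prior information} to Theorem~\ref{thm: sample complexity of MMD-k, upper bound, most general case} by treating the $\epsilon$-ball index as an effective label. First I would observe that the prior information lets us coarse-grain the sampling process: instead of tracking the true index pair $\ell=(i,j)$ of the two sampled states, we only track the pair of $\epsilon$-ball indices $\ell' = (a,b) \in \{1,\dots,n\}^2$ that the two states belong to. Since the $n$ balls each receive a $1/n$ fraction of states uniformly, sampling a state from $\scE_1$ and recording its ball index is exactly like sampling from a uniform $n$-state ensemble; the same SWAP-test data $(R_{\ell_t},\ell_t)$ can thus be relabeled as $(R_{\ell'_t},\ell'_t)$ over $n^2$ effective labels. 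Running the U-stat estimator Eq.~\eqref{eq: estimator of MMD-k} on these coarse-grained labels then plugs directly into Theorem~\ref{thm: sample complexity of MMD-k, upper bound, most general case} with $N$ replaced by $n$, giving the $n^{2-2/k}$ scaling and the $\binom{T}{k}^{-1}$ variance control.

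The subtlety — and the main obstacle — is that within a single ball the fidelity $X_\ell = |\braket{\psi_i}{\phi_j}|^2$ is not constant, so the coarse-grained estimator is no longer estimating $\bar F^{(k)}$ exactly but rather a ball-averaged surrogate $\widetilde{\bar F}^{(k)}$ defined by replacing each state by its central state $\ket{\widetilde\psi_a}$. The key step is therefore a \emph{bias bound}: I would show that for any $\ket{\psi}\in\scB(\ket{\widetilde\psi_a},\epsilon_b)$ and $\ket{\phi}\in\scB(\ket{\widetilde\phi_b},\epsilon_b)$ one has $\bigl|\,|\braket{\psi}{\phi}|^{2k} - |\braket{\widetilde\psi_a}{\widetilde\phi_b}|^{2k}\,\bigr| \le C k\sqrt{\epsilon_b}$ for an explicit constant (the factor $32$ in the statement suggests $C=16$ or so after accounting for the three terms in $\scD^{(k)}$). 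This follows from: (i) $\bigl|\,|\braket{\psi}{\phi}|^2 - |\braket{\widetilde\psi_a}{\widetilde\phi_b}|^2\,\bigr|\le \|\,\ketbra{\psi}{\psi}-\ketbra{\widetilde\psi_a}{\widetilde\psi_a}\|_1 + \|\,\ketbra{\phi}{\phi}-\ketbra{\widetilde\phi_b}{\widetilde\phi_b}\|_1 \le 4\sqrt{\epsilon_b}$, using the standard relation $\frac12\|\ketbra{\alpha}{\alpha}-\ketbra{\beta}{\beta}\|_1 = \sqrt{1-|\braket{\alpha}{\beta}|^2}\le\sqrt{\epsilon_b}$; and (ii) the Lipschitz bound $|x^k - y^k|\le k|x-y|$ for $x,y\in[0,1]$. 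Propagating this through the three $\bar F^{(k)}$ terms in $\scD^{(k)}$ gives a total systematic error of order $k\sqrt{\epsilon_b}$ times a constant.

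With the bias controlled, I would finish by a triangle-inequality/error-budget argument: to estimate the true $\scD^{(k)}$ to additive error $\epsilon$ it suffices to estimate the coarse-grained surrogate $\widetilde{\scD}^{(k)}$ to error $\epsilon - (\text{bias})= \epsilon - 32k\sqrt{\epsilon_b}$, which is why the effective accuracy parameter in Eq.~\eqref{eq: M of MMD-k, epsilon ball, upper bound} is $\epsilon - 32k\sqrt{\epsilon_b}$ and why we must assume $\epsilon>32k\sqrt{\epsilon_b}$ for the bound to be meaningful. Invoking Theorem~\ref{thm: sample complexity of MMD-k, upper bound, most general case} (equivalently Theorem~\ref{theorem:general_version} for the general-case constant) with $N\mapsto n$ and $\epsilon\mapsto \epsilon-32k\sqrt{\epsilon_b}$, under the stated regime $n^2\ge \frac{k!}{(\epsilon-32k\sqrt{\epsilon_b})^2}\log\frac1\delta$, yields exactly Eq.~\eqref{eq: M of MMD-k, epsilon ball, upper bound}. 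The one place needing a little care is that the coarse-grained labels are uniform only in expectation over which states land in which ball; but since the balls are assumed to partition the ensemble into $n$ equal groups by hypothesis, the induced label distribution is genuinely uniform on $\{1,\dots,n\}^2$ and the concentration arguments of the earlier theorem apply verbatim.
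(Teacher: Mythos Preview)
Your overall strategy is exactly the paper's: coarse-grain the label to the ball-pair index, apply Theorem~\ref{thm: sample complexity of MMD-k, upper bound, most general case} with $N\mapsto n$, and absorb the approximation error into a shrunken accuracy target. The perturbation bound you sketch in (i)--(ii) is also the paper's Lemma~\ref{lem: error introduced by central states}.

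There is one genuine gap, and it is precisely why you cannot make the constant $32$ come out. You assert that the coarse-grained estimator targets the central-state surrogate $\widetilde{\bar F}^{(k)}=\dsE_{\ell'}[\widetilde x_{\ell'}^k]$. It does not. On a fixed ball-pair label $\ell'=(a,b)$, each SWAP trial draws a \emph{fresh} pair $(\psi,\phi)\in B_a\times B_b$, so the $R$'s on label $\ell'$ are i.i.d.\ with mean $\mu_{\ell'}:=\dsE[\,|\braket{\psi}{\phi}|^2\mid \ell'\,]$, and the U-stat kernel in Eq.~\eqref{eq: U-stat for MMD-k} is unbiased for $\mu_{\ell'}^k$. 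Hence the estimator targets $\bar F^{(k)}_{\mathrm{ball}}:=\dsE_{\ell'}[\mu_{\ell'}^k]$, not $\widetilde{\bar F}^{(k)}$, and these differ in general (Jensen: $\mu_{\ell'}^k\neq \dsE[X^k\mid\ell']$, and neither equals $\widetilde x_{\ell'}^k$).

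The fix is one extra triangle step through the central state: your bound (i)--(ii) gives both $\bigl|\dsE[X^k\mid\ell']-\widetilde x_{\ell'}^k\bigr|\le 4k\sqrt{\epsilon_b}$ and, since $|\mu_{\ell'}-\widetilde x_{\ell'}|\le 4\sqrt{\epsilon_b}$ by averaging (i), also $\bigl|\mu_{\ell'}^k-\widetilde x_{\ell'}^k\bigr|\le 4k\sqrt{\epsilon_b}$. Combining yields $\bigl|\dsE[X^k\mid\ell']-\mu_{\ell'}^k\bigr|\le 8k\sqrt{\epsilon_b}$ per label, hence $\bigl|\scD^{(k)}-\scD^{(k)}_{\mathrm{ball}}\bigr|\le 4\cdot 8k\sqrt{\epsilon_b}=32k\sqrt{\epsilon_b}$ after summing the four $\bar F^{(k)}$ contributions (the cross term counts twice). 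That is where the $32$ comes from; with the estimand corrected, your invocation of Theorem~\ref{thm: sample complexity of MMD-k, upper bound, most general case} at accuracy $\epsilon-32k\sqrt{\epsilon_b}$ is then fully justified, since the conditional-on-label samples are still bounded in $[-1,1]$ with fixed mean $\mu_{\ell'}$, so the concentration argument there applies verbatim.
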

The major insight is that as the states in the same pair of $\epsilon$-balls are assigned with the same label, the number of possible labels decreases from $N^2$ to $n^2$. 
Theorem \ref{thm: sample complexity MMD-k, prior information} shows that the sample complexity can be reduced a lot if $n\ll N$, but $\epsilon_b$ should be small enough to make the model work. Proof of the theorem is in Appendix~\ref{appsub: MMD-k prior information}. 

Theorem \ref{thm: sample complexity MMD-k, prior information} also allows one to generalize our results for pure-state ensembles perturbed by small noise. Specifically, suppose there is a perturbation noise, represented by a depolarizing channel with depolarizing probability $\lambda_b\ll1$
\begin{equation}
    E(\rho) = (1-\lambda_b)\rho+\lambda_b\frac{I}{d},\label{eq: depolarizing channel}
\end{equation}
where $I$ is the identity matrix, $d$ is the dimension of quantum systems. We can always construct a set of pure states close to input $\ket{\phi}$, via $\sqrt{1-\epsilon_b}\ket{\phi}+\sqrt{\epsilon_b}\ket{\psi}$, where $\ket{\psi}$ is uniform Haar random and $\epsilon_b=(1-1/d)\lambda_b$. One can show the average equals the channel output $E(\ketbra{\phi}{\phi})$. Therefore, the output state of such a depolarizing channel corresponds to an $\epsilon$-balls with radius $\epsilon_b$. 

\begin{theorem}[\textbf{Sample complexity of MMD-$k$, close-to-pure mixed states}]
    Consider two $N$-state uniform pure quantum ensembles, the sample complexity to estimate MMD-$k$ between the output ensembles when both ensembles go through the noise channel Eq.(\ref{eq: depolarizing channel}), using the U-stat estimator Eq. (\ref{eq: estimator of MMD-k}) is:
    \begin{equation}
        M = \scO\Big(\big(\frac{k!}{(\epsilon-32k\sqrt{(1-1/d)\lambda_b})^2}\log (\frac{1}{\delta})\big)^{\frac{1}{k}}N^{2-\frac{2}{k}}\Big),\label{eq: M of MMD-k, mixed state, upper bound}
    \end{equation}
    where $\epsilon$ is the additive error between estimated value and true value and $\delta$ is the failure probability. Here we assume $\epsilon > 32k\sqrt{(1-1/d)\lambda_b}$ and $N^2 \ge \frac{k!}{(\epsilon-32k\sqrt{(1-1/d)\lambda_b})^2}\log\frac{1}{\delta}$.
\end{theorem}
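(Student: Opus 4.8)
The plan is to obtain this bound as a direct corollary of Theorem~\ref{thm: sample complexity MMD-k, prior information}, using the identification (recorded in the paragraph preceding the statement) of a depolarizing-channel output with an $\epsilon$-ball ensemble. First I would make that identification precise: for each input pure state $\ket{\phi}$, one checks that the mixture $\dsE_{\ket{\psi}}[\ketbra{\chi}{\chi}]$ over Haar-random $\ket{\psi}$, with $\ket{\chi}\propto\sqrt{1-\epsilon_b}\ket{\phi}+\sqrt{\epsilon_b}\ket{\psi}$, reproduces $E(\ketbra{\phi}{\phi})$ from Eq.~\eqref{eq: depolarizing channel}; the cross terms vanish because $\dsE_{\ket{\psi}}[\ket{\psi}]=0$ and $\dsE_{\ket{\psi}}[\ketbra{\psi}{\psi}]=I/d$, fixing the correspondence $\epsilon_b=(1-1/d)\lambda_b$ with $1-\abs{\braket{\chi}{\phi}}^2\le\epsilon_b$, so each such $\ket{\chi}$ lies in the ball $\scB(\ket{\phi},\epsilon_b)$ of Definition~\ref{def: epsilon ball}.

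Next I would use the key measurement-theoretic observation that applying the SWAP test to the physical mixed state $E(\ketbra{\phi}{\phi})$ produces exactly the same outcome distribution as first drawing a fresh pure state $\ket{\chi}\in\scB(\ket{\phi},\epsilon_b)$ and then SWAP-testing it. Consequently, the SWAP-test data generated from the two noisy $N$-state ensembles is statistically indistinguishable from data generated by two $\epsilon$-ball ensembles built around the $N$ original pure states, with the original state index $x$ serving as the known label of the $\epsilon$-ball containing each sample. Thus the two noisy ensembles satisfy the hypotheses of Theorem~\ref{thm: sample complexity MMD-k, prior information} with the number of balls $n$ taken equal to $N$ (one ball per original state) and radius $\epsilon_b=(1-1/d)\lambda_b$.

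Finally, I would substitute $n\mapsto N$ and $\epsilon_b\mapsto(1-1/d)\lambda_b$ into Eq.~\eqref{eq: M of MMD-k, epsilon ball, upper bound} and into the accompanying side conditions $\epsilon>32k\sqrt{\epsilon_b}$ and $n^2\ge\frac{k!}{(\epsilon-32k\sqrt{\epsilon_b})^2}\log\frac1\delta$, which reproduces Eq.~\eqref{eq: M of MMD-k, mixed state, upper bound} together with the stated assumptions verbatim. The main point requiring care — and the only genuinely nontrivial step — is the reduction itself: one must verify that the normalized superposition $\ket{\chi}$ both reproduces the channel output on average and stays within an $\epsilon$-ball of the claimed radius (tracking normalization and the $(1-1/d)$ factor carefully), and that replacing the genuinely mixed physical states by the equivalent pure-state $\epsilon$-ball ensemble leaves the quantity being estimated and the error/confidence guarantees of Theorem~\ref{thm: sample complexity MMD-k, prior information} unchanged; everything after that is a mechanical substitution of parameters, and the resulting $\sqrt{\lambda_b}$ dependence in the effective error, while likely loose for this special case, is inherited directly from the $\epsilon$-ball analysis.
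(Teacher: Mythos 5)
Your proposal matches the paper's own argument: the paper proves this theorem precisely by identifying each depolarized output $E(\ketbra{\phi}{\phi})$ with an $\epsilon$-ball of pure states $\propto\sqrt{1-\epsilon_b}\ket{\phi}+\sqrt{\epsilon_b}\ket{\psi}$ (Haar-random $\ket{\psi}$, cross terms vanishing on average) of radius $\epsilon_b=(1-1/d)\lambda_b$, and then invoking Theorem~\ref{thm: sample complexity MMD-k, prior information} with $n=N$ balls whose labels are the original state indices. The one step you rightly flag as needing care — that SWAP-testing the physical mixed state is distributionally identical to sampling a pure state from the ball and that the per-ball infidelity bound survives normalization — is exactly the point the paper itself treats somewhat informally, so your proof is, if anything, slightly more explicit than the paper's.
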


\section{Classical analogy}
To reveal the source of the hierarchy of sample complexity, we consider the classical limit, where the ensembles $\scE_1$ and $\scE_2$ consist of only computational basis $\{\ket{0},\ket{1},...,\ket{d}\} \in \dsC^d$. After accessing two states from the ensembles, we can know $X_\ell$ with only one SWAP test. In this oracle scenario, the sample complexity of MMD-$k$ is given as follows (see Appendix~\ref{appsub: proof MMD-k classical} for a proof):
\begin{theorem} [\textbf{Sample Complexity of MMD-$k$, classical limit}]
    Consider two pure quantum ensembles consisting of computational basis states, the sample complexity to estimate MMD-$k$ in the oracle scenario above is:
    \begin{equation}
        M= \Theta\Big(\frac{1}{\epsilon^2} \log (\frac{1}{\delta})\Big),
    \end{equation}
    where $\epsilon$ is the additive error between estimated value and true value, $\delta$ is the failure probability.\label{thm: sample complexity MMD-k, classical}
\end{theorem}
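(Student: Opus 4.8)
The plan is to exploit the fact that in the classical limit the moment index $k$ becomes irrelevant, so that estimating MMD-$k$ collapses to estimating a fixed affine combination of three Bernoulli means, and then to prove matching upper and lower bounds for that elementary problem.

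\emph{Reduction.} Since every state in $\scE_1$ and $\scE_2$ is a computational basis vector, each overlap $X_\ell=\abs{\braket{\psi_i}{\phi_j}}^2$ is the indicator $\mathbf{1}\{\ket{\psi_i}=\ket{\phi_j}\}\in\{0,1\}$, hence $X_\ell^k=X_\ell$ for every $k\ge1$. By Definition~\ref{def:k-MMD} this forces $\bar F^{(k)}(\scE_a,\scE_b)=\dsE_\ell[X_\ell]=:p_{ab}$ for all $k$, so $\scD^{(k)}(\scE_1,\scE_2)=p_{11}+p_{22}-2p_{12}$ is independent of $k$. In the oracle model each sample drawn from block $(a,b)$ returns an i.i.d.\ copy of $X_\ell\sim\mathrm{Bernoulli}(p_{ab})$, the randomness being entirely in which pair of states is drawn. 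Thus estimating MMD-$k$ to additive error $\epsilon$ with failure probability $\delta$ is exactly estimating three Bernoulli means to accuracy $\epsilon/4$ each and taking a fixed affine combination.

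\emph{Upper bound.} Allocate $M/3$ samples to each of the three blocks, form the empirical means $\widehat{p_{ab}}$, and apply Hoeffding's inequality together with a union bound over the three blocks: $M=\scO(\epsilon^{-2}\log(1/\delta))$ suffices to make every $\abs{\widehat{p_{ab}}-p_{ab}}\le\epsilon/4$ with probability at least $1-\delta$, whence $\abs{\widehat{\scD^{(k)}}-\scD^{(k)}}\le\epsilon/4+\epsilon/4+2\cdot\epsilon/4=\epsilon$. This gives the $\scO$ direction.

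\emph{Lower bound and main obstacle.} Take $\scE_2=\{(1,\ket{0})\}$ and the two candidates $\scE_1^{(0)}=\{(\tfrac12,\ket{0}),(\tfrac12,\ket{1})\}$ and $\scE_1^{(1)}=\{(\tfrac12+\tfrac\epsilon2,\ket{0}),(\tfrac12-\tfrac\epsilon2,\ket{1})\}$, so that $p_{12}$ equals $\tfrac12$ versus $\tfrac12+\tfrac\epsilon2$ while $\scD^{(k)}=2(1-p_{12})^2$ differs by $\Theta(\epsilon)$ between the two cases; note $p_{22}\equiv1$ and the $(1,1)$-block means differ only by $\Theta(\epsilon^2)$, so samples from those blocks are strictly less informative than the $(1,2)$ block within any budget $o(\epsilon^{-2}\log(1/\delta))$. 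Any $(\epsilon/3,\delta)$-accurate estimator of $\scD^{(k)}$ then yields a test distinguishing the two candidates with error below $\delta$, and distinguishing $\mathrm{Bernoulli}(\tfrac12)$ from $\mathrm{Bernoulli}(\tfrac12+\tfrac\epsilon2)$ with error $\delta$ requires $\Omega(\epsilon^{-2}\log(1/\delta))$ samples. I expect the delicate point to be this last step: a plain Le Cam / total-variation two-point argument only yields the constant-$\delta$ bound $\Omega(\epsilon^{-2})$, so one must use a KL-divergence (the KL between the two Bernoullis is $\Theta(\epsilon^2)$) or moment-generating-function based converse, or invoke the standard Bernoulli mean-estimation lower bound, to recover the $\log(1/\delta)$ factor — while also checking that the extra samples available in the $(1,1)$ and $(2,2)$ blocks cannot be leveraged to bypass the $(1,2)$ bottleneck. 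Combining the two directions yields $M=\Theta(\epsilon^{-2}\log(1/\delta))$.
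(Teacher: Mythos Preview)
Your approach matches the paper's: both reduce the classical-limit problem to Bernoulli mean estimation, prove the upper bound via Hoeffding on the three empirical means plus a union bound, and obtain the lower bound from the standard two-point problem of distinguishing $\mathrm{Ber}(1/2)$ from $\mathrm{Ber}(1/2+\Theta(\epsilon))$. Your explicit construction of the candidate ensembles and your flagging of the KL-versus-TV subtlety needed to recover the $\log(1/\delta)$ factor are in fact more detailed than the paper's proof, which simply cites the standard Bernoulli lower bound.

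One small correction on the lower-bound side. You argue that the $(1,1)$-block is ``strictly less informative'' than the $(1,2)$-block because $p_{11}$ differs by only $\Theta(\epsilon^2)$ between the two hypotheses. This is true of the overlap value $X_\ell$ alone, but in the paper's sampling model each oracle call also returns the label $\ell=(i,j)$, and in your construction every $(1,1)$-sample therefore reveals \emph{two} independent draws of the index $i\sim\mathrm{Ber}(q)$ rather than one. So the $(1,1)$-block is actually more informative per sample than the $(1,2)$-block, not less. This does not damage the lower bound---each oracle call still yields at most two i.i.d.\ $\mathrm{Ber}(q)$ observations, so the bound changes by at most a constant factor---but the ``bottleneck'' sentence should be rephrased: any $M$ oracle calls produce at most $2M$ Bernoulli$(q)$ samples, and distinguishing $q=\tfrac12$ from $q=\tfrac12+\tfrac{\epsilon}{2}$ with failure probability $\delta$ requires $\Omega(\epsilon^{-2}\log(1/\delta))$ such samples.
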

Theorem~\ref{thm: sample complexity MMD-k, classical} shows that without the uncertainty coming from the measurement of quantum states, there is no hierarchy in the sample complexity of estimating MMD-$k$.

In the classical setting, there is a lack of hierarchy of sample complexity as the discriminative power of MMD increases. For MMD in a reproducing kernel Hilbert space (RKHS), the representation power depends on the choice of kernel. For instance, MMD with the Gaussian radial basis function kernel is a metric on the probability space, while the polynomial kernels with finite order do not result in an MMD metric. Nevertheless, they all have the same learning sample complexity, with unbounded polynomial kernels subject to certain extra regularity conditions~\citep{MMD_JMLR:v13:gretton12a,TahmasebiJegelka2024ICML}. This is in sharp contrast with our new results, as we can see the transition of sample complexity in ensemble size. 

While our results are specific to quantum data, they illustrate a broader phenomenon: when access to data is constrained by physical or algorithmic limitations, increased discriminative power may necessarily incur higher statistical cost.

\section{Conclusion and Discussion}

In this work, we revealed a hierarchy of discriminative power, in trade off with sample complexity. 
A few open questions are worth exploring, including examining MMD-$k$ with non-integer values of $k$ and understanding the relationship between MMD-$k$ and Wasserstein distances. 
More broadly, our work suggests that hierarchies of loss functions may be unavoidable in learning settings where only partial or noisy access to data is available.

\section*{acknowledgement}

X.C. acknowledges support from NSF DMS-2413404 and an unrestricted gift from the Simons Foundation. J.Y. and Q.Z. acknowledge support from Office of Naval Research (N00014-23-1-2296, MURI N000142612102), DARPA (HR0011-24-9-0362,HR00112490453,D24AC00153-02), NSF (OMA-2326746, 2350153, CCF-2240641), AFOSR MURI FA9550-24-1-0349, Halliburton Company and an unrestricted gift from Google.

\bibliography{example_paper}
\bibliographystyle{apsrev4-2}

\newpage
\appendix
\onecolumngrid
\section{Introduction of quantum circuits}\label{app: quantum circuit}
Here we provide more details to interpret the quantum circuit in Fig.~\ref{fig:swap-test}. As we introduce in Section~\ref{intro:quantum states}, an $n$-qubit quantum state is a $2^n$-dimensional vector. Quantum circuits are therefore described by unitary matrix acting on the vectors. For convenience, we express a quantum circuit via the circuit diagram. Common single-qubit gates include the Hadamard gate $\begin{quantikz}
&\gate{H}&
\end{quantikz} = \frac{1}{\sqrt{2}}
\begin{pmatrix}
1 & 1 \\
1 & -1
\end{pmatrix}$, the Pauli-X gate (the NOT gate) $\begin{quantikz}
&\gate{X}&
\end{quantikz} =
\begin{pmatrix}
0 & 1 \\
1 & 0
\end{pmatrix}$ and the Pauli-Z gate $\begin{quantikz}
&\gate{Z}&
\end{quantikz} =
\begin{pmatrix}
1 & 0 \\
0 & -1
\end{pmatrix}$. Besides single qubit gates, two-qubit gates are necessary for quantum computation. A common choice of the two-qubit gate is the controlled-NOT gate (CNOT) 
$
\begin{quantikz}
& \ctrl{1}  &    \\
& \targ{-1}  &              
\end{quantikz}
=\ketbra{0}{0}\otimes I+\ketbra{1}{1}\otimes X$. Another common two-qubit gate is the SWAP gate $\begin{quantikz}
&\swap{1}  & \\
&\swap{-1} &                
\end{quantikz}$ that exchanges the quantum state between the two systems, i.e. ${\rm SWAP} (\ket{\phi}\otimes\ket{\psi})=(\ket{\psi}\otimes\ket{\phi})$. The SWAP gate can be decomposed into three CNOT gates, represented by the circuit,  
\begin{equation}
\begin{quantikz}
&\swap{1}  & \\
&\swap{-1} &                
\end{quantikz}=\begin{quantikz}
 & \ctrl{1} & \targ{}  & \ctrl{1} & \qw \\
 & \targ{}  & \ctrl{-1}& \targ{}  & \qw
\end{quantikz}.
\end{equation}
The above is an example of how a quantum circuit diagram conveniently express the unitary transform on two qubits. Note that SWAP gate can be directly generalized to act on two systems, each with $n$ qubits, which can be realized by a combination of pairwise SWAP. In Fig.~\ref{fig:swap-test}, we also adopt a controlled-SWAP gate (Fredkin gate) for multiple qubits,
\begin{equation}
\begin{quantikz}[row sep=0.5cm, column sep=0.55cm]
& \ctrl{1}  &    \\
 \qw      & \swap{1}  & \qw           \\
 \qw      & \swap{-1} & \qw         
\end{quantikz}
=\ketbra{0}{0}\otimes I + \ketbra{1}{1}\otimes {\rm SWAP},
\end{equation}
which SWAP the states of the two quantum systems only when the control qubit is in $\ket{1}$ state. For more details on the SWAP test quantum circuits, readers can refer to, e.g. a recent review on SWAP test~\cite{nishimura2025survey}.

\section{Proofs of theorems and propositions}\label{app: proofs}
\subsection{Properties of MMD-$k$: Proofs of Proposition \ref{prop:DK}, Theorem \ref{thm: hierarachy of MMD-k} and Theorem \ref{thm: k to reach full disriminative power}}\label{appsub: proofs of MMD-k's properties}
\subsubsection{Proof of Proposition \ref{prop:DK}}
\begin{proof}
First, we have
\begin{equation*}
    \abs{\braket{\psi}{\phi}}^2=\braket{\psi}{\phi}\braket{\phi}{\psi}=\Tr(\rho\sigma),
\end{equation*}
where $\rho=\ketbra{\psi}{\psi}$ and $\sigma=\ketbra{\phi}{\phi}$, then we have
\begin{align*}
    \bar{F}^{(k)}(\scE_a, \scE_b) 
    &= \dsE_{\ket{\psi}\sim\scE_a, \ket{\phi}\sim\scE_b} [\abs{\braket{\psi}{\phi}}^{2k}]\\
    &= \dsE_{\rho\sim\scE_a, \sigma\sim\scE_b}[\Tr^k(\rho\sigma)]=\dsE_{\rho\sim\scE_a, \sigma\sim\scE_b}[\Tr(\rho^{\otimes k}\sigma^{\otimes k})]\\
    &= \Tr(\dsE_{\rho\sim\scE_a, \sigma\sim\scE_b}[\rho^{\otimes k}\sigma^{\otimes k}]) = \Tr(\dsE_{\rho\sim\scE_a}[\rho^{\otimes k}] \dsE_{\sigma\sim\scE_b}[\sigma^{\otimes k}]).
\end{align*}
Hence,
\begin{align*}
    \scD^{(k)}(\scE_1,\scE_2) 
    &= \bar{F}^{(k)}(\scE_1, \scE_1)+\bar{F}^{(k)}(\scE_2, \scE_2) - 2\bar{F}^{(k)}(\scE_1, \scE_2)\\
    &=\Tr(\dsE_{\rho\sim\scE_1}[\rho^{\otimes k}] \dsE_{\sigma\sim\scE_1}[\sigma^{\otimes k}]) + \Tr(\dsE_{\rho\sim\scE_2}[\rho^{\otimes k}] \dsE_{\sigma\sim\scE_2}[\sigma^{\otimes k}]) -2 \Tr(\dsE_{\rho\sim\scE_1}[\rho^{\otimes k}] \dsE_{\sigma\sim\scE_2}[\sigma^{\otimes k}])\\
    &= \Tr((\dsE_{\rho\sim\scE_1}[\rho^{\otimes k}]- \dsE_{\sigma\sim\scE_2}[\sigma^{\otimes k}])^2).
\end{align*}
\end{proof}
\subsubsection{Proof of Theorem \ref{thm: hierarachy of MMD-k}}
\begin{proof}
    First part:
    if $\dsE_{\rho\sim\scE_1}[\rho^{\otimes k}] = \dsE_{\sigma\sim\scE_2}[\sigma^{\otimes k}]$, obviously $\scD^{(k)}(\scE_1, \scE_2) = \Tr((\dsE_{\rho\sim\scE_1}[\rho^{\otimes k}]- \dsE_{\sigma\sim\scE_2}[\sigma^{\otimes k}])^2)=0$, if $\scD^{(k)}(\scE_1, \scE_2) =0$, we write the spectrum decomposition (obviously $\dsE_{\rho\sim\scE_1}[\rho^{\otimes k}]- \dsE_{\sigma\sim\scE_2}[\sigma^{\otimes k}]$ is a hermitian operator):
    \begin{equation*}
        \dsE_{\rho\sim\scE_1}[\rho^{\otimes k}]- \dsE_{\sigma\sim\scE_2}[\sigma^{\otimes k}] = \sum_i\lambda_i\ketbra{\lambda_i}{\lambda_i},
    \end{equation*}
    where $\{\lambda_i\}$ are real eigenvalues and $\{\ket{\lambda_i}\}$ are corresponding eigenvectors that can form an orthonormal basis. Then we have
    \begin{align*}
        \scD^{(k)}(\scE_1, \scE_2)
        &=\Tr((\dsE_{\rho\sim\scE_1}[\rho^{\otimes k}]- \dsE_{\sigma\sim\scE_2}[\sigma^{\otimes k}])^2)\\
        &=\Tr(\sum_i\lambda_i^2\ketbra{\lambda_i}{\lambda_i})=\sum_i\lambda_i^2=0.
    \end{align*}
    So every $\lambda_i=0$, $\dsE_{\rho\sim\scE_1}[\rho^{\otimes k}]- \dsE_{\sigma\sim\scE_2}[\sigma^{\otimes k}]=0$, $\dsE_{\rho\sim\scE_1}[\rho^{\otimes k}]=\dsE_{\sigma\sim\scE_2}[\sigma^{\otimes k}]$.

    Second part: when $\scD^{(k)}(\scE_1, \scE_2)=0$, we have $\dsE_{\rho\sim\scE_1}[\rho^{\otimes k}]=\dsE_{\sigma\sim\scE_2}[\sigma^{\otimes k}]$. Then we have $\dsE_{\rho\sim\scE_1}[\rho^{\otimes k'}]=\dsE_{\sigma\sim\scE_2}[\sigma^{\otimes k'}]$ for $k'\le k$ by partial trace and  we also have $\scD^{(k')}(\scE_1, \scE_2)=0$ for $k'\le k$. By Definition \ref{def: discriminative power} and Definition \ref{def: discriminative power moments}, $\scP(\scD^{(k)})\ge\scP(\scD^{(k')})$ because that if the pair$(\scE_1,\scE_2)$ can not be discriminated by $\scD^{(k)}$ ($\scD^{(k)}(\scE_1, \scE_2)=0$), then it can not be discriminated by $\scD^{(k')}$, neither ($\scD^{(k')}(\scE_1, \scE_2)=0$), for $k'\le k$.
\end{proof}

\subsubsection{Proof of Theorem \ref{thm: k to reach full disriminative power}}

We denote the $k$-th moment $\dsE_{\rho\sim\scE}[\rho^{\otimes k}]$ as $M_k(\scE)$. Notice that for $\rho^{\otimes k} = \ket{\psi}^{\otimes k} \bra{\psi}^{\otimes k}$, $\ket{\psi}^{\otimes k}$ lives in the space of symmetric $k$-fold tensor power $\mathrm{Sym}^k(\dsC^d) \subset (\dsC^d) ^{\otimes k}$.

\begin{lemma}[Homogeneous polynomials and symmetric tensors]
Let $f$ be a homogeneous polynomial of degree $t$ of the elements of $\ket{\psi}\in\mathbb C^d$.
Then there exists a vector $\lvert v_f\rangle \in \mathrm{Sym}^t(\mathbb C^d)$ such that
\[
f(\ket{\psi}) = \braket{v_f}{\psi}^{\otimes t}.
\]
Moreover, for any ensemble $\mathcal E$,
\[
\dsE_{\ket{\psi}\sim\mathcal E}\bigl[\lvert f(\ket{\psi})\rvert^2\bigr]
=
\mathrm{Tr}\!\left( \lvert v_f\rangle\langle v_f\rvert \, M_t(\mathcal E)\right).
\]
Hence $M_t(\mathcal E_1)=M_t(\mathcal E_2)$ implies
$\dsE_{\mathcal E_1}[\lvert f\rvert^2]=\dsE_{\mathcal E_2}[\lvert f\rvert^2]$
for all homogeneous degree-$t$ polynomials $f$. \label{lemma: homogeneous and sym}
\end{lemma}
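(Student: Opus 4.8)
The plan is to prove the three assertions of Lemma~\ref{lemma: homogeneous and sym} in sequence, all of which follow from the standard duality between homogeneous polynomials of degree $t$ on $\mathbb C^d$ and linear functionals on the symmetric subspace $\mathrm{Sym}^t(\mathbb C^d)$.

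\textbf{Step 1: Construct $\lvert v_f\rangle$.} First I would expand $\ket{\psi}^{\otimes t}$ in the computational basis: its components are the degree-$t$ monomials $\psi_{i_1}\cdots\psi_{i_t}$, and the projection onto $\mathrm{Sym}^t(\mathbb C^d)$ collects permutation-equivalent index tuples. Writing $f(\ket\psi)=\sum_{\alpha}c_\alpha \psi^\alpha$ as a sum over multi-indices $\alpha$ with $\lvert\alpha\rvert=t$, I would simply read off the coefficients: set $\lvert v_f\rangle \in \mathrm{Sym}^t(\mathbb C^d)$ to be the vector whose inner product with the symmetrized basis vector corresponding to $\alpha$ equals $\overline{c_\alpha}$ (with the appropriate multinomial normalization, so that $\braket{v_f}{\psi}^{\otimes t}=\sum_\alpha c_\alpha\psi^\alpha=f(\ket\psi)$). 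This is a linear-algebra bookkeeping step; the only care needed is tracking the multinomial coefficients that arise because $\ket\psi^{\otimes t}$ is not normalized inside $\mathrm{Sym}^t$.

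\textbf{Step 2: The expectation identity.} Given Step~1, for any ensemble $\mathcal E$ I would compute
\[
\dsE_{\ket\psi\sim\mathcal E}\bigl[\lvert f(\ket\psi)\rvert^2\bigr]
= \dsE_{\ket\psi\sim\mathcal E}\bigl[\braket{v_f}{\psi}^{\otimes t}\braket{\psi}{v_f}^{\otimes t}\bigr]
= \dsE_{\ket\psi\sim\mathcal E}\bigl[\langle v_f\rvert \bigl(\ketbra{\psi}{\psi}\bigr)^{\otimes t}\lvert v_f\rangle\bigr],
\]
and then pull the expectation inside (linearity) to obtain $\langle v_f\rvert M_t(\mathcal E)\lvert v_f\rangle = \mathrm{Tr}\!\left(\ketbra{v_f}{v_f}\,M_t(\mathcal E)\right)$, using $\bigl(\ketbra{\psi}{\psi}\bigr)^{\otimes t}=\rho^{\otimes t}$ with $\rho=\ketbra\psi\psi$. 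This is immediate once the first step is in place.

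\textbf{Step 3: The consequence.} The final claim is now a one-liner: if $M_t(\mathcal E_1)=M_t(\mathcal E_2)$ then $\mathrm{Tr}(\ketbra{v_f}{v_f}M_t(\mathcal E_1))=\mathrm{Tr}(\ketbra{v_f}{v_f}M_t(\mathcal E_2))$ for every $f$, so the two $\lvert f\rvert^2$-expectations agree for all homogeneous degree-$t$ polynomials. The main obstacle, such as it is, is purely notational: getting the normalization of $\lvert v_f\rangle$ and the symmetrization convention consistent so that $\braket{v_f}{\psi}^{\otimes t}$ reproduces $f$ exactly rather than up to a combinatorial factor. There is no real analytic difficulty, since everything lives in finite dimensions and reduces to the surjectivity of the map $\lvert v\rangle\mapsto(\ket\psi\mapsto\braket{v}{\psi}^{\otimes t})$ from $\mathrm{Sym}^t(\mathbb C^d)$ onto the space of homogeneous degree-$t$ polynomials; I would if needed invoke this surjectivity explicitly to justify that \emph{every} such $f$ arises from some $\lvert v_f\rangle$.
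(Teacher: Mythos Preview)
Your proposal is correct and follows essentially the same route as the paper: both establish the standard identification between degree-$t$ homogeneous polynomials and vectors in $\mathrm{Sym}^t(\mathbb C^d)$, then compute $\lvert f(\ket\psi)\rvert^2=\langle v_f\rvert\rho^{\otimes t}\lvert v_f\rangle$ and take the expectation. The only organizational difference is that the paper first writes down a (generally non-symmetric) coefficient vector $\ket{v}\in(\mathbb C^d)^{\otimes t}$ and then applies the symmetrizer $\Pi_{\mathrm{sym}}$, using $\Pi_{\mathrm{sym}}\ket{\psi}^{\otimes t}=\ket{\psi}^{\otimes t}$ to avoid exactly the multinomial bookkeeping you flag as the main nuisance.
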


\begin{proof}
For the first claim, fix an orthonormal basis $\{\ket{1},\dots,\ket{d}\}$ of $\mathbb C^d$ and write
\[
\ket{\psi}=\sum_{r=1}^d \psi_r \ket{r},\qquad \psi_r\in\mathbb C.
\]
Then
\[
\ket{\psi}^{\otimes t}
=
\sum_{i_1,\dots,i_t=1}^d \psi_{i_1}\cdots \psi_{i_t}\,
\ket{i_1}\otimes\cdots\otimes \ket{i_t}
=
\sum_{i_1,\dots,i_t=1}^d \psi_{i_1}\cdots \psi_{i_t}\,\ket{i_1,\dots,i_t},
\]
where $\ket{i_1,\dots,i_t}:=\ket{i_1}\otimes\cdots\otimes\ket{i_t}$.

Let $f$ be a homogeneous polynomial of degree $t$ in the coordinates $(\psi_1,\dots,\psi_d)$.
Then $f$ can be written (after reindexing monomials) as
\[
f(\ket{\psi})=\sum_{i_1,\dots,i_t=1}^d c_{i_1\cdots i_t}\,\psi_{i_1}\cdots \psi_{i_t}
\]
for some coefficients $c_{i_1\cdots i_t}\in\mathbb C$.
Define a vector $\ket{v}\in(\mathbb C^d)^{\otimes t}$ by
\[
\ket{v}:=\sum_{i_1,\dots,i_t=1}^d c_{i_1\cdots i_t}^*\ket{i_1,\dots,i_t}.
\]
We have
$\bra{v} i_1,\dots,i_t\rangle=c_{i_1\cdots i_t}$, and hence
\[
\braket{v}{\psi}^{\otimes t}
=
\sum_{i_1,\dots,i_t=1}^d c_{i_1\cdots i_t}\,\psi_{i_1}\cdots \psi_{i_t}
=
f(\ket{\psi}).
\]
Thus $f(\ket{\psi})$ can be written as a linear functional of $\ket{\psi}^{\otimes t}$.

It remains to ensure that the representing vector may be chosen symmetric. For each permutation
$\pi\in S_t$, let $U_\pi$ be the unitary that permutes tensor factors:
\[
U_\pi\big(\ket{x_1}\otimes\cdots\otimes\ket{x_t}\big)
=
\ket{x_{\pi^{-1}(1)}}\otimes\cdots\otimes\ket{x_{\pi^{-1}(t)}}.
\]
Define the symmetrizer
\[
\Pi_{\mathrm{sym}}:=\frac{1}{t!}\sum_{\pi\in S_t} U_\pi,
\]
whose range is $\mathrm{Sym}^t(\mathbb C^d)$. Since $\ket{\psi}^{\otimes t}$ is invariant under
permutations of the $t$ identical factors, we have $U_\pi\ket{\psi}^{\otimes t}=\ket{\psi}^{\otimes t}$
for all $\pi$, and therefore
\[
\Pi_{\mathrm{sym}}\ket{\psi}^{\otimes t}=\ket{\psi}^{\otimes t}.
\]
Using that $\Pi_{\mathrm{sym}}$ is self-adjoint, we obtain
\[
f(\ket{\psi})
=
\braket{v}{\psi}^{\otimes t}
=
\bra{v}\Pi_{\mathrm{sym}}\ket{\psi}^{\otimes t}
=
\braket{\Pi_{\mathrm{sym}}v}{\psi}^{\otimes t}.
\]
Hence, setting $\ket{v_f}:=\Pi_{\mathrm{sym}}\ket{v}\in \mathrm{Sym}^t(\mathbb C^d)$ yields
\[
f(\ket{\psi})=\braket{v_f}{\psi}^{\otimes t},
\]
which proves the first claim.

For the second claim,
\[
\lvert f(\ket{\psi})\rvert^2
=
\abs{\braket{v_f}{\psi}^{\otimes t}}^2
=
\bra{\psi}^{\otimes t} \lvert v_f\rangle\langle v_f\rvert \ \ket{\psi}^{\otimes t}
=
\mathrm{Tr}\!\left(\lvert v_f\rangle\langle v_f\rvert \, \rho_{\ket{\psi}}^{\otimes t}\right).
\]
Taking expectation over $\ket{\psi}\sim\mathcal E$ yields
$\dsE[\lvert f(\ket{\psi})\rvert^2]=\mathrm{Tr}(\lvert v_f\rangle\langle v_f\rvert\,M_t(\mathcal E))$.
\end{proof}

\begin{lemma}[Hyperplane separation for distinct pure states; $d\ge 2$]
Let $\ket{x},\ket{y}\in\mathbb C^d$ be two distinct unit vectors (pure quantum states), i.e.,
$\ket{x}\neq e^{i\theta}\ket{y}$ for all $\theta\in\mathbb R$. Then there exists a vector $\ket{a}\in\mathbb C^d$
such that
\[
\braket{a}{y}=0
\quad\text{and}\quad
\braket{a}{x}\neq 0.
\]
Equivalently, the linear functional $\ell(\ket{\psi}) := \braket{a}{\psi}$ vanishes at $\ket{y}$ but not at $\ket{x}$. \label{lemma: hyperplane}
\end{lemma}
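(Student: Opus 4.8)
The plan is to reduce the statement to elementary geometry of the orthogonal complement of $\ket{y}$ in $\mathbb C^d$. Since $d\ge 2$, the subspace $\ket{y}^{\perp}:=\{\ket{v}\in\mathbb C^d:\braket{y}{v}=0\}$ has dimension $d-1\ge 1$, hence is nontrivial. Any vector $\ket{a}$ with $\braket{a}{y}=0$ must lie in $\ket{y}^{\perp}$, so the whole task is to produce a vector in $\ket{y}^{\perp}$ that fails to be orthogonal to $\ket{x}$.

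First I would split $\ket{x}$ along $\ket{y}$ and its complement: set $\alpha:=\braket{y}{x}$ and $\ket{x_\perp}:=\ket{x}-\alpha\ket{y}$, so that $\ket{x}=\alpha\ket{y}+\ket{x_\perp}$ with $\ket{x_\perp}\in\ket{y}^{\perp}$. The key observation is that $\ket{x_\perp}\neq 0$. Indeed, the Pythagorean identity gives $1=\braket{x}{x}=\abs{\alpha}^2+\norm{\ket{x_\perp}}^2$; if $\ket{x_\perp}=0$ then $\abs{\alpha}^2=1$, so $\alpha=e^{\ii\theta}$ for some $\theta\in\mathbb R$ and $\ket{x}=e^{\ii\theta}\ket{y}$, contradicting the hypothesis that $\ket{x}$ and $\ket{y}$ are distinct pure states. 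This normalization step — upgrading ``$\ket{x}$ is a scalar multiple of $\ket{y}$'' to ``$\ket{x}=e^{\ii\theta}\ket{y}$'' — is the only place any care is needed; there is no substantive obstacle in the proof.

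Then I would simply take $\ket{a}:=\ket{x_\perp}$. By construction $\braket{a}{y}=\braket{x_\perp}{y}=0$, and by linearity of the inner product in the second slot together with $\braket{x_\perp}{y}=0$, $\braket{a}{x}=\braket{x_\perp}{x}=\alpha\braket{x_\perp}{y}+\braket{x_\perp}{x_\perp}=\norm{\ket{x_\perp}}^2>0$, which is strictly positive exactly because $\ket{x_\perp}\neq 0$. This proves the claim, and the ``equivalently'' clause is immediate upon setting $\ell(\ket{\psi}):=\braket{a}{\psi}$. (Alternatively, one could argue by contradiction: if every $\ket{a}\in\ket{y}^{\perp}$ satisfied $\braket{a}{x}=0$, then $\ket{x}\in(\ket{y}^{\perp})^{\perp}=\mathrm{span}\{\ket{y}\}$, and the same normalization argument finishes; the explicit construction above is cleaner and also yields the useful quantitative fact $\abs{\braket{a}{x}}=1-\abs{\braket{y}{x}}^2$.)
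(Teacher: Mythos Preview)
Your proof is correct and rests on the same core idea as the paper's: work in the orthogonal complement $\ket{y}^\perp$ and use that $\ket{x}\notin\mathrm{span}\{\ket{y}\}$. The only presentational difference is that the paper argues by contradiction (if every $\ket{a}\in Y^\perp$ killed $\ket{x}$ then $\ket{x}\in(Y^\perp)^\perp=\mathrm{span}\{\ket{y}\}$), whereas you give the explicit witness $\ket{a}=\ket{x}-\braket{y}{x}\ket{y}$; you even note the paper's contradiction route as your parenthetical alternative, so the two proofs are essentially interchangeable.
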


\begin{proof}
Define the orthogonal complement of $\ket{y}$,
\[
Y^\perp := \bigl\{\,\ket{a}\in\mathbb C^d : \braket{a}{y}=0\,\bigr\}.
\]
Since $d\ge 2$ and $\ket{y}\neq 0$, we have $\dim(Y^\perp)=d-1\ge 1$, so $Y^\perp$ contains nonzero vectors.

Consider the additional constraint $\braket{a}{x}=0$. The set
\[
Z := \bigl\{\,\ket{a}\in Y^\perp : \braket{a}{x}=0\,\bigr\}
\]
is a linear subspace of $Y^\perp$. We claim that $Z$ is a \emph{proper} subspace of $Y^\perp$.

Indeed, if $Z=Y^\perp$, then $\braket{a}{x}=0$ for all $\ket{a}\in Y^\perp$, which implies
$\ket{x}\in (Y^\perp)^\perp = \mathrm{span}\{\ket{y}\}$.
Thus $\ket{x}=c\ket{y}$ for some $c\in\mathbb C\setminus\{0\}$, and because both $\ket{x}$ and $\ket{y}$ are unit,
we must have $c=e^{i\theta}$ for some $\theta\in\mathbb R$, contradicting the assumption that
$\ket{x}\neq e^{i\theta}\ket{y}$.

Therefore $Z\subsetneq Y^\perp$, so there exists $\ket{a}\in Y^\perp\setminus Z$.
By construction, $\braket{a}{y}=0$ and $\braket{a}{x}\neq 0$, as desired.
\end{proof}

\begin{theorem}[Sufficiency: $k=N$ gives full discriminative power for ensembles with at most $N$ pure states]
Fix $d\ge 2$ and $N\in\mathbb N$.
Let
\[
\mathcal E_1=\{(p_i,\ket{\psi_i})\}_{i=1}^{N_1},
\qquad
\mathcal E_2=\{(q_j,\ket{\phi_j})\}_{j=1}^{N_2}
\]
be two quantum ensembles in $\mathbb C^d$, where $\ket{\psi_i},\ket{\phi_j}$ are unit vectors, all states in each
ensemble are distinct, and $(p_i)_{i=1}^{N_1}$, $(q_j)_{j=1}^{N_2}$ are probability distributions (in particular,
$p_i>0$, $q_j>0$, and $\sum_i p_i=\sum_j q_j=1$). Assume $N_1,N_2\le N$.
If
\[
M_N(\mathcal E_1)=M_N(\mathcal E_2),
\]
then $\mathcal E_1=\mathcal E_2$ up to a permutation of indices (i.e., the two ensembles have the same states with
the same weights).\label{thm: k=N full discriminative power}
\end{theorem}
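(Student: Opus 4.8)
The plan is to convert the moment hypothesis into a statement about squared magnitudes of low-degree polynomials, and then to isolate individual ensemble elements using the hyperplane lemma. Concretely, first I would note that tracing out any $N-k$ tensor factors of $M_N(\scE)=\dsE[\rho^{\otimes N}]$ returns $\dsE[\rho^{\otimes k}\Tr(\rho)^{N-k}]=M_k(\scE)$ since $\Tr(\rho)=1$; hence $M_N(\scE_1)=M_N(\scE_2)$ forces $M_k(\scE_1)=M_k(\scE_2)$ for every $0\le k\le N$. Combining this with Lemma~\ref{lemma: homogeneous and sym} (which rewrites $\dsE_{\scE}[\abs{f}^2]$ as $\Tr(\ketbra{v_f}{v_f}M_k(\scE))$ for a homogeneous degree-$k$ polynomial $f$) yields the key identity $\dsE_{\scE_1}[\abs{f}^2]=\dsE_{\scE_2}[\abs{f}^2]$ for every homogeneous polynomial $f$ of degree at most $N$ in the amplitudes of $\ket{\psi}$; call this $(\star)$.

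Next I would use $(\star)$ to show the two ensembles share the same set of states. Suppose, for contradiction, that some $\ket{\psi_{i_0}}\in\scE_1$ is not proportional to any $\ket{\phi_j}\in\scE_2$. Since $\scE_2$ has $N_2\le N$ states, each distinct from $\ket{\psi_{i_0}}$, Lemma~\ref{lemma: hyperplane} provides, for each $j$, a vector $\ket{b_j}$ with $\braket{b_j}{\phi_j}=0$ and $\braket{b_j}{\psi_{i_0}}\ne0$. The product $f(\ket{\psi})=\prod_{j=1}^{N_2}\braket{b_j}{\psi}$ is homogeneous of degree $N_2\le N$, vanishes at every $\ket{\phi_j}$, and is nonzero at $\ket{\psi_{i_0}}$. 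Then $(\star)$ gives $0<p_{i_0}\abs{f(\ket{\psi_{i_0}})}^2\le\dsE_{\scE_1}[\abs{f}^2]=\dsE_{\scE_2}[\abs{f}^2]=0$, a contradiction. By symmetry every state of $\scE_2$ also appears in $\scE_1$, and since states within an ensemble are distinct we obtain $N_1=N_2=:m\le N$ with a common list of states $\ket{\chi_1},\dots,\ket{\chi_m}$ carrying weights $p_\ell$ in $\scE_1$ and $q_\ell$ in $\scE_2$.

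Finally I would match weights by the same mechanism: for a fixed $\ell_0$, Lemma~\ref{lemma: hyperplane} yields vectors $\ket{a_\ell}$ ($\ell\ne\ell_0$) with $\braket{a_\ell}{\chi_\ell}=0$ and $\braket{a_\ell}{\chi_{\ell_0}}\ne0$, so $g(\ket{\psi})=\prod_{\ell\ne\ell_0}\braket{a_\ell}{\psi}$ is homogeneous of degree $m-1\le N$, vanishes at every $\ket{\chi_\ell}$ with $\ell\ne\ell_0$, and is nonzero at $\ket{\chi_{\ell_0}}$; then $(\star)$ reduces to $p_{\ell_0}\abs{g(\ket{\chi_{\ell_0}})}^2=q_{\ell_0}\abs{g(\ket{\chi_{\ell_0}})}^2$, so $p_{\ell_0}=q_{\ell_0}$, and letting $\ell_0$ vary completes the proof. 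The point requiring the most care is staying within the degree budget $N$: this is exactly where the hypothesis $N_1,N_2\le N$ is used, and it is crucial to separate against a single ensemble at a time, using at most $N$ linear factors, rather than against the union of both ensembles, which could require up to $2N$ factors and would invalidate the appeal to $(\star)$.
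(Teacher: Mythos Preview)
Your proposal is correct and follows essentially the same approach as the paper: reduce the moment equality to all lower orders by partial trace, invoke Lemma~\ref{lemma: homogeneous and sym} to translate this into equality of $\dsE[|f|^2]$ for homogeneous polynomials of degree $\le N$, then use products of linear functionals supplied by Lemma~\ref{lemma: hyperplane} to first force the state sets to coincide and then to isolate each weight. Your closing remark about the degree budget (separating against one ensemble at a time so that only $N_2\le N$ or $m-1\le N$ factors are needed) is exactly the point the paper's argument hinges on as well.
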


\begin{proof}
Assume $M_N(\mathcal E_1)=M_N(\mathcal E_2)$. By Theorem~\ref{thm: hierarachy of MMD-k},we have $M_t(\mathcal E_1)=M_t(\mathcal E_2)$ for all integers $1\le t\le N$.

\medskip
\noindent\textit{Step 1: the sets of states coincide.}
Suppose, for contradiction, that there exists a state $\ket{\psi_\star}$ among $\{\ket{\psi_i}\}_{i=1}^{N_1}$
such that $\ket{\psi_\star}\neq e^{i\theta}\ket{\phi_j}$ for every $j\in\{1,\dots,N_2\}$ and every $\theta\in\mathbb R$.

For each $j\in\{1,\dots,N_2\}$, apply Lemma~\ref{lemma: hyperplane}
to the pair $(\ket{x},\ket{y})=(\ket{\psi_\star},\ket{\phi_j})$. This gives a vector $\ket{a_j}\in\mathbb C^d$ such that
\[
\braket{a_j}{\phi_j}=0
\quad\text{and}\quad
\braket{a_j}{\psi_\star}\neq 0.
\]
Define the linear functional $\ell_j(\ket{\psi}) := \braket{a_j}{\psi}$ and the function
\[
f(\ket{\psi}) := \prod_{j=1}^{N_2} \ell_j(\ket{\psi})
= \prod_{j=1}^{N_2} \braket{a_j}{\psi}.
\]
Each factor $\ell_j(\ket{\psi})$ is a homogeneous polynomial of degree $1$ in the coordinates of $\ket{\psi}$,
so $f$ is a homogeneous polynomial of degree $N_2$.

Now observe:
\begin{itemize}
\item For each $j$, since $\ell_j(\ket{\phi_j})=\braket{a_j}{\phi_j}=0$, we have $f(\ket{\phi_j})=0$.
Therefore,
\[ 
\dsE_{\ket{\phi}\sim\mathcal E_2}\!\big[\Abs{f(\ket{\phi})}^2\big]
= \sum_{j=1}^{N_2} q_j\,\Abs{f(\ket{\phi_j})}^2
=0.
\]
\item On the other hand, $\ell_j(\ket{\psi_\star})=\braket{a_j}{\psi_\star}\neq 0$ for every $j$, hence
$f(\ket{\psi_\star})\neq 0$. Since $p_\star>0$ is the weight of $\ket{\psi_\star}$ in $\mathcal E_1$,
\[
\dsE_{\ket{\psi}\sim\mathcal E_1}\!\big[\Abs{f(\ket{\psi})}^2\big]
= \sum_{i=1}^{N_1} p_i\,\Abs{f(\ket{\psi_i})}^2
\ge p_\star\,\Abs{f(\ket{\psi_\star})}^2
>0.
\]
\end{itemize}

By Lemma~\ref{lemma: homogeneous and sym}, there exists $\ket{v_f}\in\mathrm{Sym}^{N_2}(\mathbb C^d)$
such that
\[
\dsE_{\ket{\psi}\sim\mathcal E}\!\big[\Abs{f(\ket{\psi})}^2\big]
=
\mathrm{Tr}\!\Big(\ketbra{v_f}{v_f}\,M_{N_2}(\mathcal E)\Big)
\quad\text{for any ensemble }\mathcal E.
\]
Hence the strict inequality
\[
\dsE_{\mathcal E_1}\!\big[\Abs{f}^2\big] \neq \dsE_{\mathcal E_2}\!\big[\Abs{f}^2\big]
\]
implies
\[
\mathrm{Tr}\!\Big(\ketbra{v_f}{v_f}\,M_{N_2}(\mathcal E_1)\Big)
\neq
\mathrm{Tr}\!\Big(\ketbra{v_f}{v_f}\,M_{N_2}(\mathcal E_2)\Big),
\]
and therefore $M_{N_2}(\mathcal E_1)\neq M_{N_2}(\mathcal E_2)$.
But $N_2\le N$ and we already have $M_t(\mathcal E_1)=M_t(\mathcal E_2)$ for all $t\le N$, a contradiction.
Thus, every state in $\mathcal E_1$ must coincide with some state in $\mathcal E_2$ up to global phase.

By symmetry (swapping the roles of $\mathcal E_1$ and $\mathcal E_2$), every state in $\mathcal E_2$ must also coincide
with some state in $\mathcal E_1$ up to global phase. Therefore the two ensembles contain the same set of states (up to
a permutation of indices and global phases), and in particular $N_1=N_2$.

\medskip
\noindent\textit{Step 2: the weights coincide.}
After relabeling, assume $\ket{\psi_i}=e^{i\theta_i}\ket{\phi_i}$ for all $i\in\{1,\dots,N_1\}$.
Fix an index $i$. For each $j\neq i$, apply Lemma~\ref{lemma: hyperplane}
to the pair $(\ket{x},\ket{y})=(\ket{\psi_i},\ket{\psi_j})$ to obtain $\ket{a_{ij}}$ such that
\[
\braket{a_{ij}}{\psi_j}=0
\quad\text{and}\quad
\braket{a_{ij}}{\psi_i}\neq 0.
\]
Define $\ell_{ij}(\ket{\psi}) := \braket{a_{ij}}{\psi}$ and
\[
f_i(\ket{\psi})
:=\prod_{j\neq i}\frac{\ell_{ij}(\ket{\psi})}{\ell_{ij}(\ket{\psi_i})}
=\prod_{j\neq i}\frac{\braket{a_{ij}}{\psi}}{\braket{a_{ij}}{\psi_i}}.
\]
Then $f_i(\ket{\psi_i})=1$ and $f_i(\ket{\psi_j})=0$ for all $j\neq i$.
Consequently,
\[
\dsE_{\ket{\psi}\sim\mathcal E_1}\!\big[\Abs{f_i(\ket{\psi})}^2\big]
= \sum_{r=1}^{N_1} p_r\,\Abs{f_i(\ket{\psi_r})}^2
= p_i,
\]
and similarly,
\[
\dsE_{\ket{\phi}\sim\mathcal E_2}\!\big[\Abs{f_i(\ket{\phi})}^2\big]
= q_i,
\]
since $\Abs{f_i(e^{i\theta}\ket{\psi})}=\Abs{f_i(\ket{\psi})}$ and the two ensembles share the same states up to phases.

Because $f_i$ is a homogeneous polynomial of degree $N_1-1$ divided by a nonzero constant,
it is (as a function of coordinates) homogeneous of degree $N_1-1\le N-1$.
Applying Lemma~\ref{lemma: homogeneous and sym} to $f_i$ gives
\[
\dsE_{\mathcal E}\!\big[\Abs{f_i}^2\big]
=
\mathrm{Tr}\!\Big(\ketbra{v_{f_i}}{v_{f_i}}\,M_{N_1-1}(\mathcal E)\Big).
\]
Since $N_1-1\le N$ and $M_{N_1-1}(\mathcal E_1)=M_{N_1-1}(\mathcal E_2)$, we conclude
\[
p_i=\dsE_{\mathcal E_1}\!\big[\Abs{f_i}^2\big]
=\dsE_{\mathcal E_2}\!\big[\Abs{f_i}^2\big]
=q_i.
\]
As $i$ was arbitrary, $p_i=q_i$ for all $i$, proving that the two ensembles are identical up to permutation.
\end{proof}
To make the proof of Theorem~\ref{thm: k to reach full disriminative power} complete, we now construct a worst case, which shows two ensembles that can not discriminated by MMD-$k$ with $k<N$.
\begin{lemma}[Hard pair for $k<N$]\label{lem:hardpair}
Fix $d\ge 2$ and $N\ge 2$. Choose orthonormal $\ket{0},\ket{1}\in\mathbb C^d$. For $\theta\in\mathbb R$ and
$\ell\in\{0,\dots,N-1\}$ define
\[
\ket{\psi_\ell^{(\theta)}}:=\frac{1}{\sqrt2}\Big(\ket{0}+e^{i(\theta+2\pi\ell/N)}\ket{1}\Big),\qquad
\mathcal E_\theta:=\Big\{\big(1/N,\ket{\psi_\ell^{(\theta)}}\big)\Big\}_{\ell=0}^{N-1}.
\]
Then for every $t<N$, the moment operator $M_t(\mathcal E_\theta)$ is independent of $\theta$. Hence for any $k<N$,
\[
M_k(\mathcal E_0)=M_k(\mathcal E_{\pi/N})\quad\text{but}\quad \mathcal E_0\neq \mathcal E_{\pi/N}.
\]
Moreover $M_N(\mathcal E_0)\neq M_N(\mathcal E_{\pi/N})$.
\end{lemma}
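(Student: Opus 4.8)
The plan is to compute the moment operators
$M_t(\mathcal{E}_\theta)=\frac1N\sum_{\ell=0}^{N-1}\big(\ketbra{\psi_\ell^{(\theta)}}{\psi_\ell^{(\theta)}}\big)^{\otimes t}$
explicitly in the computational basis and to isolate precisely where the dependence on $\theta$ enters. Since each $\ket{\psi_\ell^{(\theta)}}$ lies in $\mathrm{span}\{\ket 0,\ket 1\}$, its $t$-fold tensor power is supported on the bit strings $\ket{b}=\ket{b_1\cdots b_t}$ with $b_i\in\{0,1\}$, and writing $|b|$ for the Hamming weight one has $\ket{\psi_\ell^{(\theta)}}^{\otimes t}=2^{-t/2}\sum_{b}e^{i(\theta+2\pi\ell/N)|b|}\ket{b}$. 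Forming the rank-one projector and averaging over $\ell$ gives
\[
M_t(\mathcal{E}_\theta)=2^{-t}\sum_{b,b'\in\{0,1\}^t}\Big(\frac1N\sum_{\ell=0}^{N-1}e^{i(\theta+2\pi\ell/N)(|b|-|b'|)}\Big)\ketbra{b}{b'}.
\]

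Next I would evaluate the inner sum with the roots-of-unity identity $\frac1N\sum_{\ell=0}^{N-1}e^{2\pi i\ell m/N}=\mathbf{1}\{N\mid m\}$, so the bracket equals $e^{i\theta(|b|-|b'|)}\,\mathbf{1}\{N\mid(|b|-|b'|)\}$. The crucial elementary step is the range bound: $|b|,|b'|\in\{0,1,\dots,t\}$ forces $|b|-|b'|\in\{-t,\dots,t\}$, so when $t<N$ the divisibility $N\mid(|b|-|b'|)$ holds only for $|b|=|b'|$, and then $e^{i\theta(|b|-|b'|)}=1$. Hence for every $t<N$,
\[
M_t(\mathcal{E}_\theta)=2^{-t}\sum_{b,b':\,|b|=|b'|}\ketbra{b}{b'},
\]
which is manifestly $\theta$-independent; in particular $M_k(\mathcal{E}_0)=M_k(\mathcal{E}_{\pi/N})$ for all $k<N$. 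For $t=N$ the divisibility admits the extra solutions $\{|b|,|b'|\}=\{0,N\}$, i.e.\ $(b,b')=(1^N,0^N)$ or $(0^N,1^N)$, contributing the term $2^{-N}\big(e^{iN\theta}\ketbra{1^N}{0^N}+e^{-iN\theta}\ketbra{0^N}{1^N}\big)$; evaluating at $\theta=0$ and $\theta=\pi/N$ yields coefficients $+1$ and $-1$, so $M_N(\mathcal{E}_0)-M_N(\mathcal{E}_{\pi/N})=2^{1-N}\big(\ketbra{1^N}{0^N}+\ketbra{0^N}{1^N}\big)\neq 0$.

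It remains to check $\mathcal{E}_0\neq\mathcal{E}_{\pi/N}$ as ensembles. Comparing $\ket0$-amplitudes in $\tfrac1{\sqrt2}(\ket0+e^{i\alpha}\ket1)=e^{i\gamma}\tfrac1{\sqrt2}(\ket0+e^{i\beta}\ket1)$ forces $\gamma=0$ and $\alpha\equiv\beta\pmod{2\pi}$, so two such states coincide, even up to global phase, only when their phase angles agree mod $2\pi$. The phase multiset of $\mathcal{E}_0$ is $\{2\pi\ell/N\}_{\ell=0}^{N-1}$ and that of $\mathcal{E}_{\pi/N}$ is $\{\pi(2\ell+1)/N\}_{\ell=0}^{N-1}$; these are the even and odd multiples of $\pi/N$ in $[0,2\pi)$, hence disjoint, so the two ensembles share no state and are in particular distinct. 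I expect the only slightly delicate parts to be the bookkeeping of the double sum over bit strings and the range argument $|b|-|b'|\in\{-t,\dots,t\}$ — this is exactly the point at which the threshold $k=N$ appears, and it is worth stating cleanly; everything else is a routine finite geometric-series computation.
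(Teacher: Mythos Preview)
Your proof is correct and follows essentially the same approach as the paper: both expand the moment operator, invoke the roots-of-unity identity $\frac1N\sum_{\ell}e^{2\pi i\ell m/N}=\mathbf{1}\{N\mid m\}$, and use the range bound $|m|\le t$ to kill all $\theta$-dependence when $t<N$, then exhibit the surviving $\ketbra{1^N}{0^N}$ matrix element at $t=N$. Your version is slightly more explicit in parametrizing matrix elements by Hamming weights and in verifying that the two phase sets are disjoint even up to global phase, but the argument is the same.
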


\begin{proof}
Let $\varphi_\ell:=\theta+2\pi\ell/N$ and $\rho(\varphi_\ell):=\ketbra{\psi_\ell^{(\theta)}}{\psi_\ell^{(\theta)}}$.
In the $\{\ket{0},\ket{1}\}$ subspace,
\[
\rho(\varphi)=\tfrac12\big(\ketbra{0}{0}+\ketbra{1}{1}+e^{-i\varphi}\ketbra{0}{1}+e^{i\varphi}\ketbra{1}{0}\big),
\]
so each matrix element of $\rho(\varphi)^{\otimes t}$ is a linear combination of $e^{im\varphi}$ with $|m|\le t$.
Averaging over $\ell$ gives the discrete Fourier sum
\[
\frac1N\sum_{\ell=0}^{N-1}e^{im\varphi_\ell}
=e^{im\theta}\cdot \frac1N\sum_{\ell=0}^{N-1}e^{i2\pi m\ell/N}
=
\begin{cases}
e^{im\theta}, & \frac{m}{N}\in\dsZ,\\
0, & \text{otherwise}.
\end{cases}
\]
If $t<N$, the only multiple of $N$ in $\{-t,\dots,t\}$ is $m=0$, so $\frac{m}{N}\in\dsZ$ only when $m=0$, hence all $\theta$-dependent terms vanish and
$M_t(\mathcal E_\theta)=\frac1N\sum_{\ell}\rho(\varphi_\ell)^{\otimes t}$ is independent of $\theta$.

The ensembles $\mathcal E_0$ and $\mathcal E_{\pi/N}$ have different phase sets
$\{2\pi\ell/N\}_\ell$ and $\{\pi/N+2\pi\ell/N\}_\ell$, hence $\mathcal E_0\neq \mathcal E_{\pi/N}$.
Finally, with $A:=\ketbra{0^{\otimes N}}{1^{\otimes N}}$,
\[
\Tr\!\big(A\,\rho(\varphi)^{\otimes N}\big)=\big(\bra{1}\rho(\varphi)\ket{0}\big)^N=(2^{-1}e^{i\varphi})^N=2^{-N}e^{iN\varphi},
\]
so
\[
\Tr\!\big(A\,M_N(\mathcal E_\theta)\big)=\frac1N\sum_{\ell=0}^{N-1}2^{-N}e^{iN(\theta+2\pi\ell/N)}=2^{-N}e^{iN\theta}.
\]
Thus $\Tr(A\,M_N(\mathcal E_0))=2^{-N}$ while $\Tr(A\,M_N(\mathcal E_{\pi/N}))=-2^{-N}$, so $M_N$ differs.
\end{proof}
Combining Theorem~\ref{thm: hierarachy of MMD-k}, Theorem~\ref{thm: k=N full discriminative power}, and Lemma~\ref{lem:hardpair}, we conclude that the threshold to achieve full discrimiative power for MMD-$k$ is $k=N$, completing the proof of Theorem~\ref{thm: k to reach full disriminative power}.

\subsection{Proof of Theorem \ref{thm: sample complexity Wasserstein}}\label{appsub: proof of Wasserstein upper bound}
We first introduce a lemma to be used.
\begin{lemma}\label{lm: maxCij}
A sufficient condition for $\abs{\widehat{W}-W(C)}\leq \epsilon$ is $\max_{i,j} \abs{\widehat{C}_{ij}-C_{ij}} \leq \epsilon$. 
\end{lemma}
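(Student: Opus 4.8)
The plan is to exploit two elementary facts about the linear program in Eq.~\eqref{eq: calculation of Wasserstein}: (i) for each fixed coupling $P$ in the transportation polytope the objective $\langle P,C\rangle$ is \emph{linear} in the cost matrix $C$, and (ii) every feasible $P$ has total mass $\sum_{i,j}P_{ij}=1$, because the row and column marginals $p$ and $q$ are probability vectors. Together these say that replacing $C$ by $\widehat C$ changes the objective by at most $\big(\max_{i,j}\abs{\widehat C_{ij}-C_{ij}}\big)\sum_{i,j}P_{ij}=\max_{i,j}\abs{\widehat C_{ij}-C_{ij}}$, and crucially this bound is \emph{uniform} over the whole polytope since the feasible set does not depend on the cost matrix.

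Concretely, first I would establish the uniform perturbation bound: for any $P\ge 0$ with $P\mathbf 1_{N_1}=p$ and $P^\top\mathbf 1_{N_2}=q$,
\[
\abs{\langle P,\widehat C\rangle-\langle P,C\rangle}
\le\sum_{i,j}P_{ij}\,\abs{\widehat C_{ij}-C_{ij}}
\le\Big(\max_{i,j}\abs{\widehat C_{ij}-C_{ij}}\Big)\sum_{i,j}P_{ij}
=\max_{i,j}\abs{\widehat C_{ij}-C_{ij}}\le\epsilon .
\]
Then I would pass to the minima. Let $P^\star$ attain $W(C)$ and $\widehat P^\star$ attain $\widehat W=W(\widehat C)$ (both are minimizers over the same compact polytope, so they exist). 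Using optimality of $\widehat P^\star$ for $\widehat C$ and then the bound above at $P^\star$,
\[
\widehat W=\langle\widehat P^\star,\widehat C\rangle\le\langle P^\star,\widehat C\rangle\le\langle P^\star,C\rangle+\epsilon=W(C)+\epsilon,
\]
and the mirror-image chain (swap the roles of $C$ and $\widehat C$, and of $P^\star$ and $\widehat P^\star$) gives $W(C)\le\widehat W+\epsilon$. Combining the two yields $\abs{\widehat W-W(C)}\le\epsilon$, which is the claim.

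There is no real obstacle here; the one point I would flag is that the perturbation bound in the first display must be invoked at the \emph{other} problem's optimizer in each branch of the argument, so it genuinely has to hold uniformly over all feasible $P$. This uniformity is exactly what the normalization $\sum_{i,j}P_{ij}=1$ supplies, and it is where the hypothesis that $p$ and $q$ are probability distributions enters. No duality, strong convexity, or other machinery is required.
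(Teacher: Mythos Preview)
Your proof is correct and follows essentially the same route as the paper's: both arguments use that every feasible coupling has total mass $\sum_{i,j}P_{ij}=1$ to get the uniform bound $|\langle P,\widehat C-C\rangle|\le\epsilon$, and then evaluate each objective at the other problem's optimizer to obtain the two-sided inequality. Your presentation is slightly more explicit about why the uniformity matters (the bound must hold at the \emph{other} optimizer), but the underlying argument is identical.
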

Here, $\widehat{W} = W(\widehat{C})$ represents the Wasserstein distance calculated by program (\ref{eq: calculation of Wasserstein}) using the estimated value $\widehat{C}$ of the population cost matrix $C$ and $W(C) = W(\scE_a,\scE_b)$ denotes the population Wasserstein distance.
\begin{proof}
    Under this condition, we may decompose $\widehat{C} = C+\Delta C$ with $\norm{\Delta C}_\infty \leq \epsilon$. So $\abs{\langle P,\Delta C\rangle} \leq \norm{P}\norm{\Delta C}_\infty \leq  \epsilon$ for any $P$. Then
    \begin{equation*}
    \begin{aligned}
    W(\widehat{C})
    &= \langle \widehat{P}, \widehat{C}\rangle 
    = \langle \widehat{P}, C\rangle + \langle \widehat{P}, \Delta C\rangle \\
    &\ge \min_{P}\langle P, C\rangle - \epsilon
    = W(C)-\epsilon,\\
    W(C)
    &= \langle P^{*}, C\rangle = \langle P^{*}, \widehat{C}\rangle - \langle P^{*}, \Delta C\rangle \\
    &\ge \min_{P}\langle P, \widehat{C}\rangle - \epsilon
    = W(\widehat{C})-\epsilon ,
    \end{aligned}
    \end{equation*}
where $P^*$ is the population optimizer for Wasserstein distance $W(C)$ using population cost matrix $C$. Namely,  $$\abs{W(\widehat{C})-W(C)}\leq \epsilon.$$
\end{proof}

\begin{proof}[Proof of Theorem \ref{thm: sample complexity Wasserstein}]
Recall that we use $\ell$ to represent label $(i,j)$. When estimating the fidelity $X_{\ell}$, suppose that we have assess to $T_{\ell}$-many repetitions of the corresponding pair of states, and conduct $T_{\ell}$ SWAP tests to obtain $T_{\ell}$ samples. Hoeffding's inequality gives that for any additive error $\epsilon>0$,
\begin{equation*}\label{eq: Hoeffdingp_{ell}}
    \Pr(\abs{\widehat{X_{\ell}}-X_{\ell}}\geq \epsilon) \le 2 \exp{(-2T_{\ell}\epsilon^2)}.
\end{equation*}
Since $\widehat{C_{\ell}} = 2(1-\widehat{X_{\ell}} )$, we have
\begin{equation}
    \Pr(\abs{\widehat{C_{\ell}}-C_{\ell}}\ge \epsilon) \leq 2 \exp{(-T_{\ell}\epsilon^2/2)}.
    \label{eq: HoeffdingC_{ell}}
\end{equation}
Now we consider the condition shown in Lemma \ref{lm: maxCij},
\begin{align}\label{eq: hoeffding for all labels}
    \Pr \left( \max_{\ell} \Abs{\widehat{C_{\ell}}-C_{\ell}}\ge\epsilon \right) &= \Pr\left( \bigcup_{\ell=1}^{N^2} \{\abs{\widehat{C_{\ell}}-C_{\ell}} \ge \epsilon \} \right)  \le \sum_{\ell=1}^{N^2} \Pr\left( \Abs{\widehat{C_{\ell}}-C_{\ell}} \ge \epsilon \right) \nonumber \\ 
    &\leq 2N^2 \exp{ \left( -\frac{T_{\ell}\epsilon^2}{2} \right)} \leq 2N^2\exp{ \left( -\frac{T_{\min}\epsilon^2}{2} \right)},
\end{align}
where $T_{\min} = \min_{\ell} T_{\ell}$. 
From Eq. (\ref{eq: hoeffding for all labels}), to make $\Pr \left( \max_{\ell} \Abs{\widehat{C_{\ell}}-C_{\ell}}>\epsilon \right) \leq \delta$ we need $$T_{\min} \ge \frac{2}{\epsilon^2}\log{\frac{2N^2}{\delta}}.$$ Eq.~(\ref{eq: M needed to achieve Tmin}) gives that the number of samples needed is $$M=\scO\Big( N^2(\frac{2}{\epsilon^2}\log{\frac{2N^2}{\delta}}+\log\frac{N^2}{\delta}) \Big).$$ 

\end{proof}

\subsection{Proof of Theorem \ref{thm: sample complexity of MMD-k, upper bound, most general case} }\label{appsub: MMD-k upper bound}

To help us better examine the sample complexity $M$ in different parameter regions, we can think the estimation process as follows. Given the estimation accuracy to achieve (the additive error $\epsilon$ and the failure probability $\delta$), we gradually increase $M$ and stop increasing when estimation using $M$ samples can reach such accuracy. As shown in the estimator Eq.~(\ref{eq: estimator of MMD-k}), we only utilize the samples of the labels with $T_\ell\ge k$, so it is important to track how $m=\sum_{\ell=1}^{N^2} \mathbf{1}\{T_{\ell} \ge k\}$ varies when $M$ increases.

When $M\le cN^2$ ($c$ is a constant), given by Eq.~(\ref{eq: Em M<=cN2}), we have $\dsE[m] = \Theta\Big(\frac{M^k}{k!N^{2k-2}}\Big)$, with $m\le N^2$. The threshold of $M$ to reach $m=N^2$ is determined by $T_{\min}\ge k$, which is given by Eq.~(\ref{eq: M needed to achieve Tmin}):
\begin{equation}
    M=N^2(\log \frac{N^2}{\delta}+k).\label{eq: M to reach Tmin=k}
\end{equation}
So there are two regions, we need to consider them separately. We first give a lemma which will be useful for the region $m=N^2$.

\begin{lemma}[Balanced-count event for uniform multinomial sampling]\label{lem:balanced_count}
Let $n:=N^2$. Draw $M$ labels i.i.d. uniformly from $[n]$, and let
$T_\ell$ be the number of times label $\ell\in[n]$ appears, so that
$(T_1,\dots,T_n)\sim \mathrm{Multinomial}(M;1/n,\dots,1/n)$ and
$\dsE[T_\ell]=\lambda:=M/n$ for each $\ell$.
Fix $\eta\in(0,1)$ and define the balanced-count event
\[
E_{\mathrm{bal}}^{-}(\eta):=\Big\{\min_{\ell\in[n]} T_\ell \ge (1-\eta)\lambda\Big\}.
\]
Then
\begin{equation}\label{eq:balanced_count_prob}
\Pr\big(E_{\mathrm{bal}}^{-}(\eta)\big)\;\ge\;1-n\exp\!\Big(-\frac{\eta^2}{2}\lambda\Big).
\end{equation}
In particular, if
\begin{equation}\label{eq:balanced_count_M_suff}
M \;\ge\; \frac{2n}{\eta^2}\log\frac{n}{\delta},
\end{equation}
then $\Pr(E_{\mathrm{bal}}^{-}(\eta))\ge 1-\delta$.
\end{lemma}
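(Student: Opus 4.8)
The plan is to prove Lemma~\ref{lem:balanced_count} by a one-sided Chernoff bound on each marginal count $T_\ell$ followed by a union bound over the $n=N^2$ labels. Each $T_\ell$ is a $\mathrm{Binomial}(M,1/n)$ random variable with mean $\lambda=M/n$, so the lower-tail multiplicative Chernoff inequality gives, for any $\eta\in(0,1)$,
\[
\Pr\big(T_\ell < (1-\eta)\lambda\big)\;\le\;\exp\!\Big(-\frac{\eta^2}{2}\lambda\Big).
\]
This is the standard estimate; I would either cite it or recall the two-line proof via $\Pr(T_\ell<(1-\eta)\lambda)\le \ee^{s(1-\eta)\lambda}\,\dsE[\ee^{-sT_\ell}]$ with the optimal $s$, using $\dsE[\ee^{-sT_\ell}]=(1-(1-\ee^{-s})/n)^M\le \exp(-\lambda(1-\ee^{-s}))$.

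Next, apply the union bound over $\ell\in[n]$: the complement of $E_{\mathrm{bal}}^{-}(\eta)$ is $\bigcup_{\ell}\{T_\ell<(1-\eta)\lambda\}$, so
\[
\Pr\big(E_{\mathrm{bal}}^{-}(\eta)^{c}\big)\;\le\;\sum_{\ell=1}^{n}\Pr\big(T_\ell<(1-\eta)\lambda\big)\;\le\;n\exp\!\Big(-\frac{\eta^2}{2}\lambda\Big),
\]
which is exactly~\eqref{eq:balanced_count_prob} after taking the complement. For the ``in particular'' clause, I would substitute $\lambda=M/n$ and demand $n\exp(-\eta^2 M/(2n))\le \delta$; solving for $M$ gives the stated sufficient condition $M\ge \frac{2n}{\eta^2}\log\frac{n}{\delta}$, and monotonicity of the bound in $M$ finishes it.

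There is no real obstacle here: the only mild subtlety is choosing the clean form of the lower-tail Chernoff bound (the constant $\eta^2/2$ in the exponent rather than the sharper but messier $\eta + (1-\eta)\log(1-\eta)$), and making sure the $<$ versus $\le$ in the event definition is handled consistently — using the non-strict event $\{T_\ell\ge(1-\eta)\lambda\}$ and bounding its complement $\{T_\ell<(1-\eta)\lambda\}$ is the cleanest route. I would present the Chernoff inequality as a cited fact to keep the argument short, since it is entirely standard.
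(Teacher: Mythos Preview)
Your proposal is correct and follows essentially the same route as the paper: marginal $T_\ell\sim\mathrm{Bin}(M,1/n)$, lower-tail Chernoff bound $\Pr(T_\ell\le(1-\eta)\lambda)\le\exp(-\eta^2\lambda/2)$ (which the paper cites as Proposition~\ref{prop:Chernoff-S}), then a union bound over the $n$ labels, and finally solving $n\exp(-\eta^2 M/(2n))\le\delta$ for $M$.
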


\begin{proof}
Fix any $\ell\in[n]$. Since $(T_1,\dots,T_n)$ is multinomial with uniform cell probability,
the marginal distribution is
\[
T_\ell \sim \mathrm{Bin}\Big(M,\frac{1}{n}\Big),\qquad \dsE[T_\ell]=\lambda:=\frac{M}{n}.
\]
By the Chernoff bound for binomial random variables
as Proposition \ref{prop:Chernoff-S}, for any $\eta\in(0,1)$,
\begin{equation}\label{eq:chernoff_one_bin}
\Pr\big(T_\ell \le (1-\eta)\lambda\big)\;\le\;\exp\!\Big(-\frac{\eta^2}{2}\lambda\Big).
\end{equation}
Union bound gives that 
\[
\Pr\Big(\min_{\ell\in[n]}T_\ell < (1-\eta)\lambda\Big)
=
\Pr\Big(\exists \ell\in[n]: T_\ell < (1-\eta)\lambda\Big)
\le
\sum_{\ell=1}^n \Pr\big(T_\ell < (1-\eta)\lambda\big)
\le
n\exp\!\Big(-\frac{\eta^2}{2}\lambda\Big),
\]
which proves \eqref{eq:balanced_count_prob}. The sufficient condition \eqref{eq:balanced_count_M_suff}
follows by setting the right-hand side to be at most $\delta$ and recalling $\lambda=M/n$.
\end{proof}

Here we present the general version of Theorem \ref{thm: sample complexity of MMD-k, upper bound, most general case}.
\begin{theorem}[Sample complexity of MMD-$k$ with fixed $k$, general version]
    For two $N$-state uniform pure quantum ensembles $\scE_1$ and $\scE_2$, with $k$ fixed, the sample complexity to estimate MMD-$k$ between the two ensembles using the U-stat estimator Eq. (\ref{eq: estimator of MMD-k}) is:
    \begin{equation}\label{eq: sample complexity of MMD-k, upper bound for general case}
    M
    = \mathcal{O}\Bigg(
    \min\Bigg\{
    \left(\frac{k!}{\epsilon^2}\log\frac{1}{\delta}\right)^{\!\frac{1}{k}}
    \,N^{2-\frac{2}{k}},
    \;
    \max\Bigg\{
    \frac{k}{\epsilon^2}\log\frac{1}{\delta},
    \;
    N^2\!\left(\log\frac{N^2}{\delta} + k\right)
    \Bigg\}
    \Bigg\}
    \Bigg).
    \end{equation}
    where $\epsilon$ is the additive error between estimated value and true value and $\delta$ is the failure probability.

    In the minimization form, the first term is for the case that $m<N^2$, with the condition $N^2 \ge \frac{k!}{\epsilon^2}\log\frac{1}{\delta}$, and the second term is for the case that $m=N^2$.
\label{theorem:general_version}
\end{theorem}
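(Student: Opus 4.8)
The plan is to analyze the single U-stat estimator $\widehat{\bar F^{(k)}}$ of Eq.~\eqref{eq: estimator of MMD-k} at two different sample budgets and keep the smaller one. Because $\scD^{(k)}$ is a fixed linear combination of $\bar F^{(k)}(\scE_1,\scE_1)$, $\bar F^{(k)}(\scE_2,\scE_2)$, and $\bar F^{(k)}(\scE_1,\scE_2)$, each formed from a disjoint batch of $M/3$ SWAP-test samples, a union bound reduces the theorem to the following claim: from $M/3$ samples one can estimate a single $\bar F^{(k)}(\scE_a,\scE_b)$ to additive error $O(\epsilon)$ with failure probability $O(\delta)$, where $M$ is the minimum of the two budgets in Eq.~\eqref{eq: sample complexity of MMD-k, upper bound for general case}.

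I would first isolate three structural facts. Condition on the realized label sequence, equivalently on the occupancy vector $(T_1,\dots,T_{N^2})$ and the realized set $S=\{\ell:T_\ell\ge k\}$, $m=|S|$. (i) Given this conditioning the $\{Z_\ell\}_{\ell\in S}$ are mutually independent (they are functions of disjoint, fresh SWAP outcomes), with $\dsE[Z_\ell]=X_\ell^k$, $Z_\ell\in[-1,1]$, and $\mathrm{Var}(Z_\ell\mid T_\ell)\le\min\{1,\ k/T_\ell\}$ (the latter being Hoeffding's variance bound for a degree-$k$ U-statistic with kernel bounded by $1$). (ii) By exchangeability of the $N^2$ labels under uniform multinomial sampling, conditioned on $m$ the set $S$ is a uniformly random $m$-subset of $[N^2]$, so $\tfrac1m\sum_{\ell\in S}X_\ell^k$ is a without-replacement sample mean of $\{X_\ell^k\}_{\ell=1}^{N^2}$, whose population mean is $\bar F^{(k)}$. (iii) The indicators $\mathbf 1\{T_\ell\ge k\}$ are negatively associated, so $m$ concentrates around $\dsE[m]$ by a multiplicative Chernoff bound. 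Then I split $\widehat{\bar F^{(k)}}-\bar F^{(k)} = \big(\widehat{\bar F^{(k)}}-\tfrac1m\sum_{\ell\in S}X_\ell^k\big)+\big(\tfrac1m\sum_{\ell\in S}X_\ell^k-\bar F^{(k)}\big)$, bounding the first summand by a Bernstein inequality over the independent $Z_\ell$ (using (i)) and the second by the Hoeffding--Serfling inequality for sampling without replacement (using (ii)); both fall below $\epsilon/4$ with probability $\ge1-\delta/4$ provided $m\gtrsim\frac1{\epsilon^2}\log\frac1\delta$ (Serfling term) and $\frac1{m^2}\sum_{\ell\in S}\mathrm{Var}(Z_\ell\mid T_\ell)\lesssim\frac{\epsilon^2}{\log(1/\delta)}$ (Bernstein term).

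For the regime $m<N^2$ --- valid precisely when $N^2\ge\frac{k!}{\epsilon^2}\log\frac1\delta$, which is also the inequality guaranteeing $M\le cN^2$ for the target budget, so that the estimate $\dsE[m]=\Theta\!\big(M^k/(k!N^{2k-2})\big)$ of Eq.~\eqref{eq: Em M<=cN2} applies --- I take $M=\Theta\!\big((\tfrac{k!}{\epsilon^2}\log\tfrac1\delta)^{1/k}N^{2-2/k}\big)$ so that $\dsE[m]\gtrsim\frac{C}{\epsilon^2}\log\frac1\delta$ (here $k$ is fixed, so the implied constant is harmless), use (iii) to get $m\ge\dsE[m]/2$ with probability $\ge1-\delta/4$, and finish with the decomposition --- every selected label has $T_\ell\ge k$, so the crude bound $\mathrm{Var}(Z_\ell)\le1$ already suffices. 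For the regime $m=N^2$, I invoke Eq.~\eqref{eq: M to reach Tmin=k} together with Lemma~\ref{lem:balanced_count} (with $\eta=1/2$) to obtain $T_\ell\ge\max\{k,\ M/(2N^2)\}$ for all $\ell$ with probability $\ge1-\delta/4$ once $M\gtrsim N^2(\log\frac{N^2}{\delta}+k)$; then the Serfling term vanishes ($S=[N^2]$, and $\tfrac1{N^2}\sum_\ell X_\ell^k=\bar F^{(k)}$ exactly), while Bernstein with $\mathrm{Var}(Z_\ell)\le\min\{1,\ 2kN^2/M\}$ delivers accuracy $\epsilon$ whenever either $N^2\gtrsim\frac1{\epsilon^2}\log\frac1\delta$ (boundedness suffices) or $M\gtrsim\frac{k}{\epsilon^2}\log\frac1\delta$ (variance suffices), i.e.\ for $M=\Theta\!\big(\max\{\frac{k}{\epsilon^2}\log\frac1\delta,\ N^2(\log\frac{N^2}{\delta}+k)\}\big)$. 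Taking the minimum of the two budgets gives Eq.~\eqref{eq: sample complexity of MMD-k, upper bound for general case}; specializing to $N^2\ge\frac{k!}{\epsilon^2}\log\frac1\delta$ recovers Theorem~\ref{thm: sample complexity of MMD-k, upper bound, most general case}.

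The main obstacle I anticipate is the interplay of the two layers of randomness together with the bookkeeping for $m$: one must pass rigorously from ``$\dsE[m]$ is large'' to ``$m$ is large with high probability'' (this is where negative association of the occupancy indicators enters), and then apply the without-replacement Serfling bound to the random set $S$ and the conditional Bernstein bound to the SWAP outcomes given $S$ on the correct sigma-algebras, so that the two halves of the decomposition neither double-count nor lose the conditioning. A secondary delicate point is the small-$N$ corner of the $m=N^2$ regime, where boundedness of the $Z_\ell$ is too weak and the U-statistic variance bound must be pushed through Bernstein while keeping the linear, sub-exponential term subdominant.
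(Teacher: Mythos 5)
Your proposal is correct and follows essentially the same route as the paper's proof: the same decomposition of the error into a conditional-variance term over the independent $Z_\ell$ plus a subset-selection term handled by the Hoeffding--Serfling without-replacement bound, the same negative-association/Chernoff argument to pass from $\dsE[m]=\Theta(M^k/(k!N^{2k-2}))$ to a high-probability lower bound on $m$, and the same two-case split with the balanced-count lemma for the $m=N^2$ regime. The one point where you deviate --- using Bernstein with $\mathrm{Var}(Z_\ell\mid T_\ell)\le k/T_\ell$ rather than the sub-Gaussian ($\psi_2$) form of the U-statistic tail --- is exactly where the paper's choice is cleaner, since the $\|V_\ell\|_{\psi_2}^2\le Ck/T_\ell$ route gives a pure $\exp(-c'M\epsilon^2/k)$ tail with no linear term, eliminating the small-$N$ corner you flag as delicate.
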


\begin{proof}
    \textbf{Case 1.} Assume that $m< N^2$. Namely, not all the label $\ell$ are sampled more than $k$-times.
    Now the estimator of $\bar{F}^{(k)}$ is $\widehat{\bar{F}^{(k)}(\scE_a,\scE_b)} = \frac{1}{m}\sum _{l=1}^m Z_{\ell}$ with sample ensembles $\scE_a, \scE_b$ and kernel $Z_{\ell}$ taking the form Eq. (\ref{eq: U-stat for MMD-k}). Triangle inequality gives that
    $$ \left|\widehat{\bar{F}(\scE_a,\scE_b)} - \bar{F} \right| \leq
    \left| \frac{1}{m}\sum_{\ell} (Z_{\ell}-X_{\ell}^k) \right| + \Abs{\frac{1}{m}\sum_{\ell}X_{\ell}^k - \frac{1}{N^2}\sum_{\ell} X_{\ell}^k}. $$ 
    The first term captures the variance contribution, whereas the remaining term accounts for the subset selection error. For the first term, by virtue of the Hoeffding inequality for U-statistics \cite{Lee_U_statistics_19}, we have for all labels with no less than $k$ repetitions, $\eta>0$,
    \begin{equation}\label{eq: Hoeffding for each label}
        \Pr  \big(\abs{Z_{\ell}-X_{\ell}^k}\ge\eta \big) \leq 2 \exp \big(-\lfloor T_{\ell}/k\rfloor \eta^2/2 \big).
    \end{equation}
    Set $V_{\ell}=Z_{\ell}-X_{\ell}^k$ for simplicity, and we have by Theorem 2.6.3 in \cite{Vershynin_high_dimensional_probability} and the fact that $\min T_\ell  = k$,
    \begin{equation}\label{eq: general Hoeffding of mA}
        \Pr \big(\frac{1}{m}\abs{\sum_{l=1}^{m} V_{\ell}} \ge\eta\big) \leq 2\exp \big(- m \eta^2/2 \big). 
    \end{equation}
    For the second term, by Serfling inequality \cite{Serfling_10.1214/aos/1176342611}, one obtains that
    \begin{equation}\label{eq: general Hoeffding of mB}
        \Pr  \Big( \abs{\frac{1}{m}\sum_{\ell}X_{\ell}^k - \frac{1}{N^2}\sum_{\ell} X_{\ell}^k} \ge\eta\Big) \le 2 \exp(-\frac{m\eta^2}{2f(m)}) \leq 2\exp(-m\eta^2/2), 
    \end{equation}
    where $f(m) = 1-\frac{m-1}{N^2} \in [0,1]$. As a result of inequality (\ref{eq: general Hoeffding of mA}) and inequality (\ref{eq: general Hoeffding of mB}),
    \begin{align}\label{eq: general Hoeffding of overall}
        \Pr \left( |\frac{1}{m}\sum Z_\ell - \frac{1}{N^2} \sum_\ell X^k_\ell| \geq \eta \right) \leq 4\exp(-m\eta^2/2).
    \end{align}
    Define $E_{\delta, M} := E^{(1)}_{\delta, M} \cap E^{(2)}_{\delta, M} $ with event 
    $$E^{(1)}_{\delta, M} : = \left\{ |m - \dsE [m]| \leq  \sqrt{3 \dsE [m] \log(\frac{2}{\delta})}  \right\},$$
    and event 
    $$E^{(2)}_{\delta, M} : = \left\{ |\widehat{\bar{F}(\scE_a,\scE_b)} - \bar{F} | \leq \sqrt{\frac{8\log(4/\delta)}{m}} \right\} = \left\{ |\frac{1}{m}\sum Z_\ell - \frac{1}{N^2} \sum_\ell X^k_\ell| \leq \sqrt{\frac{8\log(4/\delta)}{m}} \right\}.$$ 
    Here, we suppress the dependence of $m=m(M)$ on $M$ for notational simplicity. By Proposition \ref{prop:Chernoff-S}, we have $\Pr(E^{(1)}_{\delta, M}) \geq 1 - \delta$ and we have $\Pr(E^{(2)}_{\delta, M}) \geq 1 - \delta$ via inequality (\ref{eq: general Hoeffding of overall}) above. Thus, 
    $\Pr(E_{\delta, M}) \geq 1 - 2\delta$. Equation (\ref{eqn_ES_bound_order}) (plug in $n=N^2$) indicates that 
    \begin{align*}
          \dsE [m] = \Theta \left( \frac{M^k}{k!\,N^{2k-2}} \right).
    \end{align*}
    Take
    \begin{equation}
        M = \mathcal{O} \Bigl( \frac{128k!}{\epsilon^2} \ln \frac{12}{\delta} \Bigr)^{\frac{1}{k}} N^{2-\frac{2}{k}}, \label{eq: M for k-MMD when lambda<<1}
    \end{equation}
    and we have $E_{\delta, M}\subset  E^{(0)}_{\delta,\epsilon} : = \{ |\widehat{\bar{F}(\scE_a,\scE_b)} - \bar{F} | \leq \epsilon \}$. As a consequence, 
    \begin{align*}
       \Pr(E^{(0)}_{\delta,\epsilon}) \geq \Pr(E_{\delta, M})\geq 1 - 2\delta.
    \end{align*}
    Plugging into the condition $M<cN^2$, we have the condition to pick the first term
    \begin{equation}
        N^2 \ge c\frac{k!}{\epsilon^2}\log\frac{1}{\delta}.
    \end{equation}

    \textbf{Case 2.}
openreviewAssume $m=n:=N^2$, i.e., $T_{\min}\ge k$, so the subset-selection error vanishes and
\[
\widehat{\bar{F}^{(k)}(\mathcal{E}_a,\mathcal{E}_b)}-\bar{F}^{(k)}
=
\frac{1}{n}\sum_{\ell=1}^{n}\bigl(Z_\ell-X_\ell^k\bigr)
=:\frac{1}{n}\sum_{\ell=1}^{n}V_\ell,
\qquad
V_\ell:=Z_\ell-X_\ell^k.
\]
Let $L_1,\dots,L_M$ be the sampled labels and define the sigma-field
$\mathcal{G}:=\sigma(L_1,\dots,L_M)$. Conditional on $\mathcal{G}$, the trials belonging to distinct
labels are disjoint subsets of the independent samples $\{(L_t,R_t)\}_{t=1}^M$; hence
$V_1,\dots,V_n$ are independent given $\mathcal{G}$. Moreover,
$\dsE[V_\ell\mid\mathcal{G}]=0$ for every $\ell\in[n]$.

\smallskip
Conditional on $\mathcal{G}$, $Z_\ell$ is an order-$k$ U-statistic with kernel bounded in $[-1,1]$ and
mean $\dsE[Z_\ell\mid\mathcal{G}]=X_\ell^k$. By Hoeffding's inequality for U-statistics and the standard
tail-to-$\psi_2$ equivalence, each $V_\ell$ is sub-Gaussian (conditional on $\mathcal{G}$) with
\begin{equation}\label{eq:psi2_Vell_case2_rig}
\|V_\ell\|_{\psi_2\mid\mathcal{G}}^2 \;\le\; C\,\frac{k}{T_\ell},
\end{equation}
for an absolute constant $C>0$.

\smallskip
Define $\lambda:=M/n$ and the balanced-count event
\[
E_{\mathrm{bal}}
:=
\Big\{\min_{\ell\in[n]}T_\ell \ge \frac{\lambda}{2}\Big\}.
\]
By Lemma~\ref{lem:balanced_count},
\begin{equation}\label{eq:Ebal_prob_case2_rig}
\Pr(E_{\mathrm{bal}})
\ge
1-n\exp\!\Big(-\frac{\lambda}{8}\Big).
\end{equation}
On $E_{\mathrm{bal}}$ we have $T_\ell\ge \lambda/2$ for all $\ell$, and hence by
\eqref{eq:psi2_Vell_case2_rig},
\begin{equation}\label{eq:psi2_uniform_case2_rig}
\max_{\ell\in[n]}\|V_\ell\|_{\psi_2\mid\mathcal{G}}^2
\le
C\,\frac{k}{\lambda/2}
=
\frac{2Ck}{\lambda}.
\end{equation}

\smallskip
Since $V_1,\dots,V_n$ are independent and mean-zero conditional on $\mathcal{G}$, Theorem~2.6.3 in
\cite{Vershynin_high_dimensional_probability} together with \eqref{eq:psi2_uniform_case2_rig} yields that
for any $\epsilon>0$,
\[
\Pr\!\left(\left.\left|\frac{1}{n}\sum_{\ell=1}^{n}V_\ell\right|\ge \epsilon\ \right|\ \mathcal{G}\right)
\le
2\exp\!\left(
-c\,\frac{n\epsilon^2}{\max_{\ell}\|V_\ell\|_{\psi_2\mid\mathcal{G}}^2}
\right)
\le
2\exp\!\left(-c'\,\frac{n\lambda}{k}\,\epsilon^2\right)
=
2\exp\!\left(-c'\,\frac{M}{k}\,\epsilon^2\right),
\]
for absolute constants $c,c'>0$, where we used $n\lambda=M$. Multiplying by $\mathbf{1}_{E_{\mathrm{bal}}}$
and taking expectation gives
\[
\Pr\!\left(\left\{\left|\frac{1}{n}\sum_{\ell=1}^{n}V_\ell\right|\ge \epsilon\right\}\cap E_{\mathrm{bal}}\right)
\le
2\exp\!\left(-c'\,\frac{M}{k}\,\epsilon^2\right).
\]
Therefore, by a union bound and \eqref{eq:Ebal_prob_case2_rig},
\begin{equation}\label{eq:case2_total_fail_rig}
\Pr\!\left(\left|\widehat{\bar{F}^{(k)}(\mathcal{E}_a,\mathcal{E}_b)}-\bar{F}^{(k)}\right|\ge \epsilon\right)
\le
n\exp\!\Big(-\frac{\lambda}{8}\Big)
+
2\exp\!\left(-c'\,\frac{M}{k}\,\epsilon^2\right).
\end{equation}

\smallskip
Split the failure budget as $\delta=\delta_{\mathrm{occ}}+\delta_{\mathrm{est}}$ with
$\delta_{\mathrm{occ}}=\delta_{\mathrm{est}}=\delta/2$.
To ensure $m=n$ (equivalently $T_{\min}\ge k$) with probability at least $1-\delta_{\mathrm{occ}}$,
we invoke \eqref{eq: M needed to achieve Tmin} with $t=k$ and $L=\log(1/\delta_{\mathrm{occ}})$, yielding
\[
M=\mathcal{O}\!\left(\frac{k+\log(1/\delta)}{p_{\min}}\right),
\]
and in particular (for the uniform case $p_{\min}=1/n$),
\[
M=\mathcal{O}\!\left(n\Big(k+\log\frac{n}{\delta}\Big)\right).
\]
This occupancy choice also makes the first term in \eqref{eq:case2_total_fail_rig} at most
$\delta_{\mathrm{occ}}$ up to absolute constants, since requiring
$n\exp(-\lambda/8)\le \delta_{\mathrm{occ}}$ is equivalent to
$M\ge 8n\log\!\big(\frac{n}{\delta_{\mathrm{occ}}}\big)$, which is absorbed by
$n\log(\frac{n}{\delta})$ after adjusting constants.

It remains to control the estimation term in \eqref{eq:case2_total_fail_rig}:
requiring
\[
2\exp\!\left(-c'\,\frac{M}{k}\,\epsilon^2\right)\le \delta_{\mathrm{est}}
\quad\Longleftrightarrow\quad
M\ \ge\ \frac{k}{c'\epsilon^2}\log\frac{2}{\delta_{\mathrm{est}}}
=
\mathcal{O}\!\left(\frac{k}{\epsilon^2}\log\frac{1}{\delta}\right)
\]
gives the sharp-$\epsilon$ contribution.
Combining the occupancy requirement (by \eqref{eq: M needed to achieve Tmin}) with the above yields
\[
M
=
\mathcal{O}\!\left(
\max\left\{
\frac{k}{\epsilon^2}\log\frac{1}{\delta},
\;
\frac{k+\log(1/\delta)}{p_{\min}}
\right\}
\right),
\]
and for the uniform case $p_{\min}=1/n$,
\[
M
=
\mathcal{O}\!\left(
\max\left\{
\frac{k}{\epsilon^2}\log\frac{1}{\delta},
\;
n\Big(k+\log\frac{n}{\delta}\Big)
\right\}
\right),
\qquad n=N^2.
\]
This completes Case~2.

    Combine Case~1 and Case~2, we have the final result.

\end{proof}

\subsection{Proof of Theorem \ref{thm: lower bound of sample complexity MMD-k} }
\label{appsub: MMD-k proof of lower bound}
\begin{proof}
In this proof, we use $C$ for universal constant whose value may differ line by line. To prove the lower bound of the sample complexity to estimate MMD-$k$, we need to construct two pairs of ensembles that can yield their MMD-$k$ hard to discriminate. For that, 
Section \ref{Appendix_worst_case_construction} enables us to construct two quantum ensembles $\scE_0,\scE_1$ whose fidelity matrix is $F_0  : X_{ij} \overset{\text{i.i.d.}}{\sim} \mu_0$, (respectively, $F_1  : X_{ij} \overset{\text{i.i.d.}}{\sim} \mu_1$) satisfying $|\theta_1 - \theta_0| = \Delta_k = 2\varepsilon$ and $\|\mu_0 -\mu_1 \|_{\text{TV}} \leq \epsilon N^k$ for $\theta_1 = \dsE_{\mu_1} X^k$ and $\theta_2 = \dsE_{\mu_2} X^k$. $P_0$ (respectively $P_1$) denotes the distribution of $(R_j, L_j)_{j=1}^M$ under $H_0$ (resp. $H_1$). Assume an estimator $\hat{\theta}$ satisfies that $\Pr_{H_i} ( |\hat{\theta} - \theta_i| \leq \epsilon ) \geq 1 - \delta$ for $i=0,1$. Then we could therefore define a test $\Psi : = \mathbf 1 \{ \hat{\theta} > (\theta_0 + \theta_1)/2 \}$ and the corresponding type-I error $\Pr_{H_0} (\Psi = 1)\leq \delta$ as well as the type-II error $\Pr_{H_1} (\Psi = 0)\leq \delta$. Bretagnolle–Huber inequality \cite{canonne2022short} gives that
\begin{align}\label{Bretagnolle_Huber_inequality}
    2\delta\geq \Pr_{H_0} (\Psi = 1)+ \Pr_{H_1} (\Psi = 0) \geq \frac{1}{2} \exp( - \text{KL} (P_1\|P_0)).
\end{align}
Note that  
\begin{align*}
     \text{KL} (P_1\|P_0)
     &=  \text{KL} ( P_1(R_m|L_m) \| P_0(R_m|L_m) )\\
     &=  \sum_{\ell=1}^{N^2} \text{KL} ( Q_{1,T_\ell} \| Q_{0,T_\ell} ) \leq  \sum_{\ell=1}^{N^2} \chi^2 ( Q_{1,T_\ell} \| Q_{0,T_\ell} ).
\end{align*}

For $\mathbf{r} = (r_1,\dots,r_s) \in \{-1,+1\}^s$, we have that
$$Q_{i,s}(\mathbf{r}) := \dsE_{X\sim \mu_i} \left[
\prod_{t=1}^s \frac{1 + r_t X}{2}  \right] = 2^{-s} \sum_{j=0} ^s m^{(i)}_j e_j(\mathbf{r}), $$
where 
$e_j(\mathbf{r}) : = \sum_{ \substack{S\subseteq[s]\\|S| = j} } \prod_{t\in S} r_t$. One could check that $\sum_{ \mathbf{r}\in\{-1,+1\}^s }e_j(\mathbf{r}) e_{j'}(\mathbf{r}) = 0$ for $j\neq j'$ and $\sum_{\mathbf{r}} e_j^2(\mathbf{r}) = 2^s \binom{s}{j}$. Note that $X\in [0, \alpha/N]$ ($\alpha$ is fixed here, see \ref{Appendix_worst_case_construction} for details),
\begin{align*}
    Q_{0,s}(\mathbf{r} ) = 2^{-s} \dsE_{X\sim \mu_i} \left[ \prod_{t=1}^s (1 + r_t X) \right] \geq 2^{-s} (1-\alpha/N)^s.
\end{align*}
Denote $\Delta m_j := m_j^{(1)} - m_j^{(0)}$ and we have by our construction that $m_j=0$ for $j=0,\dots,k-1$ and for $j\geq k$,
\begin{align*}
    |\Delta m_j| \leq \epsilon N^k (\alpha/N)^j \leq C\, \epsilon (1/N)^{j-k}.
\end{align*}
As a result,
\begin{align*}
    \sum_{ \mathbf{r} } (Q_{1,s}( \mathbf{r}) - Q_{0,s} (\mathbf{r}) )^2 = \sum_{ \mathbf{r} } 2^{-2s} \sum_{j=k}^s (\Delta m_j)^2 e_j^2(\mathbf{r}) = 2^{-s} \sum_{j=k}^s (\Delta m_j)^2 \binom{s}{j}.
\end{align*}
So we get that  
\begin{align*}
    \chi^2 ( Q_{1,s} \| Q_{0,s} )  &= \sum_{\mathbf{r}}\frac{(Q_{1,s}(\mathbf{r}) - Q_{0,s}(\mathbf{r}) )^2}{Q_{0,s}(\mathbf{r})} \leq (1 - 1/N)^{-s} \sum_{j=k}^s (\Delta m_j)^2 \binom{s}{j}\\
    &\leq C \epsilon^2 (1 - 1/N)^{-s} \sum_{j=k}^s  \binom{s}{j} \left( \frac{1}{N} \right)^{2(j-k)} \leq C \epsilon^2 (1 - 1/N)^{-s} \binom{s}{k} \sum_{t=0}^{s-k} \frac{1}{t!}  \left( \frac{s}{N^2} \right)^{t}\\
    &\leq C \epsilon^2 (1 - 1/N)^{-s} \binom{s}{k} \exp \left(\frac{ s}{N^2} \right) \leq C \epsilon^2  \binom{s}{k}.
\end{align*}
Thus, we obtain that
\begin{align}\label{final_KL_lower_bound_proof}
    \text{KL} (P_1 \|P_0) \leq   \left[ \sum_{T_\ell \geq k} \chi^2 ( Q_{1,T_\ell} \| Q_{0,T_\ell} ) \right] 
    \leq C \epsilon^2\left[ \sum_{\ell=1}^{N^2} \binom{T_\ell}{k} \right]
    = C \epsilon^2 \frac{\binom{M}{k}}{N^{2k-2}}.
\end{align}
So we have by (\ref{Bretagnolle_Huber_inequality}) and (\ref{final_KL_lower_bound_proof}) that 
\begin{align*}
    M \geq C N^{2-2/k} \left( \frac{k! \log(1/\delta)}{\epsilon^2} \right)^{1/k}.
\end{align*}

\end{proof}

\subsection{Proof of Theorem~\ref{thm: sample complexity of MMD-k, k large}}\label{appsub: MMD-k k large}

\begin{proof}
Let $n:=N^2$ be the number of labels. Draw $M$ labels $L_1,\dots,L_M$ i.i.d.\ uniformly from $[n]$ and
let $T_\ell:=\sum_{i=1}^M \mathbf 1\{L_i=\ell\}$ be the occupancy of label $\ell$.
Then $(T_1,\dots,T_n)\sim \mathrm{Multinomial}(M;1/n,\dots,1/n)$ and each marginal satisfies
\[
T_\ell \sim \mathrm{Bin}\Big(M,\frac{1}{n}\Big),\qquad \mu:=\dsE[T_\ell]=\frac{M}{n}.
\]
Recall $m=\sum_{\ell=1}^n \mathbf 1\{T_\ell\ge k\}$ and the estimator uses only labels with $T_\ell\ge k$.

\paragraph{Lower bound.}
Fix $\eta\in(0,1)$ and set $M=(1-\eta)nk$, so $\mu=(1-\eta)k$.
For a fixed $\ell$, write $k=(1+\alpha)\mu$ with $\alpha=\eta/(1-\eta)$. By the Chernoff bound as Proposition \ref{prop:Chernoff-S},
\[
\Pr(T_\ell\ge k)=\Pr\big(T_\ell\ge (1+\alpha)\mu\big)
\le \exp\!\left(-\frac{\mu\alpha^2}{2+\alpha}\right).
\]
A direct calculation gives
\[
\frac{\mu\alpha^2}{2+\alpha}
=
\frac{(1-\eta)k\cdot (\eta/(1-\eta))^2}{2+\eta/(1-\eta)}
=
\frac{\eta^2}{2-\eta}\,k,
\]
hence $\Pr(T_\ell\ge k)\le \exp\!\big(-\frac{\eta^2}{2-\eta}k\big)$.
By the union bound,
\[
\Pr(m\ge 1)=\Pr(\exists\,\ell:\ T_\ell\ge k)
\le \sum_{\ell=1}^n \Pr(T_\ell\ge k)
\le n\exp\!\left(-\frac{\eta^2}{2-\eta}\,k\right).
\]
Thus if $k\ge \frac{2-\eta}{\eta^2}\log\frac{n}{\delta}$, then $\Pr(m=0)\ge 1-\delta$.
In particular, for $k\sim N$ we have $k\gg \log n$, so with high probability $m=0$ when $M=(1-\eta)nk$.
Therefore any $(\epsilon,\delta)$-guarantee requires $M=\Omega(nk)=\Omega(N^2k)$.

\paragraph{Upper bound.}
Fix $\eta\in(0,1)$ and set $M=(1+\eta)nk$, so $\mu=(1+\eta)k$.
For a fixed $\ell$, write $k=(1-\beta)\mu$ with $\beta=\eta/(1+\eta)$. By the Chernoff lower-tail bound as Proposition \ref{prop:Chernoff-S},
\[
\Pr(T_\ell<k)\le \Pr\big(T_\ell\le (1-\beta)\mu\big)\le \exp\!\left(-\frac{\mu\beta^2}{2}\right)
= \exp\!\left(-\frac{\eta^2}{2(1+\eta)}\,k\right).
\]
Union bound over $\ell\in[n]$ yields
\[
\Pr(T_{\min}<k)\le n\exp\!\left(-\frac{\eta^2}{2(1+\eta)}\,k\right).
\]
Hence if $k\ge \frac{2(1+\eta)}{\eta^2}\log\frac{n}{\delta}$, then $\Pr(T_{\min}\ge k)\ge 1-\delta$, and thus $m=n$.

For $m=n$, an upper bound of sample complexity is given by $\frac{k}{\epsilon^2}\log \frac{1}{\delta}$ (see Theorem~\ref{theorem:general_version}), for large $N$, we can assume $nk> \frac{k}{\epsilon^2}\log \frac{1}{\delta} $, thus only keep $n k$ as the upper bound.

Combining the lower and upper bounds, for $k\sim N$ we obtain
$M=\Theta(nk)=\Theta(N^2k)$ (equivalently, $M=\Theta(N^3)$ when $k=\Theta(N)$).
\end{proof}

\subsection{Proof of Theorem~\ref{thm: sample complexity MMD-k, prior information}}\label{appsub: MMD-k prior information}
We prove Theorem~\ref{thm: sample complexity MMD-k, prior information} by two steps, we fist consider the error introduced by replacing states with central states defined in Definition~\ref{def: epsilon ball}, and then consider the error introduced by relabeling in the SWAP test.
\begin{lemma}[Error introduced by central states]
    Consider two $N$-state quantum ensembles $\scE_1$ and $\scE_2$, suppose states in each of them can be devided into $n$ $\epsilon$-balls with radius $\epsilon_b$ around $n$ states uniformly, and we are given the prior information about which $\epsilon$-ball every state falls into. If we use the central states of $\epsilon$-balls to calculate the MMD-$k$ metric, the error introduced is
    \begin{equation}
        \Delta(\scD^{(k)}(\scE_1,\scE_2))=\abs{\scD^{(k)}(\scE_1,\scE_2)-\scD^{(k)}(\widetilde{\scE_1},\widetilde{\scE_2})} \le 16k \sqrt{\epsilon_b}.
    \end{equation}\label{lem: error introduced by central states}
\end{lemma}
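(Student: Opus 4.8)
The plan is to bound the perturbation of $\scD^{(k)}$ when each state $\ket{\psi}$ in $\scE_1$ (and similarly for $\scE_2$) is replaced by the central state $\ket{\widetilde\psi}$ of the $\epsilon$-ball containing it, using the representation $\scD^{(k)}(\scE_1,\scE_2)=\bar F^{(k)}(\scE_1,\scE_1)+\bar F^{(k)}(\scE_2,\scE_2)-2\bar F^{(k)}(\scE_1,\scE_2)$ from Definition~\ref{def:k-MMD}. By the triangle inequality it suffices to control $\abs{\bar F^{(k)}(\scE_a,\scE_b)-\bar F^{(k)}(\widetilde{\scE_a},\widetilde{\scE_b})}$ for each of the three pairs $(a,b)$, and since each $\bar F^{(k)}$ is an average over sampled states, it is enough to bound, for a single pair of states, the quantity $\bigl|\,\abs{\braket{\psi}{\phi}}^{2k}-\abs{\braket{\widetilde\psi}{\widetilde\phi}}^{2k}\,\bigr|$, where $1-\abs{\braket{\psi}{\widetilde\psi}}^2\le\epsilon_b$ and $1-\abs{\braket{\phi}{\widetilde\phi}}^2\le\epsilon_b$.

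First I would reduce the two-sided replacement to two one-sided replacements: replace $\ket{\psi}\to\ket{\widetilde\psi}$ first, then $\ket{\phi}\to\ket{\widetilde\phi}$, and add the two errors. For a one-sided step, write $x=\abs{\braket{\psi}{\phi}}^2$ and $\tilde x=\abs{\braket{\widetilde\psi}{\phi}}^2$, both in $[0,1]$. Since $t\mapsto t^k$ is $k$-Lipschitz on $[0,1]$, $\abs{x^k-\tilde x^k}\le k\,\abs{x-\tilde x}$, so the task becomes bounding $\abs{x-\tilde x}=\bigl|\,\abs{\braket{\psi}{\phi}}^2-\abs{\braket{\widetilde\psi}{\phi}}^2\,\bigr|$. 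Here I would use that fidelity (or equivalently $\abs{\braket{\cdot}{\phi}}^2$ viewed as an operator inner product $\Tr(\ketbra{\phi}{\phi}\cdot)$) is $1$-Lipschitz in trace distance: $\abs{\Tr(\ketbra{\phi}{\phi}(\ketbra{\psi}{\psi}-\ketbra{\widetilde\psi}{\widetilde\psi}))}\le \tfrac12\norm{\ketbra{\psi}{\psi}-\ketbra{\widetilde\psi}{\widetilde\psi}}_1$... more directly, $\Abs{\Tr(\ketbra{\phi}{\phi}\Delta)}\le\norm{\Delta}_\infty$ is too crude; instead use $\norm{\ketbra{\psi}{\psi}-\ketbra{\widetilde\psi}{\widetilde\psi}}_1=2\sqrt{1-\abs{\braket{\psi}{\widetilde\psi}}^2}\le 2\sqrt{\epsilon_b}$ together with $\Abs{\Tr(P\,\Delta)}\le\tfrac12\norm{\Delta}_1$ for a projector $P$, giving $\abs{x-\tilde x}\le\sqrt{\epsilon_b}$. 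Hence one one-sided step costs at most $k\sqrt{\epsilon_b}$, two steps cost $2k\sqrt{\epsilon_b}$ per ordered pair, this holds pointwise so it survives averaging, and summing the contributions from the three terms $\bar F^{(k)}(\scE_1,\scE_1),\bar F^{(k)}(\scE_2,\scE_2)$ (coefficient $1$ each, but with two-sided replacement costing $2\cdot 2k\sqrt{\epsilon_b}$... ) with the right bookkeeping of coefficients $1,1,2$ yields the stated $16k\sqrt{\epsilon_b}$, with the numerical constant absorbing the loose factors in these Lipschitz estimates.

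The main obstacle I anticipate is pinning down the exact constant: the bound $16k\sqrt{\epsilon_b}$ is generous, so the challenge is not a delicate argument but rather tracking the factors carefully through (i) the two-sided-to-one-sided reduction (factor $2$), (ii) the three summands with coefficients $1,1,2$ (factor $4$), and (iii) the chosen Lipschitz constant for $t\mapsto t^k$ versus a possibly sharper $\abs{x^k-\tilde x^k}\le k\max(x,\tilde x)^{k-1}\abs{x-\tilde x}$ estimate, plus the trace-distance bound $\norm{\ketbra{\psi}{\psi}-\ketbra{\widetilde\psi}{\widetilde\psi}}_1\le 2\sqrt{\epsilon_b}$. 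A clean way to organize this is to first prove the pointwise inequality $\bigl|\,\abs{\braket{\psi}{\phi}}^{2k}-\abs{\braket{\widetilde\psi}{\widetilde\phi}}^{2k}\,\bigr|\le 2k\sqrt{\epsilon_b}$ as a standalone claim, then note $\scD^{(k)}$ is a signed sum with absolute coefficients $1+1+2=4$ of such averaged quantities, yielding $\le 8k\sqrt{\epsilon_b}$; the factor $16$ then comfortably covers any slack (and one could even keep the sharper $8k\sqrt{\epsilon_b}$ if desired). This will then feed into the second step of Theorem~\ref{thm: sample complexity MMD-k, prior information}, where the relabeling error in the SWAP test is handled separately.
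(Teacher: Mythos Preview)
Your proposal is correct and follows essentially the same strategy as the paper: reduce to a pointwise bound on $\bigl|\abs{\braket{\psi}{\phi}}^{2k}-\abs{\braket{\widetilde\psi}{\widetilde\phi}}^{2k}\bigr|$, use that $t\mapsto t^k$ is $k$-Lipschitz on $[0,1]$, control the fidelity perturbation via (what amounts to) the trace distance between pure states, average, and sum the four $\bar F^{(k)}$ terms with coefficients $1,1,2$. The only cosmetic difference is in how the fidelity step is organized: the paper writes $d(\psi,\phi):=\sqrt{1-\abs{\braket{\psi}{\phi}}^2}$ and uses $\abs{d^2-\tilde d^2}\le 2\abs{d-\tilde d}\le 2\bigl(d(\psi,\widetilde\psi)+d(\phi,\widetilde\phi)\bigr)\le 4\sqrt{\epsilon_b}$ via the triangle inequality for the metric $d$, whereas you do two one-sided replacements and invoke $\abs{\Tr(P\Delta)}\le\tfrac12\norm{\Delta}_1$ for a projector $P$ and traceless $\Delta$, obtaining $2\sqrt{\epsilon_b}$ per pair; your route therefore yields the sharper constant $8k\sqrt{\epsilon_b}$ that you already noticed, and the stated $16k\sqrt{\epsilon_b}$ is simply the paper's looser bookkeeping.
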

\begin{proof}
    Consider two states $\ket{\psi},\ket{\phi}$ sampled from ensembles and their corresponding central states $\ket{\widetilde{\psi}}, \ket{\widetilde{\phi}}$, denote $d(\psi,\psi) = \sqrt{1-\abs{\braket{\psi}{\phi}}^2}$, then by Definition~\ref{def: epsilon ball}, we have
    \begin{equation}
        d(\psi,\widetilde{\psi})\le\sqrt{\epsilon_b},~d(\phi,\widetilde{\phi})\le\sqrt{\epsilon_b}.
    \end{equation}
    We consider the error on the fidelity
    \begin{equation}
        \abs{X(\psi,\phi)-X(\widetilde{\psi},\widetilde{\phi})} = \abs{d(\psi,\phi)^2-d(\widetilde{\psi},\widetilde{\phi})^2}\le2\abs{d(\psi,\phi)-d(\widetilde{\psi},\widetilde{\phi})}\le2(d(\psi,\widetilde{\psi})+d(\phi,\widetilde{\phi}))\le 4\sqrt{\epsilon_b},
    \end{equation}
    here the second inequality comes from triangle inequality. Then (property of $k$-Lipschitz on $[0,1]$)
    \begin{equation}
        \abs{X(\psi,\phi)^k-X(\widetilde{\psi},\widetilde{\phi})^k} \le k \abs{X(\psi,\phi)-X(\widetilde{\psi},\widetilde{\phi})} \le 4k \sqrt{\epsilon_b}. \label{eq: error on X^k}
    \end{equation}
    Average over Eq.~(\ref{eq: error on X^k}), we have $\abs{\bar{F}^{(k)}-\widetilde{\bar{F}}^{(k)}}\le 4k\sqrt{\epsilon_b}$, since the formula of $\scD^{(k)}$ Eq.~(\ref{eq: k-MMD}) has four $\bar{F}$ terms, we have $\abs{\scD^{(k)}-\widetilde{\scD}^{(k)}}\le 16k\sqrt{\epsilon_b}$.
\end{proof}

Now we examine what value is estimated by SWAP test and U-state estimator Eq.~(\ref{eq: estimator of MMD-k}) when considering the $\epsilon$-model. Instead of $N^2$ pairs of states, we now sample from $n^2$ pairs of balls, so it is important to figure out the value we are estimating using the U-stat estimator Eq. (\ref{eq: estimator of MMD-k}).

We denote the $n$ balls in the $\epsilon$-model by $\{B_a\}_{a=1}^n$, and let
$\ket{\tilde{\psi}_a}$ be the chosen central state of ball $B_a$.
For a pair of balls $(B_a,B_b)$, define the (conditional) mean fidelity
\begin{equation}
    \mu_{ab} \;:=\; \dsE\!\left[\, \abs{\braket{\psi}{\phi}}^2 \,\middle|\, \psi\in B_a,\ \phi\in B_b \right],
\end{equation}
where $\ket{\psi},\ket{\phi}$ are sampled uniformly from the corresponding balls.
Recall that the SWAP test output $R\in\{0,1\}$ satisfies
$\dsE[R \mid \psi,\phi] = \abs{\braket{\psi}{\phi}}^2$.
Therefore, if in each trial we first sample a ball-pair label $(a,b)$ and then
sample $(\psi,\phi)$ uniformly from $B_a\times B_b$, the induced random variable
$R_{ab}$ (the SWAP outcome conditioned on label $(a,b)$) obeys
\begin{equation}\label{eq: swap_unbiased_mu_ab}
    \dsE\!\left[ R_{ab} \,\middle|\, a,b \right]
    \;=\; \dsE\!\left[\, \dsE\!\left[ R \mid \psi,\phi \right] \,\middle|\, a,b \right]
    \;=\; \dsE\!\left[\, \abs{\braket{\psi}{\phi}}^2 \,\middle|\, \psi\in B_a,\ \phi\in B_b \right]
    \;=\; \mu_{ab}.
\end{equation}

Now fix a ball-pair label $\ell=(a,b)$ and write $\mu_\ell=\mu_{ab}$.
Conditioned on $\ell$, the samples $\{R_{\ell,t}\}_{t=1}^{T_\ell}$ are i.i.d.
Bernoulli random variables with mean $\mu_\ell$.
Define the corresponding order-$k$ U-statistic kernel (cf. Eq.~(\ref{eq: estimator of MMD-k}))
\begin{equation}\label{eq: Z_ell_ball}
    Z_\ell \;:=\;
    \frac{1}{\binom{T_\ell}{k}}
    \sum_{1\le t_1<\cdots<t_k\le T_\ell}
    \prod_{j=1}^k R_{\ell,t_j}.
\end{equation}
By independence and symmetry,
\begin{equation}\label{eq: EZell_equals_mu_pow_k}
    \dsE[ Z_\ell \mid \ell]
    \;=\;
    \dsE\!\left[ \prod_{j=1}^k R_{\ell,t_j} \,\middle|\, \ell \right]
    \;=\;
    \prod_{j=1}^k \dsE[ R_{\ell,t_j} \mid \ell]
    \;=\; \mu_\ell^k .
\end{equation}
Averaging over the random label $\ell$, we see that the estimator constructed
from $\{Z_\ell\}$ is estimating the \emph{coarsened} $k$-moment functional
\begin{equation}\label{eq: Fk_ball_def}
    \bar{F}^{(k)}_{\mathrm{ball}}
    \;:=\; \dsE_{\ell}\!\left[\mu_\ell^k\right],
\end{equation}
rather than the original $\bar{F}^{(k)}=\dsE[X^k]$ with $X=\abs{\braket{\psi}{\phi}}^2$.
Accordingly, the induced coarsened MMD-$k$ quantity is
\begin{equation}\label{eq: Dk_ball_def}
    D^{(k)}_{\mathrm{ball}}(\scE_1,\scE_2)
    \;:=\;
    \bar{F}^{(k)}_{\mathrm{ball}}(\scE_1,\scE_1)
    +\bar{F}^{(k)}_{\mathrm{ball}}(\scE_2,\scE_2)
    -2\,\bar{F}^{(k)}_{\mathrm{ball}}(\scE_1,\scE_2).
\end{equation}

We next quantify the approximation error between the original quantity
$D^{(k)}(\scE_1,\scE_2)$ and its coarsened version $D^{(k)}_{\mathrm{ball}}(\scE_1,\scE_2)$.
Fix a ball-pair label $\ell=(a,b)$ and denote the central-state fidelity by
\begin{equation*}
    \tilde{x}_\ell \;:=\; \abs{\braket{\tilde{\psi}_a}{\tilde{\phi}_b}}^2.
\end{equation*}
Lemma~\ref{lem: error introduced by central states} implies that for any
$\ket{\psi}\in B_a$ and $\ket{\phi}\in B_b$,
\begin{equation}\label{eq: lemmaA6_used_pointwise}
    \Abs{\abs{\braket{\psi}{\phi}}^{2k} - \tilde{x}_\ell^{\,k}}
    \;\le\; 4k\sqrt{\epsilon_b},
    \qquad
    \Abs{\abs{\braket{\psi}{\phi}}^{2} - \tilde{x}_\ell}
    \;\le\; 4\sqrt{\epsilon_b}.
\end{equation}
Taking conditional expectation over $\psi\in B_a$ and $\phi\in B_b$ gives
\begin{align}
    \Abs{ \dsE\!\left[ X^k \mid \ell \right] - \tilde{x}_\ell^{\,k} }
    &\le \dsE\!\left[ \Abs{X^k - \tilde{x}_\ell^{\,k}} \,\middle|\, \ell \right]
    \le 4k\sqrt{\epsilon_b}, \label{eq: cond_moment_to_center} \\
    \Abs{ \mu_\ell - \tilde{x}_\ell }
    &= \Abs{ \dsE\!\left[ X \mid \ell \right] - \tilde{x}_\ell }
    \le \dsE\!\left[ \Abs{X - \tilde{x}_\ell} \,\middle|\, \ell \right]
    \le 4\sqrt{\epsilon_b}, \label{eq: cond_mean_to_center}
\end{align}
where $X=\abs{\braket{\psi}{\phi}}^2$.
Moreover, since $x\mapsto x^k$ is $k$-Lipschitz on $[0,1]$,
\begin{equation}\label{eq: lipschitz_mu_to_center}
    \Abs{\mu_\ell^k - \tilde{x}_\ell^{\,k}}
    \le k\,\Abs{\mu_\ell - \tilde{x}_\ell}
    \le 4k\sqrt{\epsilon_b}.
\end{equation}
Combining Eqs.~(\ref{eq: cond_moment_to_center}) and (\ref{eq: lipschitz_mu_to_center}),
we obtain the per-label bound
\begin{equation}\label{eq: per_label_bias_bound}
    \Abs{ \dsE\!\left[ X^k \mid \ell \right] - \mu_\ell^k }
    \le
    \Abs{ \dsE\!\left[ X^k \mid \ell \right] - \tilde{x}_\ell^{\,k} }
    + \Abs{ \mu_\ell^k - \tilde{x}_\ell^{\,k} }
    \le 8k\sqrt{\epsilon_b}.
\end{equation}
Averaging over $\ell$ yields
\begin{equation}\label{eq: Fk_vs_Fk_ball_bias}
    \Abs{ \bar{F}^{(k)} - \bar{F}^{(k)}_{\mathrm{ball}} }
    \le 8k\sqrt{\epsilon_b}.
\end{equation}
Finally, applying Eq.~(\ref{eq: Fk_vs_Fk_ball_bias}) to the three terms in
$D^{(k)}=\bar{F}^{(k)}(\scE_1,\scE_1)+\bar{F}^{(k)}(\scE_2,\scE_2)-2\bar{F}^{(k)}(\scE_1,\scE_2)$,
we obtain
\begin{align}\label{eq: Dk_vs_Dk_ball_bias}
    \Abs{ D^{(k)}(\scE_1,\scE_2) - D^{(k)}_{\mathrm{ball}}(\scE_1,\scE_2) }
    &\le
    \Abs{ \bar{F}^{(k)}_{11} - \bar{F}^{(k)}_{\mathrm{ball},11} }
    +\Abs{ \bar{F}^{(k)}_{22} - \bar{F}^{(k)}_{\mathrm{ball},22} }
    +2\Abs{ \bar{F}^{(k)}_{12} - \bar{F}^{(k)}_{\mathrm{ball},12} } \nonumber\\
    &\le 8k\sqrt{\epsilon_b}+8k\sqrt{\epsilon_b}+2\cdot 8k\sqrt{\epsilon_b}
    \;=\; 32k\sqrt{\epsilon_b}.
\end{align}

Therefore, if we set $\epsilon' := \epsilon - 32k\sqrt{\epsilon_b} > 0$ and choose
$M$ such that
\begin{equation}
    \Pr\!\left( \Abs{ \widehat{D}^{(k)} - D^{(k)}_{\mathrm{ball}}(\scE_1,\scE_2) } \le \epsilon' \right)
    \ge 1-\delta, \label{eq: prior sampling}
\end{equation}
then by the triangle inequality and Eq.~(\ref{eq: Dk_vs_Dk_ball_bias}) we have
\begin{equation*}
    \Pr\!\left( \Abs{ \widehat{D}^{(k)} - D^{(k)}(\scE_1,\scE_2) } \le \epsilon \right)
    \ge 1-\delta.
\end{equation*}
The sample complexity to obtain the additive error $\epsilon'$ and failure probability $\delta$ in Eq. (\ref{eq: prior sampling}) is (given by Eq. (\ref{eq: sample complexity of MMD-k, upper bound for most case})) is
\begin{equation*}
    M = \scO\Big(\big(\frac{k!}{(\epsilon-32k\sqrt{\epsilon_b})^2}\log (\frac{1}{\delta})\big)^{\frac{1}{k}}n^{2-\frac{2}{k}}\Big),
\end{equation*}
with conditions $\epsilon > 32k\sqrt{\epsilon_b}$ and $n^2 \ge \frac{k!}{(\epsilon-32k\sqrt{\epsilon_b})^2}\log\frac{1}{\delta}$.

\subsection{Proof of Theorem~\ref{thm: sample complexity MMD-k, classical}}\label{appsub: proof MMD-k classical}

\begin{proof}
\textbf{Upper bound.}
Fix $k\ge 1$. To estimate $\overline F^{(k)}(\mathcal E_1,\mathcal E_2)$, draw i.i.d.\ pairs
$(i_t,j_t)$ with $i_t\sim p$ and $j_t\sim q$, and observe
\[
Z_t := X_{i_tj_t}^{\,k}\in[0,1].
\]
Then $\dsE[Z_t]=\overline F^{(k)}(\mathcal E_1,\mathcal E_2)$ and the sample mean
\[
\widehat{\overline F}^{(k)}_{12}:=\frac{1}{m}\sum_{t=1}^m Z_t
\]
satisfies Hoeffding's inequality: for any $\eta>0$,
\begin{equation}
\Pr\!\left(\left|\widehat{\overline F}^{(k)}_{12}-\overline F^{(k)}(\mathcal E_1,\mathcal E_2)\right|\ge \eta\right)
\le 2\exp(-2m\eta^2).
\label{eq:hoeffding_single}
\end{equation}
The same argument applies to $\overline F^{(k)}(\mathcal E_1,\mathcal E_1)$ and
$\overline F^{(k)}(\mathcal E_2,\mathcal E_2)$ by drawing i.i.d.\ pairs from $p\times p$ and
$q\times q$, respectively. Use independent samples (or split a total budget) to obtain
estimators $\widehat{\overline F}^{(k)}_{11},\widehat{\overline F}^{(k)}_{22},\widehat{\overline F}^{(k)}_{12}$, each from $m$ samples, and set
\[
\widehat D^{(k)}:=\widehat{\overline F}^{(k)}_{11}+\widehat{\overline F}^{(k)}_{22}
-2\,\widehat{\overline F}^{(k)}_{12}.
\]
By the triangle inequality,
\begin{align}
\big|\widehat D^{(k)}-D^{(k)}\big|
&\le \big|\widehat{\overline F}^{(k)}_{11}-\overline F^{(k)}(\mathcal E_1,\mathcal E_1)\big|
+\big|\widehat{\overline F}^{(k)}_{22}-\overline F^{(k)}(\mathcal E_2,\mathcal E_2)\big| \nonumber\\
&\quad\ +2\big|\widehat{\overline F}^{(k)}_{12}-\overline F^{(k)}(\mathcal E_1,\mathcal E_2)\big|.
\label{eq:tri_bound}
\end{align}
Hence it suffices that each of the three deviations is at most $\epsilon/4$.
Let $\eta=\epsilon/4$ in \eqref{eq:hoeffding_single} and allocate failure probability $\delta/3$
to each estimate. By a union bound,
\[
\Pr\!\left(\big|\widehat D^{(k)}-D^{(k)}\big|>\epsilon\right)
\le \delta
\]
provided
\[
2\exp\!\left(-2m(\epsilon/4)^2\right)\le \frac{\delta}{3},
\qquad\text{i.e.}\qquad
m \ge \frac{8}{\epsilon^2}\log\frac{6}{\delta}.
\]
With total samples $M=3m$, we get
\[
M = \mathcal{O}\!\left(\frac{1}{\epsilon^2}\log\frac{1}{\delta}\right).
\]
This bound is independent of $N_1,N_2$ and $k$ because $Z_t\in[0,1]$.

\medskip
\textbf{Lower bound.}
Consider the special case of estimating the mean of a Bernoulli random variable:
$Z\in\{0,1\}$ with $\dsE[Z]=\mu$. This is a special case of the oracle model above
(e.g., by restricting to ensembles for which $X^k$ takes only values $0$ and $1$).
It is a standard fact \cite{canonne2022short} in minimax mean estimation / hypothesis testing that any estimator
$\widehat\mu$ achieving
\[
\Pr\big(|\widehat\mu-\mu|\le \epsilon\big)\ge 1-\delta
\]
for all $\mu$ must use
\[
M=\Omega\!\left(\frac{1}{\epsilon^2}\log\frac{1}{\delta}\right)
\]
samples (e.g., by distinguishing $\mu=\tfrac12-\epsilon$ from $\mu=\tfrac12+\epsilon$ using
product Bernoulli distributions and applying information-theoretic testing lower bounds).
Therefore the same lower bound applies to estimating $D^{(k)}$ in the oracle model.

Combining the upper and lower bounds yields $M=\Theta(\epsilon^{-2}\log(1/\delta))$.
\end{proof}

\section{The algorithm of numerical simulation}\label{app: algorithms}
As shown in Algorithm \ref{alg:estimate-M}, we estimated minimum number of shots needed $M(N,\epsilon,\delta)$ by bisection method. The initial maximum value is the upper bound given by theorems in Section \ref{sec: sample complexity}, denoted by $M_{\text{hoeff}}$, to make the results robust, before bisection search, we test if $M_{\text{hoeff}}$ can achieve the additive error $\epsilon$ and the failure probability $\delta$, and adjust the maximum value if needed. The number of repetitions $K=30$, the number of trials $T=20$.

\begin{algorithm}[t]
\caption{Estimating the minimum sample $M(N,\epsilon,\delta)$ to estimate a distance metric between ensembles}
\label{alg:estimate-M}
\begin{algorithmic}
\renewcommand{\algorithmicrequire}{\textbf{Input:}}
\renewcommand{\algorithmicensure}{\textbf{Output:}}
\Require Ensemble samplers $\mathcal{E}_1,\mathcal{E}_2$; ensemble size $N$; additive error $\epsilon>0$; failure probability $\delta\in(0,1)$; repetitions $K$; number of trials $T$; metric/estimator $D(\cdot,\cdot)$ (e.g., Wasserstein or MMD-$k$ and the corresponding estimator); initial guess $M_{\mathrm{hoeff}}(N,\epsilon,\delta)$ given by upper bounds; maximum bracketing expansions $J_{\max}$.
\Ensure Trial-wise estimates $\{M^{(t)}\}_{t=1}^T$ and an aggregate $\widehat{M}(N,\epsilon,\delta)$.

\For{$t=1$ to $T$}
    \State Sample $\mathcal{S}_1^{(t)}=\{\lvert\psi_1\rangle^{(t)},\ldots,\lvert\psi_N\rangle^{(t)}\}\overset{\mathrm{i.i.d.}}{\sim}\mathcal{E}_1$ and $\mathcal{S}_2^{(t)}=\{\lvert\phi_1\rangle^{(t)},\ldots,\lvert\phi_N\rangle^{(t)}\}\overset{\mathrm{i.i.d.}}{\sim}\mathcal{E}_2$ independently.
    \State Compute $D_{\mathrm{true}}^{(t)} \gets D(\mathcal{S}_1^{(t)},\mathcal{S}_2^{(t)})$.
    \State $lo \gets 0$, \quad $hi \gets M_{\mathrm{hoeff}}(N,\epsilon,\delta)$.

    \State \textbf{(Robust bracketing to ensure $hi$ passes)}
    \State $j \gets 0$
    \Repeat
        \State $s \gets 0$
        \For{$r=1$ to $K$}
            \State Using $hi$ shots, compute $\widehat{D}^{(t,r)}(hi)$ from $\mathcal{S}_1^{(t)},\mathcal{S}_2^{(t)}$.
            \If{$\widehat{D}^{(t,r)}(hi)$ is defined and $\bigl|\widehat{D}^{(t,r)}(hi)-D_{\mathrm{true}}^{(t)}\bigr| \le \epsilon$}
                \State $s \gets s+1$
            \EndIf
        \EndFor
        \State $p_{hi} \gets s/K$
        \If{$p_{hi} < 1-\delta$}
            \State $hi \gets 2hi$; \quad $j \gets j+1$
        \EndIf
    \Until{$(p_{hi} \ge 1-\delta)$ or $(j \ge J_{\max})$}

    \State \textbf{(Bisection with the same stopping rule as the implementation)}
    \While{$hi-lo > \max\{100,\lfloor hi/50 \rfloor\}$}
        \State $mid \gets \lfloor (lo+hi)/2 \rfloor$
        \If{$mid = 0$}
            \State $lo \gets mid$
            \State \textbf{continue}
        \EndIf

        \State $s \gets 0$
        \For{$r=1$ to $K$}
            \State Using $mid$ shots, compute $\widehat{D}^{(t,r)}(mid)$ from $\mathcal{S}_1^{(t)},\mathcal{S}_2^{(t)}$.
            \If{$\widehat{D}^{(t,r)}(mid)$ is defined and $\bigl|\widehat{D}^{(t,r)}(mid)-D_{\mathrm{true}}^{(t)}\bigr| \le \epsilon$}
                \State $s \gets s+1$
            \EndIf
        \EndFor
        \State $p_{mid} \gets s/K$

        \If{$p_{mid} \ge 1-\delta$}
            \State $hi \gets mid$
        \Else
            \State $lo \gets mid$
        \EndIf
    \EndWhile

    \State Set $M^{(t)} \gets hi$.
\EndFor
\State Aggregate $\widehat{M}(N,\epsilon,\delta)$ from $\{M^{(t)}\}_{t=1}^T$ (mean and std).
\end{algorithmic}
\end{algorithm}

\section{Details on QUDDPM and training}
\label{app:quddpm}

We summarize the schematic (Fig.~\ref{fig: QUDDPM_structure}) and the layerwise training protocol (Fig.~\ref{fig: QUDDPM_training}) of the quantum denoising diffusion probabilistic model (QuDDPM). Let $\mathcal{E}_0$ be an unknown distribution over $n$-qubit pure states on $\mathcal{V}\simeq(\mathbb{C}^2)^{\otimes n}$. The training dataset is an ensemble
\begin{equation*}
S_0=\bigl\{\ket{\psi_i^{(0)}}\bigr\}_{i=1}^{N}\sim \mathcal{E}_0,
\end{equation*}
and the objective is to learn a generative procedure that outputs samples $\ket{\tilde\psi^{(0)}}$ whose induced ensemble $\tilde{\mathcal{E}}_0$ matches $\mathcal{E}_0$. QuDDPM introduces a sequence of intermediate ensembles $\{S_k\}_{k=0}^{T}$ (forward diffusion) and $\{\tilde S_k\}_{k=0}^{T}$ (backward denoising), where $T$ is the number of diffusion steps shown in Fig.~\ref{fig: QUDDPM_structure}.

The forward (noisy) diffusion is implemented by applying random scrambling unitaries. For each training sample $\ket{\psi_i^{(0)}}$, draw a depth-$T$ scrambling circuit consisting of unitaries $\{U_\ell^{(i)}\}_{\ell=1}^{T}$, and define the diffused states by
\begin{equation*}
\ket{\psi_i^{(k)}} := \Bigl(\prod_{\ell=1}^{k} U^{(i)}_{\ell}\Bigr)\ket{\psi_i^{(0)}},\qquad k=0,1,\dots,T.
\end{equation*}
The corresponding intermediate ensemble is $S_k:=\{\ket{\psi_i^{(k)}}\}_{i=1}^{N}$. As $k$ increases, the ensemble transitions from structured data ($k=0$) toward an (approximately) unstructured noise ensemble ($k=T$), as illustrated in Fig.~\ref{fig: QUDDPM_structure}(a--b).

The backward (denoising) process starts from a noise ensemble $\tilde S_T=\{\ket{\tilde\psi_i^{(T)}}\}_{i=1}^{\tilde N}$ and applies a sequence of parametrized quantum circuits (PQCs) $\{\tilde U_k(\theta_k)\}_{k=1}^{T}$, each acting on the $n$ system qubits together with $n_A$ ancilla qubits initialized in $\ket{0}^{\otimes n_A}$, followed by a projective measurement of the ancillas in the computational basis, as in Fig.~\ref{fig: QUDDPM_structure}(c--d). One denoising step can be written schematically as
\begin{equation*}
\ket{\tilde\psi_i^{(k-1)}}\ \leftarrow\ \textsf{Meas}_{A}\!\left[\tilde U_k(\theta_k)\bigl(\ket{\tilde\psi_i^{(k)}}\otimes\ket{0}^{\otimes n_A}\bigr)\right],
\qquad k=T,T-1,\dots,1,
\end{equation*}
which defines the intermediate denoising ensembles $\tilde S_k:=\{\ket{\tilde\psi_i^{(k)}}\}_{i=1}^{\tilde N}$ down to $\tilde S_0$. After training, generation proceeds by drawing $\ket{\tilde\psi^{(T)}}\sim \tilde S_T$ and applying the learned denoising steps sequentially from $k=T$ to $1$ to obtain a sample $\ket{\tilde\psi^{(0)}}$.

\begin{figure}[ht]
    \centering
    \includegraphics[width=0.8\linewidth]{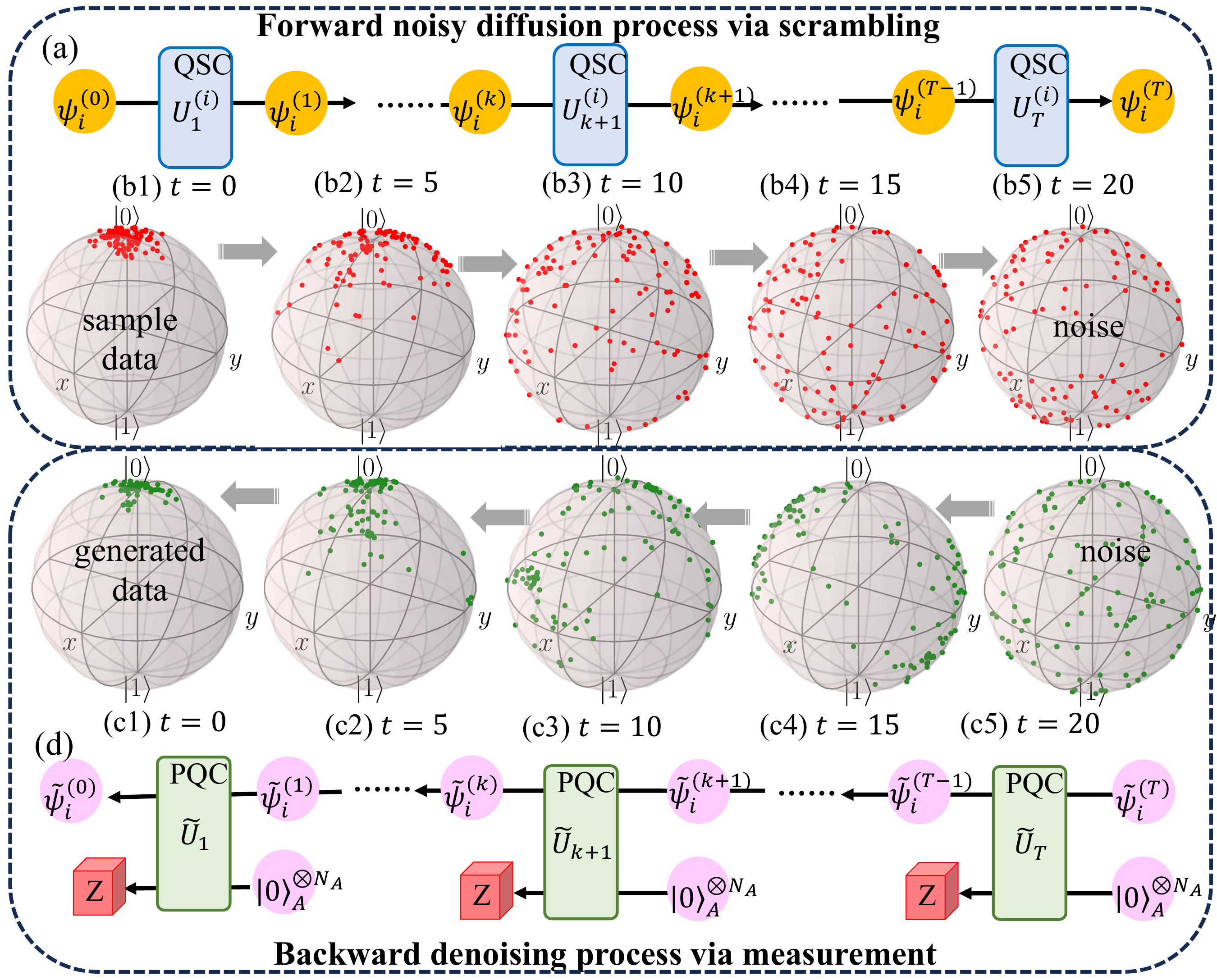}
    \caption{Schematic of QUDDPM. Reprinted from Ref.~\cite{QuDDPM_PhysRevLett.132.100602}. Copyright (2024) American Physical Society. Used with permission. The forward noisy process is implemented by a quantum scrambling circuit (QSC) in (a), while in the backward denoising process is achieved via measurement enabled by ancilla and parametrized quantum circuit (PQC) in (d). Subplots (b1)-(b5) and (c1)-(c5) present the Bloch sphere dynamics in generation of states clustering around $\ket{0}.$}
    \label{fig: QUDDPM_structure}
\end{figure}

\begin{figure}[ht]
    \centering
    \includegraphics[width=0.9\linewidth]{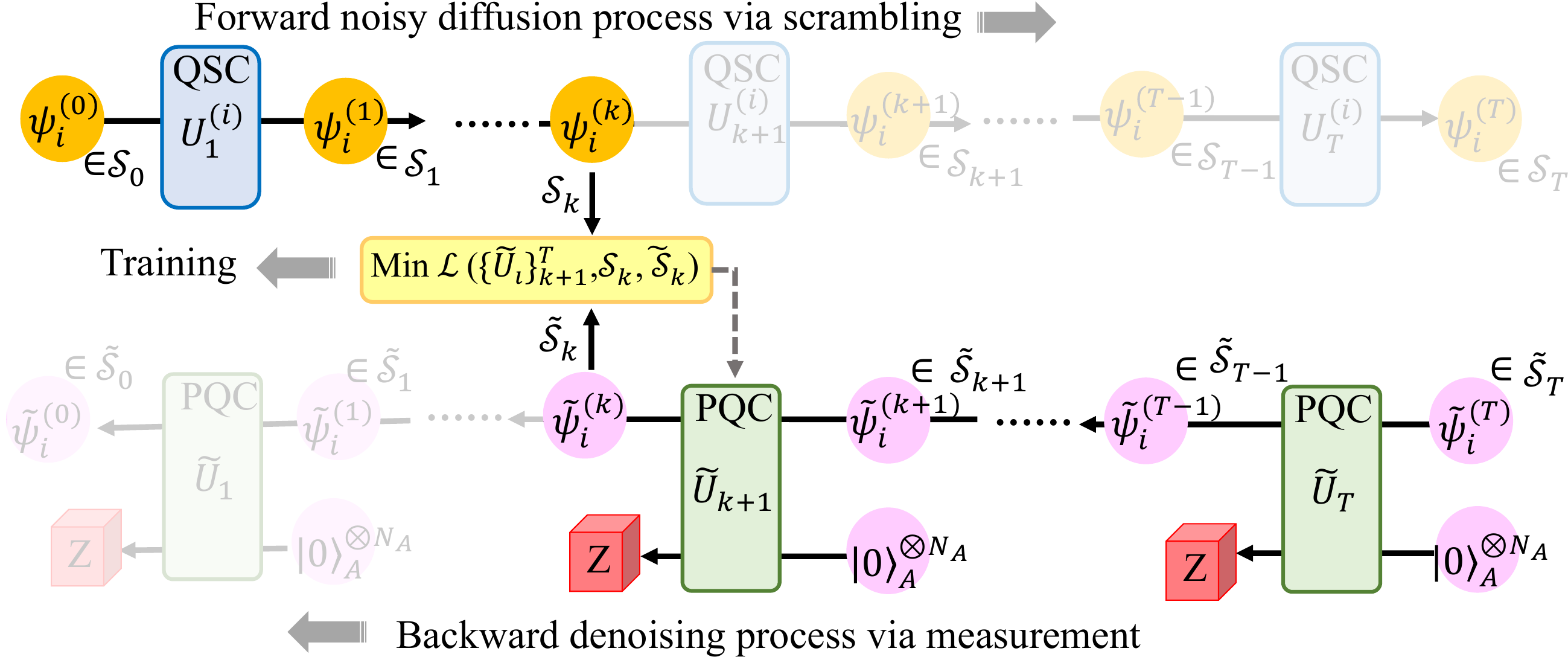}
    \caption{Training process of QUDDPM. Reprinted from Ref.~\cite{QuDDPM_PhysRevLett.132.100602}. Copyright (2024) American Physical Society. Used with permission. The training of QuDDPM at each step $t=k$. Pairwise  distance between states in generated ensemble $\tilde{\psi}_i^{(k)} \in \tilde{\mathcal{S}}_k$ and true  diffusion ensemble $\psi_i^{(k)} \in \mathcal{S}_k$ is measured and utilized in the evaluation of the loss function $\scL$.}
    \label{fig: QUDDPM_training}
\end{figure}

Training is performed in $T$ cycles and proceeds from the earliest denoising block $\tilde U_T$ toward the last block $\tilde U_1$, as summarized in Fig.~\ref{fig: QUDDPM_training}. At the cycle corresponding to index $k$, the forward ensemble $S_k$ is prepared by applying the scrambling unitaries up to step $k$ on the dataset. In parallel, the current model ensemble $\tilde S_k$ is prepared by starting from $\tilde S_T$ and applying the already-trained denoising blocks $\tilde U_T,\tilde U_{T-1},\dots,\tilde U_{k+1}$ down to step $k$. The parameters of the next block $\tilde U_{k+1}$ are then updated while keeping all previously trained blocks fixed, with the objective of making $\tilde S_k$ approach $S_k$ at that diffusion depth. Repeating this layerwise procedure for $k=T-1,T-2,\dots,0$ yields a trained reverse process that maps the noise ensemble $\tilde S_T$ back to the data ensemble $S_0$.

In our demonstration of application of MMD-$2$, we choose $\scE_0$ to be the circular $1$-qubit state ensemble, and $\scS_0$ consists $N=1000$ states, number of ancilla qubits in each measurement $n_a=2$, number of steps $T=60$, number of layers in each step $L=6$. After training, we evaluate $\scL(\tilde{\scS_t}, \scE_0)$ across the step $t$, for ground diffusion data, training data, and testing data. As shown in Fig.~\ref{fig: loss_D2_W_after_training}, the values of both MMD-$2$ and Wasserstein of training and testing data match with the ground value well. It is noteworthy that $W(\tilde{\scS_0}, \scE_0) \approx 0$ for training and testing data, meaning QUDDPM almost completely learns to generate the circular state ensemble only using MMD-$2$ as loss function.

In each training step, the number of iterations is $5000$, we use the Adam optimizer \cite{Adam_kingma2017adammethodstochasticoptimization} with learning rate $lr=5\times10^{-4}$, and in the final step of training ($t=0$, where the loss $\scL$ is near zero), we set $\scL'=\sqrt{\scL^2+10^{-8}}$ to enhance training stability. The training history for every step is shown in Fig.~\ref{fig:train_history}.

\begin{figure}[ht]
    \centering
    \includegraphics[width=0.8\linewidth]{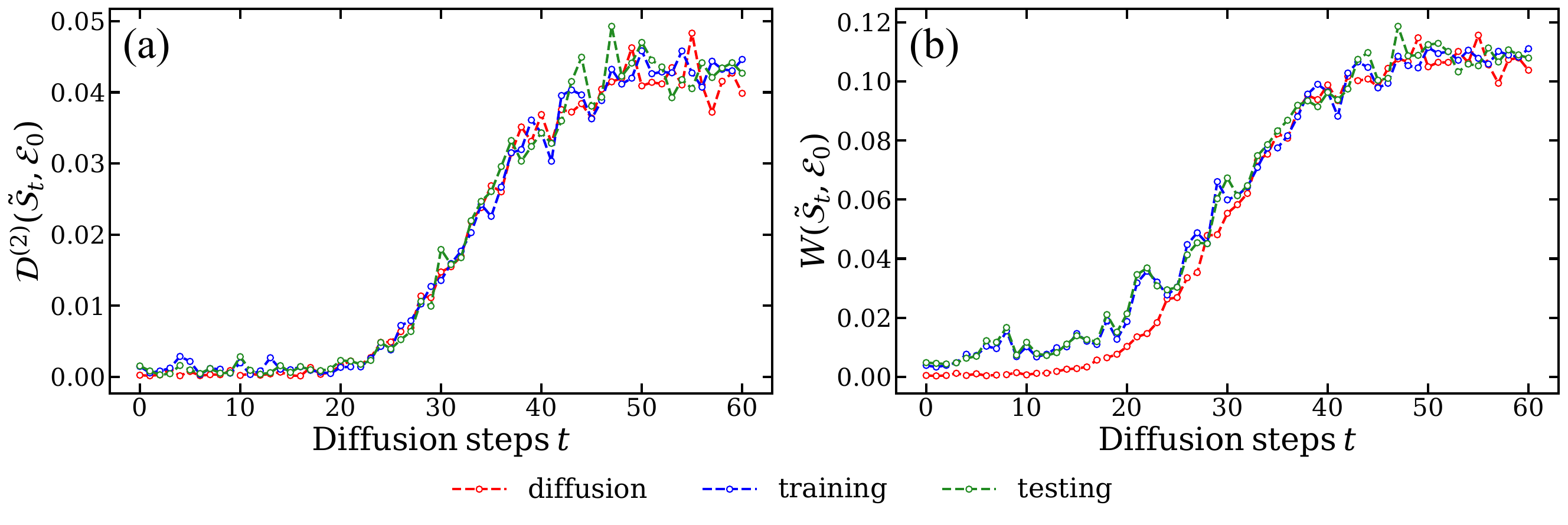}
    \caption{$\scL(\tilde{\scS_t}, \scE_0)$ across the step $t$, for ground diffusion data (red), training data (blue), and testing data (green), when $\scL$ is $\scD^{(2)}$ (a, MMD-$2$) and $W$ (b, Wasserstein).}
    \label{fig: loss_D2_W_after_training}
\end{figure}

\begin{figure}[t]
    \centering
    \includegraphics[width=\linewidth]{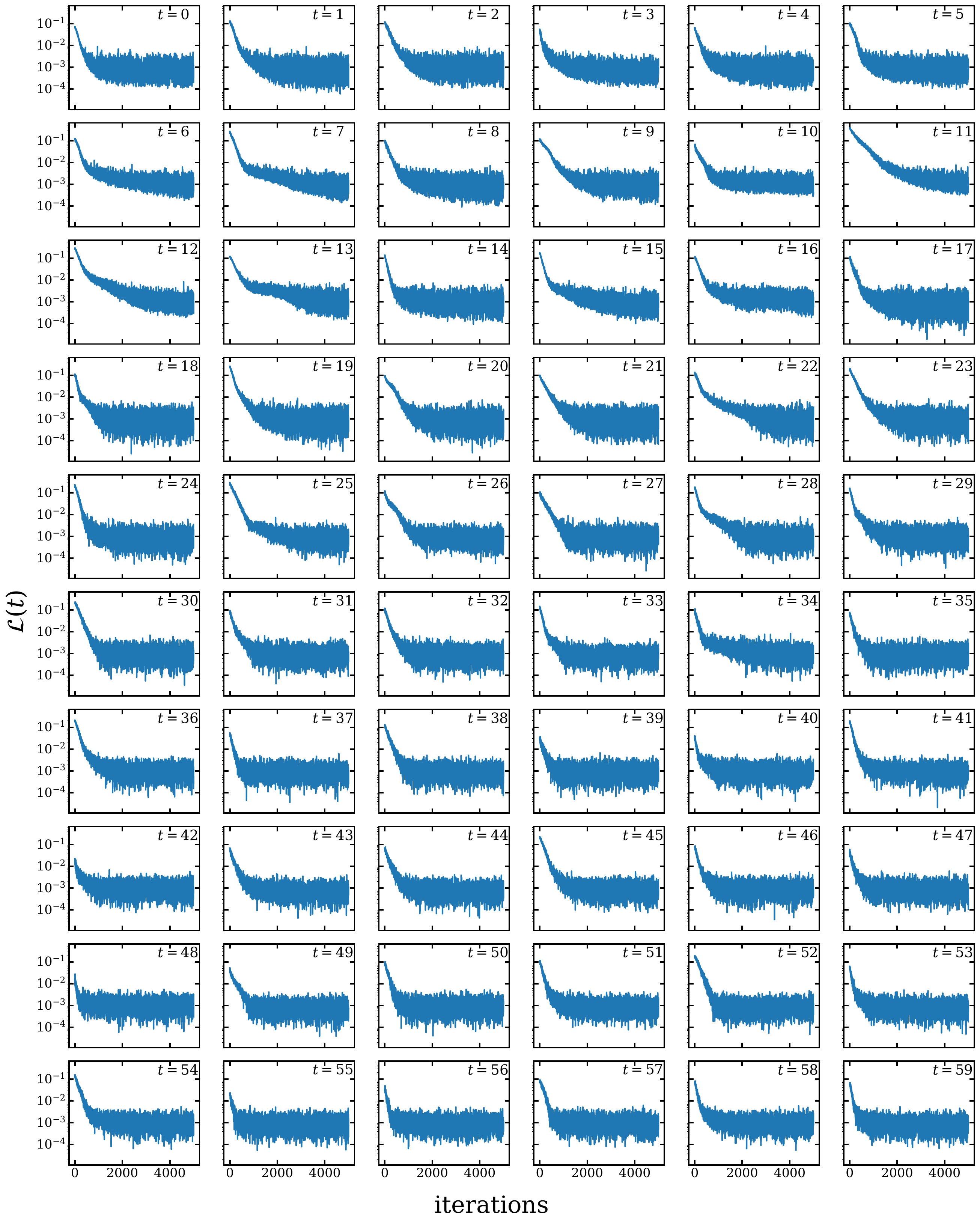}
    \caption{Training history for every step.}
    \label{fig:train_history}
\end{figure}

One may consider designing a QUDDPM that can generate $k$ identical copies of states so that the sample complexity to estimate MMD-$k$ will as small as MMD-$1$. However, it is impossible to design a nontrivial model to implement it.
\begin{theorem}[Impossibility of $k$-copies QUDDPM]
    It is impossible to build a quantum circuit $\scC$, such that
    \begin{equation*}
        \scC \Big( \ket{0}^{\otimes n_a} \otimes \ket{\psi}^{\otimes k}\Big)= \sum_{i=0}^{2^{n_a}-1} \ket{i} \otimes (K_i \ket{\psi})^{\otimes k},
    \end{equation*}
    where $\sum_i K_i^{\dagger} K_i = I$ are Kraus operators, $n_a$ is number of ancilla qubits.
\end{theorem}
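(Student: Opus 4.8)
The plan is to use the fact that a quantum circuit $\scC$ without intermediate measurements acts as an isometry, and hence preserves norms, so that the claimed identity already over-constrains the Kraus family $\{K_i\}$. Concretely, I will show that the norm-preservation requirement alone forces at most one $K_i$ to be nonzero, so that the channel $\rho\mapsto\sum_i K_i\rho K_i^\dagger$ is a unitary channel — the degenerate ``trivial'' case. I will restrict throughout to $k\ge 2$; for $k=1$ the map $\ket{0}^{\otimes n_a}\otimes\ket{\psi}\mapsto\sum_i\ket{i}\otimes K_i\ket{\psi}$ is itself an isometry and extends to a genuine unitary (Stinespring dilation), so no impossibility can hold there, and this is exactly why the inequality in the next step must be strict.

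First I would compute the norm of the right-hand side. Since the $\ket{i}$ are orthonormal,
\[
\Big\|\sum_{i}\ket{i}\otimes(K_i\ket{\psi})^{\otimes k}\Big\|^2=\sum_i\norm{K_i\ket{\psi}}^{2k}=\sum_i\big(\bra{\psi}K_i^\dagger K_i\ket{\psi}\big)^k .
\]
Because $\scC$ is an isometry and the input $\ket{0}^{\otimes n_a}\otimes\ket{\psi}^{\otimes k}$ has unit norm, this must equal $1$ for every unit vector $\ket{\psi}$. Writing $a_i:=a_i(\psi):=\bra{\psi}K_i^\dagger K_i\ket{\psi}\ge 0$, the completeness relation $\sum_i K_i^\dagger K_i=I$ gives $\sum_i a_i=1$, while the norm constraint gives $\sum_i a_i^k=1$. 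For $k\ge 2$ and a probability vector $(a_i)$ one has $\sum_i a_i^k\le\sum_i a_i=1$, with equality if and only if every $a_i\in\{0,1\}$. Hence for each $\ket{\psi}$ there is exactly one index $i(\psi)$ with $\norm{K_{i(\psi)}\ket{\psi}}=1$, and $K_j\ket{\psi}=0$ for all $j\ne i(\psi)$.

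Next I would promote this pointwise statement to a global one. Suppose two Kraus operators $K_a,K_b$ with $a\ne b$ were both nonzero. Then $\ker K_a$ and $\ker K_b$ are proper subspaces of $\dsC^d$, so each set $\{\ket{\psi}:K_c\ket{\psi}\ne 0\}$ is the complement of a proper subspace intersected with the unit sphere, hence dense; a complex vector space is not a finite union of proper subspaces, so there exists a unit $\ket{\psi}$ with $K_a\ket{\psi}\ne 0$ and $K_b\ket{\psi}\ne 0$, i.e.\ $a_a(\psi)>0$ and $a_b(\psi)>0$. This contradicts the conclusion of the previous step that at most one $a_i(\psi)$ is nonzero. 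Therefore at most one Kraus operator is nonzero, and by $\sum_i K_i^\dagger K_i=I$ that single operator is unitary, so the purported ``noisy'' channel is in fact a unitary channel — establishing the impossibility of a nontrivial $k$-copies QuDDPM.

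I expect the only genuine subtlety to be pinning down the operational meaning of ``quantum circuit $\scC$'': the displayed identity is a deterministic pure-state-to-pure-state relation, so $\scC$ must be a unitary (or, allowing a larger output ancilla, an isometry) — a circuit with genuine mid-circuit measurements would produce a random output and could not satisfy the equation verbatim. Once this is granted, the proof is elementary; the genericity step is a standard fact, and the equality case of the power-mean inequality $\sum_i a_i^k\le 1$ is the crux that distinguishes $k\ge 2$ (impossible) from $k=1$ (Stinespring, always possible).
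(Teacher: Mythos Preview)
Your proof is correct and follows essentially the same route as the paper: both compute the norm of the right-hand side, invoke the completeness relation to get a probability vector $(a_i)$, and use the elementary inequality $\sum_i a_i^k\le\sum_i a_i=1$ for $k\ge 2$ with its equality case to force all but one Kraus operator to vanish. Your write-up is in fact more careful than the paper's in two respects: you explicitly isolate the $k\ge 2$ hypothesis (and correctly note that $k=1$ is just Stinespring, so no impossibility), and you supply the global step --- passing from ``for each $\ket{\psi}$ at most one $K_i\ket{\psi}\ne 0$'' to ``at most one $K_i$ is nonzero as an operator'' --- via the kernel/union-of-subspaces argument, which the paper glosses over.
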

\begin{proof}
    Let $\ket{v}:= \sum_{i=0}^{2^{n_a}-1} \ket{i} \otimes (K_i \ket{\psi})^{\otimes k}$, then
    \begin{equation*}
        1 = \braket{v}{v} = \sum_i \bra{\psi}K_i^{\dagger} K_i\ket{\psi}^k= \sum_i p_i(\psi)^k,
    \end{equation*}
    and $\sum_i p_i(\psi) = \sum_i \bra{\psi}K_i^{\dagger} K_i\ket{\psi} = 1$. But $\sum_i p_i(\psi)^k \le \sum_i p_i(\psi)$ with the equality holds iff. only one $p_i=1$ and others $p_j(j\ne i) = 0$, which means there is only one $K_i$, the Kraus operators reduce to one unitary operator. Then we have $\braket{v}{v} = \sum_i p_i(\psi)^k < 1$, leads a contradiction.
\end{proof}

\section{Sample complexity of distance metrics on general ensembles} \label{app: general case of sample complexity}
Here we consider the sample complexity to estimate MMD-$k$ and Wasserstein between general ensembles $\scE_1=\{(p_i,\ket{\psi_i})\}$ and $\scE_2 = \{(q_j,\ket{\phi_j})\}$ with $N_1$ and $N_2$ states, respectively.

\begin{theorem}[Sample complexity of Wasserstein distance between nonuniform ensembles]
    Considering two pure quantum ensembles $\scE_1=\{(p_i,\ket{\psi_i})\}$ and $\scE_2=\{(q_j,\ket{\phi_j})\}$ consisting of $N_1$ and $N_2$ states respectively, the sample complexity to estimate the Wasserstein distance between the two ensembles using the same estimation process (except for the estimation of $p_i$ and $q_j$, which can be estimated by $\widehat{p_i} = \frac{1}{M}\sum_{t=1}^M \mathbf{1}\{i_t = i\}$ and $\widehat{q_j} = \frac{1}{M}\sum_{t=1}^M \mathbf{1}\{j_t = j\}$) in Sec.~\ref{secsub: sc of Wass} is
    \begin{equation}
        M=\scO\Big(\frac{1}{\omega_\text{min}}(\frac{1}{\epsilon^2}\log\frac{N_1N_2}{\delta}+\frac{N_1N_2}{\delta})+\frac{N_1+N_2+\log(1/\delta)}{\epsilon^2}\Big), \label{eq: sample compelxity wasserstein nonuniform}
    \end{equation}
    where $\omega_{\text{min}}=\min_{i,j} p_i q_j$, and $\epsilon$ is the additive error between estimated value and true value and $\delta$ is the failure probability.\label{thm: sample complexity Wasserstein nonuniform}
\end{theorem}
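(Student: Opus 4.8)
The plan is to reduce the error of the estimator $\widehat W=W(\widehat C;\widehat p,\widehat q)$ --- i.e.\ the linear program~\eqref{eq: calculation of Wasserstein} solved with the estimated cost matrix $\widehat C$ and the estimated marginals $\widehat p,\widehat q$ (which are automatically probability vectors since $\sum_i\widehat p_i=\sum_j\widehat q_j=1$) --- to three separately controllable quantities: $\norm{\widehat C-C}_\infty:=\max_{ij}\Abs{\widehat C_{ij}-C_{ij}}$, $\norm{p-\widehat p}_{\mathrm{TV}}$, and $\norm{q-\widehat q}_{\mathrm{TV}}$. The first step is to establish, up to an absolute constant, the stability bound
\begin{equation}
\Abs{\widehat W-W}\;\le\;\norm{\widehat C-C}_\infty+\norm{p-\widehat p}_{\mathrm{TV}}+\norm{q-\widehat q}_{\mathrm{TV}}.
\end{equation}
The cost-perturbation part is exactly Lemma~\ref{lm: maxCij}: any feasible coupling has total mass $\sum_{ij}P_{ij}=1$, so $\Abs{\langle P,\widehat C-C\rangle}\le\norm{\widehat C-C}_\infty$ and the two-sided argument of Lemma~\ref{lm: maxCij} goes through verbatim. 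The marginal-perturbation part cannot be obtained from a triangle inequality for $W$, since the ground cost $1-F$ is not a metric on pure states; instead I would argue by an explicit coupling-repair construction. Starting from an optimal coupling $P^\star$ for $(p,q)$, shrink the rows with $p_i>\widehat p_i$ and grow the rows with $p_i<\widehat p_i$ so that the row marginal becomes $\widehat p$, then redistribute the displaced column mass to restore the column marginal $q$; the total mass relocated is $\norm{p-\widehat p}_{\mathrm{TV}}$ and each unit changes $\langle\cdot,C\rangle$ by at most $\norm C_\infty\le1$, so $\Abs{W(\widehat C;\widehat p,q)-W(\widehat C;p,q)}\le\norm{p-\widehat p}_{\mathrm{TV}}$; repeating for the second marginal and combining gives the display.

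Given the stability bound, it suffices to force each of the three terms below $\epsilon/3$ with total failure probability $\le\delta$. The term $\norm{\widehat C-C}_\infty$ is handled exactly as in the proof of Theorem~\ref{thm: sample complexity Wasserstein}: since $\Abs{\widehat C_\ell-C_\ell}=\Abs{\widehat X_\ell-X_\ell}$ with $\widehat X_\ell$ the average of the $T_\ell$ SWAP outcomes of conditional mean $X_\ell$, Hoeffding's inequality together with a union bound over the $N_1N_2$ labels requires every label to have $T_{\min}=\Theta\!\big(\epsilon^{-2}\log(N_1N_2/\delta)\big)$ repetitions. Converting this into a bound on the total number of draws $M$, where the label $\ell=(i,j)$ is drawn with probability $p_iq_j\ge\omega_{\min}$, is done via the nonuniform analog of the occupancy estimate in Appendix~\ref{appsub: minM to reach T} (Eq.~\eqref{eq: M needed to achieve Tmin} with $p_{\min}$ replaced by $\omega_{\min}$); this accounts for the first group of terms, $\omega_{\min}^{-1}\big(\epsilon^{-2}\log(N_1N_2/\delta)+N_1N_2/\delta\big)$, in which the $N_1N_2/\delta$ piece is the coverage overhead that guarantees every label is sampled at all.

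For the marginals, observe that $\widehat p$ is the empirical distribution of $M$ i.i.d.\ draws of $i_t\sim p$ over $N_1$ atoms (and $\widehat q$ of $j_t\sim q$ over $N_2$ atoms, with $i_t\perp j_t$ within a sample). The standard $\ell_1$ learning bound $\dsE\norm{p-\widehat p}_{\mathrm{TV}}\le\sqrt{N_1/(4M)}$, combined with the bounded-differences inequality for the functional $\norm{p-\widehat p}_{\mathrm{TV}}$, shows that $M=\scO\!\big((N_1+\log(1/\delta))/\epsilon^2\big)$ suffices for $\norm{p-\widehat p}_{\mathrm{TV}}\le\epsilon/3$ with probability $\ge1-\delta$, and symmetrically $M=\scO\!\big((N_2+\log(1/\delta))/\epsilon^2\big)$ for $q$; this is the second group of terms. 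Taking $M$ above the maximum of the three requirements and applying a union bound over the (at most) three failure events yields~\eqref{eq: sample compelxity wasserstein nonuniform}.

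The main obstacle is the stability estimate under perturbed \emph{marginals}: unlike the cost-perturbation step, it is not covered by Lemma~\ref{lm: maxCij}, and one must check that the coupling-repair procedure can simultaneously fix both the row and column marginals while relocating only $\scO\!\big(\norm{p-\widehat p}_{\mathrm{TV}}+\norm{q-\widehat q}_{\mathrm{TV}}\big)$ units of mass at bounded cost. A secondary, more routine point is the weighted occupancy estimate, where the minimum label probability $\omega_{\min}=\min_{i,j}p_iq_j$ rather than $1/N^2$ governs how many global draws are needed to equip every label with $T_{\min}$ repetitions; this is what produces both the $\omega_{\min}^{-1}$ prefactor and the polynomial $N_1N_2/\delta$ term.
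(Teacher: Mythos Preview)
Your proposal is correct and tracks the paper's proof closely: the same three-term decomposition
\[
\Abs{\widehat W - W}\;\le\;\norm{\widehat C - C}_\infty+\norm{\widehat p - p}_1+\norm{\widehat q - q}_1,
\]
the same treatment of $\norm{\widehat C-C}_\infty$ via Hoeffding, a union bound over the $N_1N_2$ labels, and the nonuniform occupancy estimate~\eqref{eq: M needed to achieve Tmin} with $p_{\min}=\omega_{\min}$, and the same treatment of the marginals via the expectation bound $\dsE\norm{\widehat p - p}_1\le\sqrt{N_1/M}$ combined with McDiarmid's bounded-differences inequality.

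The one place you diverge is precisely the step you flag as the main obstacle: stability of $W$ under perturbed marginals. You propose a primal coupling-repair argument; the paper instead invokes Kantorovich duality. Writing $W(C;p,q)=\max\{u^\top p+v^\top q:\ u_i+v_j\le C_{ij}\}$, one uses $C_{ij}\in[0,1]$ and the shift-invariance $(u,v)\mapsto(u+\alpha\mathbf{1},v-\alpha\mathbf{1})$ to select optimal potentials with $\norm{u^\star}_\infty,\norm{v^\star}_\infty\le 1$, and then $\Abs{W(C;\widehat p,\widehat q)-W(C;p,q)}\le\norm{\widehat p-p}_1+\norm{\widehat q-q}_1$ follows in two lines. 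The dual route is shorter and sidesteps exactly the simultaneous row/column repair you worry about; your primal route is also valid but requires the bookkeeping you anticipate (sequentially repairing $(p,q)\to(\widehat p,q)\to(\widehat p,\widehat q)$, each step relocating at most the relevant $\ell_1$-distance of mass at cost bounded by $\norm{C}_\infty\le 1$).

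A minor side remark: the $N_1N_2/\delta$ term inside the first parenthesis of the stated bound is a typo for $\log(N_1N_2/\delta)$---the paper's own derivation lands on the logarithmic occupancy term from~\eqref{eq: M needed to achieve Tmin}---so your attempt to rationalize it as a polynomial coverage overhead is unnecessary.
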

\begin{proof}
    We first consider the analog of Lemma~\ref{lm: maxCij}. Denote the Wasserstein distance calculated by solving the optimization problem Eq. (\ref{eq: calculation of Wasserstein}) with the estimated values $\widehat{C}, \hat{p},\hat{q}$ as $\widehat{W}=W(\widehat{C};\hat{p},\widehat{q})$. The key point is:
    \begin{equation*}
        \abs{\widehat{W}-W}\le\abs{W(\widehat{C};\hat{p},\widehat{q})-W(C;\hat{p},\widehat{q})} +\abs{W(C;\hat{p},\widehat{q})-W(C;p,q)}.
    \end{equation*}
    Using Lemma~\ref{lm: maxCij}, we can get if $\|\widehat{C}-C\|_\infty\le\eta_C$, then $\abs{W(\widehat{C};\hat{p},\widehat{q})-W(C;\hat{p},\widehat{q})}\le\eta_C$.
    
    The dual formulation of (\ref{eq: calculation of Wasserstein}) \cite{Wasserstein_MAL-073} gives that
    \begin{equation*}
        W(C;p,q) = \max _{u,v} u^{\intercal}p+v^{\intercal}q~~~~\text{s.t.}~~~~u_i+v_j\le C_{ij}~~\forall i,j.
    \end{equation*}
    Because $C_{ij}\in [0,1]$, one can choose an optimal pair $(u^\star,v^\star)$ satisfying $\norm{u^\star}_\infty\le1$ and $\norm{v^\star}\infty\le1$ (use the shift invariance $u \rightarrow u+\alpha \mathbf{1}, ~v\rightarrow v-\alpha \mathbf{1}$ and the constraint $u_i +v_j\le1$). Then
    \begin{align*}
        W(C;\hat{p},\hat{q})-W(C;p,q) &=  \max _{u,v} u^{\intercal}\hat{p}+v^{\intercal}\hat{q}-\max _{u,v} u^{\intercal}p+v^{\intercal}q\nonumber\\
        &\le {u^\star}^\intercal (\hat{p}-p)+{v^\star}^\intercal (\hat{q}-q)\nonumber\\
        &\le\norm{u^\star}_\infty\norm{\hat{p}-p}_1+\norm{v^\star}_\infty\norm{\hat{q}-q}_1\nonumber\\
        &\le\norm{\hat{p}-p}_1+\norm{\hat{q}-q}_1,
    \end{align*}
    and the same bound holds for $ W(C;p,q) - W(C;\hat{p},\hat{q})$. So we have
    \begin{equation}
        \abs{\widehat{W}-W}\le\|\widehat{C}-C\|_\infty+\norm{\hat{p}-p}_1+\norm{\hat{q}-q}_1. \label{eq: Wasserstein error decomposition}
    \end{equation}
    To achieve error $\epsilon$, we can ensure
    \begin{equation*}
        \|\widehat{C}-C\|_\infty\le\epsilon/3,~~\norm{\hat{p}-p}_1\le\epsilon/3,~~\norm{\hat{q}-q}_1\le\epsilon/3
    \end{equation*}
    with total failure probability at most $\delta$.
    Similar with Eq.~(\ref{eq: hoeffding for all labels}), we have the union bound
    \begin{equation*}
        \Pr(\|\widehat{C}-C\|_\infty>\eta_C)\le2N_1N_2\exp(-2T_{\text{min}}\eta_C^2),
    \end{equation*}
    make the failure probability on estimating $C$ at most $\delta/2$, we get
    \begin{equation}
        T_{\text{min}} \ge \frac{1}{2\eta_C^2}\log \frac{4N_1N_2}{\delta} \label{eq: T_min wass general}
    \end{equation}
    To achieve the $T_{\min}$ in Eq.~(\ref{eq: T_min wass general}), according to Eq.~(\ref{eq: M needed to achieve Tmin}) (here $n=N_1N_2$), a sufficient $M$ is 
    \begin{equation}
        M\ge \frac{1}{\omega_{\min}}(\frac{1}{\epsilon^2}\log\frac{N_1N_2}{\delta}+\log\frac{N_1N_2}{\delta}). \label{eq: M wass nonuniform first term}
    \end{equation}
    Now consider $M$ needed to precisely estimate $p$ and $q$.
    

    From each of the $M$ trials we observe the index $i_t\in[N_1]$ drawn i.i.d.\ from $p$ (the first coordinate of the sampled pair). Define the empirical distribution
    \begin{equation*}
        \hat{p}_a := \frac{1}{M}\sum_{t=1}^M \mathbf{1}\{i_t=a\},\qquad a=1,\dots,N_1.
    \end{equation*}
    We bound $\norm{\hat{p}-p}_1$ via an expectation bound plus a bounded-differences concentration argument.
    
    First, for each $a\in[N_1]$, $\hat{p}_a$ is the average of $M$ i.i.d.\ Bernoulli random variables with mean $p_a$, hence
    \begin{equation*}
        \mathrm{Var}(\hat{p}_a)=\frac{p_a(1-p_a)}{M}\le \frac{p_a}{M}.
    \end{equation*}
    By Jensen's inequality,
    \begin{equation*}
        \dsE\abs{\hat{p}_a-p_a}\le \sqrt{\dsE(\hat{p}_a-p_a)^2}=\sqrt{\mathrm{Var}(\hat{p}_a)}
        \le \sqrt{\frac{p_a}{M}}.
    \end{equation*}
    Summing over $a$ and using Cauchy--Schwarz inequality,
    \begin{align}
        \dsE\norm{\hat{p}-p}_1
        &=\sum_{a=1}^{N_1}\dsE\abs{\hat{p}_a-p_a}
        \le \frac{1}{\sqrt{M}}\sum_{a=1}^{N_1}\sqrt{p_a}
        \le \frac{1}{\sqrt{M}}\sqrt{N_1\sum_{a=1}^{N_1}p_a}
        =\sqrt{\frac{N_1}{M}}.
        \label{eq: Ep_l1_p}
    \end{align}
    
    Next, define $f(i_1,\dots,i_M):=\norm{\hat{p}-p}_1$. If we change a single sample $i_t$ to a different value $i_t'$, the empirical distribution changes by moving mass $1/M$ from one coordinate to another, so
    \begin{equation*}
        \norm{\hat{p}-\hat{p}'}_1 \le \frac{2}{M},
    \end{equation*}
    and hence
    \begin{equation*}
        \abs{f(i_1,\dots,i_M)-f(i_1,\dots,i_t',\dots,i_M)}\le \frac{2}{M}.
    \end{equation*}
    By McDiarmid's inequality, for any $s>0$,
    \begin{equation*}
        \Pr\!\left(f-\dsE f \ge s\right)
        \le \exp\!\left(-\frac{2s^2}{\sum_{t=1}^M (2/M)^2}\right)
        = \exp\!\left(-\frac{Ms^2}{2}\right).
    \end{equation*}
    Setting $s=\sqrt{\frac{2\log(1/\delta_p)}{M}}$ yields
    \begin{equation*}
        \Pr\!\left(\norm{\hat{p}-p}_1 \ge \dsE\norm{\hat{p}-p}_1 + \sqrt{\frac{2\log(1/\delta_p)}{M}}\right)
        \le \delta_p.
    \end{equation*}
    Combining with \eqref{eq: Ep_l1_p}, we obtain that with probability at least $1-\delta_p$,
    \begin{equation*}
        \norm{\hat{p}-p}_1 \le \sqrt{\frac{N_1}{M}} + \sqrt{\frac{2\log(1/\delta_p)}{M}}.
        \label{eq: p_l1_bound_clean}
    \end{equation*}
    
    The same argument can be applied to $\hat{q}$ (defined analogously from the second coordinate indices $j_t\in[N_2]$) with $N_1$ replaced by $N_2$ and we omit the repetition.
    
    To ensure $\norm{\hat{p}-p}_1\le \epsilon/3$ and $\norm{\hat{q}-q}_1\le \epsilon/3$ with total failure probability at most $\delta/2$, we set $\delta_p=\delta_q=\delta/4$ and require
    \begin{equation*}
        \sqrt{\frac{N_1}{M}} + \sqrt{\frac{2\log(4/\delta)}{M}} \le \frac{\epsilon}{3},
        \qquad
        \sqrt{\frac{N_2}{M}} + \sqrt{\frac{2\log(4/\delta)}{M}} \le \frac{\epsilon}{3}.
    \end{equation*}
    A convenient sufficient condition is to make each summand at most $\epsilon/6$, which yields
    \begin{equation}
        M \ge \frac{36N_1}{\epsilon^2},\qquad
        M \ge \frac{36N_2}{\epsilon^2},\qquad
        M \ge \frac{72}{\epsilon^2}\log\frac{4}{\delta}.
        \label{eq: M_needed_pq}
    \end{equation}
    Equivalently, up to constant factors,
    \begin{equation*}
        M = \scO\!\left(\frac{N_1+N_2+\log(1/\delta)}{\epsilon^2}\right)
    \end{equation*}
    suffices to guarantee $\norm{\hat{p}-p}_1\le \epsilon/3$ and $\norm{\hat{q}-q}_1\le \epsilon/3$ with probability at least $1-\delta/2$.
    Combining with Eq.~(\ref{eq: M wass nonuniform first term}), we can have the final result
    \begin{equation*}
        M=\scO\Big(\frac{1}{\omega_\text{min}}(\frac{1}{\epsilon^2}\log\frac{N_1N_2}{\delta}+\log\frac{N_1N_2}{\delta})+\frac{N_1+N_2+\log(1/\delta)}{\epsilon^2}\Big).
    \end{equation*}
    At most time we can assume $1/\epsilon^2$ and $1/\omega_{\min}$ are very larger than $1$, so we can only pick the determinant term
    \begin{equation*}
        M=\scO\Big(\frac{1}{\omega_\text{min}}\frac{1}{\epsilon^2}\log\frac{N_1N_2}{\delta}\Big).
    \end{equation*}
\end{proof}

Now consider the estimation of MMD-$k$ between nonuniform pure state ensembles. Let
\begin{equation*}
\mathcal{E}_1=\{(p_i,\ket{\psi_i})\}_{i=1}^{N_1},\qquad
\mathcal{E}_2=\{(q_j,\ket{\phi_j})\}_{j=1}^{N_2},
\end{equation*}
where $p_i>0$, $\sum_i p_i=1$, and $q_j>0$, $\sum_j q_j=1$ are \emph{unknown}.
For $(a,b)\in\{(1,1),(1,2),(2,2)\}$, define the label space
$\Omega_{ab}:=[N_a]\times[N_b]$ and for each $\ell=(u,v)\in\Omega_{ab}$ define
\begin{equation*}
w^{(ab)}_\ell:=\Pr(\ell)=
\begin{cases}
p_u p_v, & (a,b)=(1,1),\\
p_u q_v, & (a,b)=(1,2),\\
q_u q_v, & (a,b)=(2,2),
\end{cases}
\qquad
X^{(ab)}_\ell:=\abs{\braket{\psi_u^{(a)}}{\psi_v^{(b)}}}^2\in[0,1],
\end{equation*}
where $\ket{\psi_u^{(1)}}:=\ket{\psi_u}$ and $\ket{\psi_v^{(2)}}:=\ket{\phi_v}$.
Then
\begin{equation*}
\bar{F}^{(k)}(\mathcal{E}_a,\mathcal{E}_b)
=\dsE_{\ell\sim w^{(ab)}}\!\left[\left(X^{(ab)}_\ell\right)^k\right]
=\sum_{\ell\in\Omega_{ab}} w^{(ab)}_\ell \left(X^{(ab)}_\ell\right)^k,
\end{equation*}
and by Definition~\ref{def:k-MMD},
\begin{equation*}
D^{(k)}(\mathcal{E}_1,\mathcal{E}_2)
=\bar{F}^{(k)}(\mathcal{E}_1,\mathcal{E}_1)+\bar{F}^{(k)}(\mathcal{E}_2,\mathcal{E}_2)
-2\bar{F}^{(k)}(\mathcal{E}_1,\mathcal{E}_2).
\end{equation*}

For the nonuniform case, the estimator Eq.~(\ref{eq: estimator of MMD-k}) should be modified.
Fix $(a,b)$ and suppose we perform $M$ SWAP-test experiments where each experiment draws a label
$\ell_t\sim w^{(ab)}$ and returns $(R_t,\ell_t)$ with $R_t\in\{-1,+1\}$ satisfying
$\dsE[R_t\mid \ell_t=\ell]=X^{(ab)}_\ell$ (cf. Sec. \ref{sec: sample complexity}).
Let $T_\ell:=\sum_{t=1}^M \mathbf{1}\{\ell_t=\ell\}$ be the number of occurrences of label $\ell$.
For each label with $T_\ell\ge k$, define the within-label $U$-statistic (same as Eq.~(\ref{eq: U-stat for MMD-k})):
\begin{equation}
Z_\ell
:= \binom{T_\ell}{k}^{-1}
\sum_{1\le r_1<\cdots<r_k\le T_\ell}\ \prod_{s=1}^k R_{\ell,(r_s)},
\qquad (\text{defined only when }T_\ell\ge k),
\label{eq:Zell_def_nonuniform}
\end{equation}
so that $\dsE[Z_\ell\mid T_\ell\ge k]=\left(X^{(ab)}_\ell\right)^k$.
To account for the unknown and nonuniform label distribution, we additionally estimate
\begin{equation}
\hat{w}_\ell := \frac{T_\ell}{M}.
\label{eq:what_nonuniform}
\end{equation}
We then use the importance-corrected collision estimator
\begin{equation}
\widehat{\bar{F}}^{(k)}(\mathcal{E}_a,\mathcal{E}_b)
:= \binom{M}{k}^{-1}
\sum_{\ell\in\Omega_{ab}:T_\ell\ge k}
\binom{T_\ell}{k}\,
\frac{Z_\ell}{(\hat{w}_\ell)^{k-1}}.
\label{eq:Fhat_nonuniform}
\end{equation}
Finally, with independent samples for $(1,1)$, $(1,2)$, and $(2,2)$ (e.g. $M/3$ each),
we estimate $D^{(k)}$ via
\begin{equation}
\widehat{D}^{(k)}(\mathcal{E}_1,\mathcal{E}_2)
:=\widehat{\bar{F}}^{(k)}(\mathcal{E}_1,\mathcal{E}_1)
+\widehat{\bar{F}}^{(k)}(\mathcal{E}_2,\mathcal{E}_2)
-2\,\widehat{\bar{F}}^{(k)}(\mathcal{E}_1,\mathcal{E}_2).
\label{eq:Dhat_nonuniform}
\end{equation}

For $(a,b)\in\{(1,1),(1,2),(2,2)\}$ define
\begin{equation}
S_{ab}(k) := \sum_{\ell\in\Omega_{ab}} \left(w^{(ab)}_\ell\right)^{2-k}.
\label{eq:Sab_def}
\end{equation}
For product-form weights, $S_{ab}(k)$ factorizes:
\begin{equation}
S_{12}(k)=\Big(\sum_{i=1}^{N_1} p_i^{\,2-k}\Big)\Big(\sum_{j=1}^{N_2} q_j^{\,2-k}\Big),\quad
S_{11}(k)=\Big(\sum_{i=1}^{N_1} p_i^{\,2-k}\Big)^2,\quad
S_{22}(k)=\Big(\sum_{j=1}^{N_2} q_j^{\,2-k}\Big)^2.
\label{eq:Sab_factor}
\end{equation}
For two uniform ensembles with the same number of states, $S_{ab}(k)$ reduces to $N^{2k-2}$.

For the sample complexity of estimating MMD-$k$ between nonuniform ensembles, we only care about how $M$ scales with $N_1$ and $N_2$.

\begin{theorem}[Nonuniform MMD-$k$ sample complexity, fixed $k$ (scaling in $N$)]
\label{thm:nonuniform_fixed_k_scaling_only}
Fix an integer $k\ge 2$.
For $(a,b)\in\{(1,1),(1,2),(2,2)\}$, the estimator
$\widehat{\bar{F}}^{(k)}(\mathcal{E}_a,\mathcal{E}_b)$ in Eq.~\eqref{eq:Fhat_nonuniform}
achieves constant additive accuracy with constant success probability using
\begin{equation}
M_{ab}\ =\ \scO\!\Big(\big(k!\,S_{ab}(k)\big)^{1/k}\Big)
\ =\ \scO\!\Big( S_{ab}(k)^{1/k}\Big),
\label{eq:Mab_fixedk_scaling_only}
\end{equation}
samples (where the last equality hides a $k$-dependent constant).
Consequently,
\begin{equation}
M\ =\ \scO\!\Big(
\max\{S_{11}(k)^{1/k},\,S_{12}(k)^{1/k},\,S_{22}(k)^{1/k}\}
\Big)
\label{eq:M_fixedk_scaling_only}
\end{equation}
samples suffice to estimate $D^{(k)}(\mathcal{E}_1,\mathcal{E}_2)$ up to a constant additive error
with constant success probability.
\end{theorem}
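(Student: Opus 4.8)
The plan is to reduce the claim about $D^{(k)}$ to three separate estimates of $\bar{F}^{(k)}(\mathcal{E}_a,\mathcal{E}_b)$. Since $D^{(k)}=\bar{F}^{(k)}(\mathcal{E}_1,\mathcal{E}_1)+\bar{F}^{(k)}(\mathcal{E}_2,\mathcal{E}_2)-2\bar{F}^{(k)}(\mathcal{E}_1,\mathcal{E}_2)$ is a fixed linear combination (Definition~\ref{def:k-MMD}), the triangle inequality turns an additive-$\epsilon$ guarantee on $\widehat{D}^{(k)}$ into additive-$(\epsilon/4)$ guarantees on the three pieces, and a union bound over their (independent) failure events at level $\delta/3$ handles the confidence; the sample complexity is then the worst of the three, which is why only $\max\{S_{11}^{1/k},S_{12}^{1/k},S_{22}^{1/k}\}$ appears in Eq.~\eqref{eq:M_fixedk_scaling_only}. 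Via the factorization Eq.~\eqref{eq:Sab_factor} one checks $S_{ab}(k)=N^{2k-2}$ in the uniform case, so the bound recovers Theorem~\ref{thm: sample complexity of MMD-k, upper bound, most general case}, a useful sanity check on the constants.

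For a single pair $(a,b)$, the plan is to condition on the realized label sequence $(\ell_1,\dots,\ell_M)$, equivalently on the occupancy counts $(T_\ell)_\ell$ and empirical weights $\hat{w}_\ell=T_\ell/M$. Conditionally, each $Z_\ell$ (defined only when $T_\ell\ge k$) is a bounded order-$k$ $U$-statistic with kernel in $[-1,1]$ and conditional mean $X_\ell^k$, and the $Z_\ell$ across distinct labels are conditionally independent. The algebraic core is the falling-factorial identity $\mathbb{E}\big[\binom{T_\ell}{k}\big]=\binom{M}{k}w_\ell^k$ for $T_\ell\sim\mathrm{Bin}(M,w_\ell)$: it makes $\binom{M}{k}^{-1}\binom{T_\ell}{k}$ an unbiased estimate of the collision probability $w_\ell^k$, and dividing by $\hat{w}_\ell^{k-1}$ is the importance correction that, were $\hat{w}_\ell=w_\ell$, would bring the effective weight of label $\ell$ down to $w_\ell$ and give $\mathbb{E}[\widehat{\bar{F}}^{(k)}]=\sum_\ell w_\ell X_\ell^k=\bar{F}^{(k)}$. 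The bias is thus the error of using $\hat{w}_\ell^{k-1}$ in place of $w_\ell^{k-1}$; I would split labels into a heavy part ($w_\ell$ above a threshold $\tau$ of order $(k/M)\log(1/\delta)$), on which $\hat{w}_\ell$ concentrates around $w_\ell$ by a Chernoff bound so the per-label relative error is $\scO(k/(Mw_\ell))$, and a light part whose total probability mass is $\le\tau^{k-1}\sum_\ell w_\ell^{2-k}=\tau^{k-1}S_{ab}(k)$ (using $w_\ell^{k-1}<\tau^{k-1}$ there).

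For the fluctuations, the plan is to write $\widehat{\bar{F}}^{(k)}$ minus its conditional mean as a conditionally independent sum $\sum_\ell c_\ell(Z_\ell-X_\ell^k)$ with data-dependent coefficients $c_\ell=\binom{M}{k}^{-1}\binom{T_\ell}{k}\hat{w}_\ell^{-(k-1)}$; on the event that the counts concentrate, $c_\ell\asymp k!/(M^k w_\ell^{k-1})$, the Horvitz--Thompson weight compensating the tiny collision probability $\mathbb{P}(T_\ell\ge k)\asymp(Mw_\ell)^k/k!$. Applying Hoeffding's inequality for $U$-statistics conditionally and then a sub-Gaussian-sum bound (exactly the template of Case~2 in the proof of Theorem~\ref{theorem:general_version}), the governing variance proxy is $\sum_\ell\mathbb{P}(T_\ell\ge k)\,c_\ell^2\asymp\sum_\ell (Mw_\ell)^k/k!\cdot(k!)^2/(M^{2k}w_\ell^{2k-2})=(k!/M^k)\sum_\ell w_\ell^{2-k}=k!\,S_{ab}(k)/M^k$ --- the sole place the exponent $2-k$, hence $S_{ab}(k)$, enters. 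Setting this $\scO(\epsilon^2/\log(1/\delta))$ gives $M=\scO\big((k!\,S_{ab}(k)\,\epsilon^{-2}\log(1/\delta))^{1/k}\big)$, i.e.\ $\scO(S_{ab}(k)^{1/k})$ for fixed $k$ and constant $(\epsilon,\delta)$; combining with the bias bound of the previous paragraph closes the argument, and the residual fluctuation of the conditional mean over the label randomness is of lower order.

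The main obstacle is the bias step. At the target size $M\asymp S_{ab}(k)^{1/k}$ the expected occupancy $\lambda_\ell=Mw_\ell$ can be far below $k$ for \emph{every} label (already in the uniform case), so conditional on the rare event $T_\ell\ge k$ the count $T_\ell$ is stuck near $k$ and $\hat{w}_\ell$ overestimates $w_\ell$; a crude mass-of-light-labels bound then only delivers $M\gtrsim S_{ab}(k)^{1/(k-1)}$, which is too weak. Sharpening the exponent to $1/k$ appears to require the exact identity $\mathbb{E}\big[\binom{T_\ell}{k}\hat{w}_\ell^{-(k-1)}\mathbf{1}\{T_\ell\ge k\}\big]=M^{k-1}w_\ell^k\,\mathbb{E}\big[(S+k)^{-(k-1)}\big]$ with $S\sim\mathrm{Bin}(M-k,w_\ell)$, together with a careful bound on the aggregate bias $\sum_\ell X_\ell^k\big(w_\ell-M^{k-1}w_\ell^k\,\mathbb{E}[(S+k)^{-(k-1)}]\big)$ under the stated relation between $M$ and $S_{ab}(k)$ --- this delicate estimate, rather than any of the routine concentration bounds, is where the real work lies.
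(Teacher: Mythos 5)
Your high-level decomposition (three $\bar F^{(k)}$ terms, triangle inequality, union bound over independent batches) and your identification of the governing variance proxy $k!\,S_{ab}(k)/M^{k}$ are both correct and match the quantity the paper computes: its proof introduces the oracle $U$-statistic \eqref{eq:Ftilde_oracle_scaling} built from the \emph{true} weights $w_\ell$, verifies $\dsE[h^2]=\sum_\ell w_\ell^{\,2-k}=S_{ab}(k)$, and bounds the variance by $k!\,\dsE[h^2]/M^{k}$, so that $M=\scO\big((k!\,S_{ab}(k))^{1/k}\big)$ gives $\Theta(1)$ variance and Chebyshev finishes. The structural difference is that you condition on the occupancy counts $(T_\ell)$ and analyze the plug-in estimator \eqref{eq:Fhat_nonuniform} directly, whereas the paper works with an unconditional order-$k$ $U$-statistic over $k$-tuples of samples, which is exactly unbiased and whose second-moment computation delivers $S_{ab}(k)$ in one line.

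However, your argument is not complete, and the obstacle you flag at the end is a genuine gap rather than a technicality. At the target sample size $M\asymp S_{ab}(k)^{1/k}$ the occupancies satisfy $Mw_\ell\ll k$ for every label (already in the uniform case $Mw_\ell=M/N^2\asymp N^{-2/k}$), so each contributing label has $T_\ell\approx k$ and hence $\hat w_\ell\approx k/M\gg w_\ell$; the correction factor $\hat w_\ell^{-(k-1)}\approx (M/k)^{k-1}$ then undershoots $w_\ell^{-(k-1)}$ by a factor of order $(Mw_\ell/k)^{k-1}$, and a direct computation in the uniform case shows that the conditional mean of \eqref{eq:Fhat_nonuniform} is of order $\big(M/(kN^2)\big)^{k-1}\bar F^{(k)}\to 0$ rather than $\bar F^{(k)}$. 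So the plug-in estimator is severely biased in the sparse regime, your heavy/light split cannot repair this (as you yourself note, it only yields $M\gtrsim S_{ab}(k)^{1/(k-1)}$), and the exact binomial identity you propose would have to carry the entire proof. The clean way to close the argument is the paper's route: establish the bound for the oracle estimator with true $w_\ell$, where unbiasedness is automatic and only the second moment of the kernel matters. It is worth saying that your worry is well founded even against the paper itself, which dismisses the substitution $w_\ell\to\hat w_\ell$ as ``changing only constants for fixed $k$''---precisely the step your analysis shows fails in the regime where the theorem is interesting.
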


\begin{proof}
We present the variance scaling; the constant-success-probability statement follows by Chebyshev.

Fix $(a,b)$ and abbreviate $w_\ell:=w^{(ab)}_\ell$, $X_\ell:=X^{(ab)}_\ell$.
Consider the oracle version of the estimator (for analysis only) that uses the true $w_\ell$:
\begin{equation}
\widetilde{\bar{F}}^{(k)}
:= \binom{M}{k}^{-1}
\sum_{1\le t_1<\cdots<t_k\le M}
\mathbf{1}\{\ell_{t_1}=\cdots=\ell_{t_k}\}\,
\frac{\prod_{s=1}^k R_{t_s}}{(w_{\ell_{t_1}})^{k-1}}.
\label{eq:Ftilde_oracle_scaling}
\end{equation}
As in the uniform analysis, $\mathbb{E}[\widetilde{\bar{F}}^{(k)}]=\sum_\ell w_\ell X_\ell^k$ by
conditioning on the common label and using $\mathbb{E}[R_t\mid \ell_t=\ell]=X_\ell$.

Let
\begin{equation}
h\big((R_1,\ell_1),\dots,(R_k,\ell_k)\big)
:=\mathbf{1}\{\ell_1=\cdots=\ell_k\}\,
\frac{\prod_{s=1}^k R_s}{(w_{\ell_1})^{k-1}}.
\end{equation}
Then $|R_s|=1$ implies
\begin{equation}
\mathbb{E}[h^2]
=\sum_\ell \Pr(\ell_1=\cdots=\ell_k=\ell)\cdot \frac{1}{w_\ell^{2k-2}}
=\sum_\ell w_\ell^k\cdot w_\ell^{-(2k-2)}
=\sum_\ell w_\ell^{2-k}
=S_{ab}(k).
\label{eq:Eh2_equals_S_scaling}
\end{equation}
A standard variance bound for order-$k$ $U$-statistics gives
\begin{equation}
\mathrm{Var}\!\left(\widetilde{\bar{F}}^{(k)}\right)
\ \le\ \frac{k!}{M^k}\,\mathbb{E}[h^2]
\ =\ \frac{k!}{M^k}\,S_{ab}(k).
\label{eq:Var_bound_fixedk_scaling}
\end{equation}
Thus, choosing $M=\scO((k!S_{ab}(k))^{1/k})$ sufficiently makes the variance $\Theta(1)$, yielding
constant additive accuracy with constant success probability.
Finally, replacing $w_\ell$ by $\hat w_\ell=T_\ell/M$ in Eq.~\eqref{eq:Fhat_nonuniform}
changes only constants for fixed $k$ (absorbed in $\Theta(\cdot)$), completing the proof.
\end{proof}

\begin{theorem}[Nonuniform MMD-$k$ sample complexity, $k\sim N$ (scaling in $N$)]
\label{thm:nonuniform_k_asymp_N_scaling_only}
Let $k$ scale with the ensemble size, e.g.\ $k=\Theta(N)$ where $N:=\max\{N_1,N_2\}$.
Define
\begin{equation}
p_{\min}:=\min_{i\in[N_1]} p_i,\qquad
q_{\min}:=\min_{j\in[N_2]} q_j,\qquad
\omega_{\min}:=\min_{(i,j)\in[N_1]\times[N_2]} p_i q_j = p_{\min}q_{\min}.
\end{equation}
Then, up to $N$-independent constants, the sample complexity of estimating
$D^{(k)}(\mathcal{E}_1,\mathcal{E}_2)$ using Eq.~\eqref{eq:Dhat_nonuniform} satisfies
\begin{equation}
M\ =\ \Theta\!\left(
k\cdot \max\left\{\frac{1}{p_{\min}^2},\ \frac{1}{q_{\min}^2},\ \frac{1}{\omega_{\min}}\right\}
\right).
\label{eq:M_k_asymp_N_scaling_only}
\end{equation}
In particular, for uniform ensembles ($p_i=1/N_1$ and $q_j=1/N_2$), we have
\begin{equation}
M = \Theta\!\Big(k\cdot \max\{N_1^2,\ N_2^2,\ N_1N_2\}\Big)
= \Theta\!\Big(k\cdot \max\{N_1^2,\ N_2^2\}\Big).
\end{equation}
When $N_1=N_2=N$, this reduces to $M=\Theta(kN^2)$ (hence $\Theta(N^3)$ when $k=N$).

\end{theorem}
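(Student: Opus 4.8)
The plan is to prove matching upper and lower bounds, following the structure of Theorem~\ref{thm: sample complexity of MMD-k, k large} but replacing, separately for each of the three label distributions, the uniform weight $1/N^2$ by that distribution's minimum weight. For $(a,b)\in\{(1,1),(1,2),(2,2)\}$ write $\omega_{ab}:=\min_{\ell\in\Omega_{ab}}w^{(ab)}_\ell$, so that $\omega_{11}=p_{\min}^2$, $\omega_{12}=\omega_{\min}=p_{\min}q_{\min}$, and $\omega_{22}=q_{\min}^2$. Assuming WLOG $p_{\min}\le q_{\min}$ gives $\omega_{\min}=p_{\min}q_{\min}\ge p_{\min}^2=\min\{p_{\min}^2,q_{\min}^2\}$, hence $1/\omega_{\min}\le\max\{1/p_{\min}^2,1/q_{\min}^2\}$ and therefore $\max\{1/p_{\min}^2,1/q_{\min}^2,1/\omega_{\min}\}=\max_{ab}1/\omega_{ab}=\max\{1/p_{\min}^2,1/q_{\min}^2\}$; I will prove $M=\Theta\!\bigl(k\,\max_{ab}1/\omega_{ab}\bigr)$ and specialize to uniform ensembles at the end.

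\emph{Upper bound.} Split the budget into three equal parts, one per pair. Drawing $M_{ab}$ labels i.i.d.\ from $w^{(ab)}$ makes $T_\ell\sim\mathrm{Bin}(M_{ab},w^{(ab)}_\ell)$ with mean at least $M_{ab}\omega_{ab}$. Choosing $M_{ab}=C\,k/\omega_{ab}$ for a large absolute constant $C$, a Chernoff lower-tail bound (Proposition~\ref{prop:Chernoff-S}) together with a union bound over the at most $N^2$ labels yields $\Pr(\min_\ell T_\ell<k)\le N^2\exp(-c\,C\,k)$, which is at most $\delta$ because the hypothesis $k=\Theta(N)$ forces $k\gg\log(N^2/\delta)$. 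Thus with probability $\ge1-\delta$ every label of every pair occurs at least $k$ times, and on this event $\widehat{\bar F}^{(k)}(\scE_a,\scE_b)$ of Eq.~\eqref{eq:Fhat_nonuniform} is (essentially) unbiased for $\bar F^{(k)}(\scE_a,\scE_b)$; its residual fluctuation is controlled by an argument analogous to Case~2 of Theorem~\ref{theorem:general_version}, requiring only $\scO\!\bigl((k/\epsilon^2)\log(1/\delta)\bigr)$ additional samples per pair, which for $k=\Theta(N)$ and $\omega_{ab}$ small is absorbed into $C\,k/\omega_{ab}$. Summing over the three pairs gives $M=\scO\!\bigl(k\,\max_{ab}1/\omega_{ab}\bigr)$, and substituting $p_i\equiv1/N_1$, $q_j\equiv1/N_2$ gives $M=\scO\!\bigl(k\max\{N_1^2,N_2^2\}\bigr)$.

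\emph{Lower bound.} I would use the occupancy obstruction exploited in Theorem~\ref{thm: sample complexity of MMD-k, k large}. Fix the target $p_{\min},q_{\min}$ (WLOG $p_{\min}\le q_{\min}$) and take $\scE_1$ to be a near-uniform tight cluster of $N_1\asymp1/p_{\min}$ distinct states around a center $\ket{\psi_0}$ and $\scE_2$ a near-uniform tight cluster of $N_2\asymp1/q_{\min}$ distinct states around $\ket{\phi_0}$, with cluster radii $\sim1/\sqrt k$ (so that $\bar F^{(k)}(\scE_a,\scE_a)=\Theta(1)$ for $a=1,2$) and $\abs{\braket{\psi_0}{\phi_0}}^2=\tfrac12$ (so that $D^{(k)}(\scE_1,\scE_2)=\Theta(1)$); this is realizable for any $d\ge2$. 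Here $N:=\max\{N_1,N_2\}=N_1\asymp1/p_{\min}$ and $\max_{ab}1/\omega_{ab}=1/p_{\min}^2$. Suppose $M<c\,k/p_{\min}^2$ for a small constant $c$. Since the $N_1^2$ labels of the $(1,1)$-pair are essentially uniform with weight $\asymp p_{\min}^2$, a Chernoff/union argument---using $k\gg\log N_1^2$, which holds because $k=\Theta(N)$---shows that with probability $\ge1-\delta$ \emph{no} $(1,1)$-label is sampled $k$ times, so $\widehat{\bar F}^{(k)}(\scE_1,\scE_1)$ is the empty sum $0$. Since the two cluster centers were chosen at constant fidelity, this deterministic defect cannot be cancelled by the other terms of $\widehat D^{(k)}$, so $\abs{\widehat D^{(k)}-D^{(k)}}=\Omega(1)>\epsilon$ with probability $\ge1-\delta$, contradicting the $(\epsilon,\delta)$-guarantee and forcing $M=\Omega(k/p_{\min}^2)=\Omega\!\bigl(k\max\{1/p_{\min}^2,1/q_{\min}^2\}\bigr)$. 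Combined with the upper bound this gives $M=\Theta\!\bigl(k\max\{1/p_{\min}^2,1/q_{\min}^2,1/\omega_{\min}\}\bigr)$, and for uniform ensembles $M=\Theta\!\bigl(k\max\{N_1^2,N_2^2\}\bigr)=\Theta(kN^2)$, which is $\Theta(N^3)$ when $N_1=N_2=N$ and $k=N$.

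The step I expect to be the main obstacle is the concentration of the importance-corrected estimator Eq.~\eqref{eq:Fhat_nonuniform} in the occupancy-limited regime $M_{ab}=\Theta(k/\omega_{ab})$, where the rarest labels are sampled only $\Theta(k)$ times. There the empirical weights $\hat w_\ell=T_\ell/M$ enter to the power $k-1$ with $k=\Theta(N)$, so multiplicative fluctuations of those $\Theta(k)$-sized counts are amplified and the naive $k\ll T_\ell$ expansion of the factor $\binom{M}{k}^{-1}\binom{T_\ell}{k}\hat w_\ell^{-(k-1)}$ is no longer valid. The remedy is either to bound the resulting bias directly---showing it is $o(\epsilon)$, using that labels resolved with so few samples carry little total weight---or, once full occupancy is certified, to switch to the simpler plug-in estimator $\sum_\ell(T_\ell/M)\,Z_\ell$, whose bias and variance are controlled just as in the uniform analysis; verifying that one of these variants preserves the $\scO\!\bigl((k/\epsilon^2)\log(1/\delta)\bigr)$ residual cost is the delicate point. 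A secondary issue is the precise reading of the $\Theta(\cdot)$ statement---an upper bound valid for every ensemble with the given $p_{\min},q_{\min}$, matched by a lower bound attained on a near-uniform instance---together with the geometric realizability of the clustered hard instance in fixed dimension.
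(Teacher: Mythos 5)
Your proposal is correct in outline and arrives at the right scaling, but it takes a genuinely different route from the paper on both sides of the bound. For the upper bound, the paper does \emph{not} argue via full occupancy of every label: it invokes Theorem~\ref{thm:nonuniform_fixed_k_scaling_only}, i.e.\ the variance bound $\mathrm{Var}\le \frac{k!}{M^k}S_{ab}(k)$ for the importance-corrected $U$-statistic, then bounds $S_{11}(k)\le N_1^2 p_{\min}^{4-2k}$, $S_{22}(k)\le N_2^2 q_{\min}^{4-2k}$, $S_{12}(k)\le N_1N_2\,\omega_{\min}^{2-k}$ and uses $(k!)^{1/k}=\Theta(k)$, $(N_1N_2)^{1/k}=1+o(1)$. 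Your occupancy-plus-concentration argument (mirroring the uniform Theorem~\ref{thm: sample complexity of MMD-k, k large}) gives the same scaling and is arguably more transparent about where the $k/\omega_{ab}$ cost comes from; the paper's route buys a cleaner reduction to the fixed-$k$ statement but hides the same soft spot you flag, namely that replacing $w_\ell$ by $\hat w_\ell=T_\ell/M$ in a factor raised to the power $k-1$ is only justified in the paper for fixed $k$ (``changes only constants''), not for $k=\Theta(N)$ — so your worry is legitimate and applies to the paper's own proof as well; your proposed fix (certify occupancy, then use the plug-in $\sum_\ell (T_\ell/M)Z_\ell$) is a reasonable repair. For the lower bound, the paper uses the same occupancy obstruction on a single least-likely label but with a different hard instance: two instances identical except that $X_{\ell^\star}\in\{0,1\}$ on one label of weight $\omega_{\min}$, so the targets differ by $\omega_{\min}$ and cannot be distinguished by any collision-based estimator when $T_{\ell^\star}<k$; your clustered instance with radius $\sim 1/\sqrt{k}$ instead makes the defect $\Omega(1)$, which is better adapted to a constant-$\epsilon$ guarantee (the paper's separation $\omega_{\min}$ forces $\epsilon<\omega_{\min}/2$). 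Your observation that $1/\omega_{\min}\le\max\{1/p_{\min}^2,1/q_{\min}^2\}$, making the third term in the max redundant, is correct and not exploited in the paper. Both lower bounds are estimator-specific rather than information-theoretic, consistent with the theorem's reference to the estimator of Eq.~\eqref{eq:Dhat_nonuniform}.
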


\begin{proof}
\emph{Upper bound.}
Apply Theorem~\ref{thm:nonuniform_fixed_k_scaling_only} with the crude bounds
\begin{align*}
S_{12}(k)
&=\sum_{i=1}^{N_1}\sum_{j=1}^{N_2}(p_i q_j)^{2-k}
\ \le\ N_1N_2\,\omega_{\min}^{2-k},\\
S_{11}(k)
&=\sum_{i,i'}(p_i p_{i'})^{2-k}
=\Big(\sum_{i=1}^{N_1} p_i^{2-k}\Big)^2
\ \le\ N_1^2\,p_{\min}^{4-2k}
=\frac{N_1^2}{p_{\min}^{2k-4}},\\
S_{22}(k)
&\le\ N_2^2\,q_{\min}^{4-2k}
=\frac{N_2^2}{q_{\min}^{2k-4}}.
\end{align*}
Using Stirling's approximation $(k!)^{1/k}=\Theta(k)$ and $(N_1N_2)^{1/k}=1+o(1)$ for $k\to\infty$,
Eq.~\eqref{eq:M_fixedk_scaling_only} yields
\begin{equation*}
M
\ =\ \scO\!\left(
k\cdot \max\left\{\frac{1}{p_{\min}^2},\ \frac{1}{q_{\min}^2},\ \frac{1}{\omega_{\min}}\right\}
\right).
\end{equation*}

\emph{Lower bound.}
Consider the cross-label distribution $\Omega_{12}$ and a least-likely label
$\ell^\star\in\Omega_{12}$ with probability $w_{\ell^\star}=\omega_{\min}$.
Let $T_{\ell^\star}\sim\mathrm{Bin}(M,\omega_{\min})$ be its count in $M$ samples.
A standard binomial tail bound implies
\begin{equation}
\Pr(T_{\ell^\star}\ge k)\ \le\ \left(\frac{eM\omega_{\min}}{k}\right)^k.
\label{eq:bin_tail_for_lower}
\end{equation}
If $M \le c\,k/\omega_{\min}$ for a sufficiently small constant $c<1/e$, then
$\Pr(T_{\ell^\star}\ge k)\le (ec)^k$ is exponentially small in $k$, hence negligible
when $k=\Theta(N)$.
Therefore, with high constant probability one has $T_{\ell^\star}<k$, i.e., no $k$-fold
information is available for the label $\ell^\star$.

To convert this observation into a lower bound on $M$, consider two instances that are
identical except on $\ell^\star$: choose $X_{\ell^\star}=0$ for instance A and
$X_{\ell^\star}=1$ for instance B, while keeping all other $X_\ell$ the same.
Then the target quantity $\bar{F}^{(k)}(\mathcal{E}_1,\mathcal{E}_2)=\sum_\ell w_\ell X_\ell^k$
differs by exactly $\omega_{\min}$ between the two instances.
However, on the event $T_{\ell^\star}<k$, any collision-based estimator of the form
Eq.~\eqref{eq:Fhat_nonuniform} cannot extract the $k$-th-moment contribution from $\ell^\star$
(since it requires at least $k$ repeated samples of $\ell^\star$ to form a $k$-fold product),
and hence cannot distinguish these two instances with constant probability.
This implies $M=\Omega(k/\omega_{\min})$.

Applying the same argument to $(a,b)=(1,1)$ and $(2,2)$ yields lower bounds
$\Omega(k/p_{\min}^2)$ and $\Omega(k/q_{\min}^2)$, respectively.
Combining these bounds gives Eq.~\eqref{eq:M_k_asymp_N_scaling_only}.
\end{proof}

\section{Estimating distance metric by tomography}

One may consider using a straightforward method, that is, to reconstruct every state in the two ensembles by tomography \cite{sample_optimal_tomography_7956181}, and then calculate the distance trace directly. Suppose for each ensemble, each time we get access of a state randomly while knowing the index; then, we conduct a designed measurement and record the measurement result and index of the corresponding state as one sample. Note that measurements across different copies are not allowed in our consideration. We regard the number of samples needed to achieve an additive error $\epsilon$ with failure probability $\delta$ as the sample complexity to estimate distance metrics. We have
\begin{theorem}[Sample Complexity of MMD-$k$, tomography (independent measurements)]
    Consider two $N$-state \emph{uniform} pure quantum ensembles $\scE_1$ and $\scE_2$ in dimension $d$.
    Assume only \emph{independent} (single-copy) measurements are allowed across copies when performing tomography.
    Let $\widehat{\scD}^{(k)}$ be the estimator obtained by (i) performing tomography on each state and
    (ii) directly computing MMD-$k$ from the reconstructed states.
    Then there exist absolute constants $c_1,c_2>0$ such that, for any $\epsilon\in(0,1)$ and $\delta\in(0,1)$,
    \begin{equation}
        c_1\,N\,\frac{d\,k^4}{\epsilon^4\,\log\frac{k}{\epsilon}}
        \ \le\ 
        M
        \ \le\
        c_2\,N\left(
        \frac{d\,k^4}{\epsilon^4}\,
        \log\frac{d\,k^2}{\epsilon^2}
        +\log\frac{N}{\delta}
        \right),
        \label{eq: sample complexity tomography}
    \end{equation}
    where $M$ is the total number of copies (samples) drawn from the ensembles, $\epsilon$ is the target additive error,
    and $\delta$ is the failure probability.
    The lower bound follows from the independent-measurement tomography lower bound in
    \cite{sample_optimal_tomography_7956181}, together with Lemma~\ref{lem: error introduced by central states}
    (using the tomography output as the ``central state'').
    The upper bound follows from any standard independent-measurement tomography procedure achieving trace-distance
    accuracy with $O\!\left(\frac{d}{\tau^2}\log\frac{d}{\tau}\right)$ copies per state and failure probability $\le \delta/N$,
    combined with Lemma~\ref{lem: error introduced by central states} and the balls-into-bins bound ensuring $T_{\min}\ge t$.
    \label{thm: sample complexity tomography MMD-k}
\end{theorem}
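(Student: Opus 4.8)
The plan is to establish the two inequalities separately, using Lemma~\ref{lem: error introduced by central states} as the bridge between per-state reconstruction error and the error it induces in $\scD^{(k)}$, and treating the independent-measurement tomography results of~\cite{sample_optimal_tomography_7956181} as black boxes for both achievability and the converse.

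For the upper bound, first I would pin down how accurately each state must be reconstructed. Running any independent single-copy tomography routine on a state $\ket{\psi}$ with a budget sufficient to guarantee trace-distance error $\norm{\hat{\rho}-\ketbra{\psi}{\psi}}_1\le\tau$, one takes $\ket{\tilde{\psi}}$ to be the principal eigenvector of $\hat{\rho}$; a one-line eigenvalue-perturbation estimate (from $\bra{\psi}\hat{\rho}\ket{\psi}\ge 1-\tau$, $\lambda_{\max}(\hat{\rho})\ge\bra{\psi}\hat{\rho}\ket{\psi}$, and $\lambda_{\max}(\hat{\rho})\le 1$) gives $1-\abs{\braket{\tilde{\psi}}{\psi}}^2\le 2\tau$, so $\ket{\tilde{\psi}}$ lies in an $\epsilon$-ball (Definition~\ref{def: epsilon ball}) of radius $\epsilon_b=2\tau$ around $\ket{\psi}$. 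Viewing the reconstructed states as the central states, Lemma~\ref{lem: error introduced by central states} yields $\abs{\widehat{\scD}^{(k)}-\scD^{(k)}}\le 16k\sqrt{\epsilon_b}=16k\sqrt{2\tau}$, so it suffices to take $\tau=\Theta(\epsilon^2/k^2)$, i.e.\ per-state copy cost $\scO\big((d/\tau^2)\log(d/\tau)\big)=\scO\big((dk^4/\epsilon^4)\log(dk^2/\epsilon^2)\big)$ by~\cite{sample_optimal_tomography_7956181}. Finally, since each sample is a uniformly random one of the $\le 2N$ states, the balls-into-bins / occupancy argument already used in the proof of Theorem~\ref{thm: sample complexity Wasserstein} (cf.\ Lemma~\ref{lem:balanced_count}) shows that drawing $M=\scO\big(N(T+\log(N/\delta))\big)$ samples makes every state receive at least $T$ copies with probability $\ge 1-\delta$; substituting $T$ gives the stated upper bound, the $\log(N/\delta)$ absorbing the occupancy overhead and the failure budget.

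For the lower bound, I would reduce an $(\epsilon,\delta)$-accurate tomography-then-compute estimator to a tomography subroutine and invoke the independent-measurement tomography converse of~\cite{sample_optimal_tomography_7956181}. Take two ensembles that coincide in all but one slot, where that slot holds a worst-case state $\rho^\star$ for the tomography lower bound, embedded among $N-1$ fixed, mutually far-apart states so that the rest of the ensemble contributes nothing useful for resolving $\rho^\star$. A direct construction in the spirit of Lemma~\ref{lem: error introduced by central states} exhibits a perturbation of $\rho^\star$ inside an $\epsilon_b$-ball that moves $\scD^{(k)}$ by more than $\epsilon$ once $\epsilon_b\gg\epsilon^2/k^2$; hence any estimator meeting the guarantee must, on this family, reconstruct that state to trace distance $\scO(\epsilon^2/k^2)$. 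The tomography converse then forces $\Omega\big((d/\tau^2)/\log(1/\tau)\big)=\Omega\big(dk^4/(\epsilon^4\log(k/\epsilon))\big)$ copies to be spent on that single state, and since the adversary can hide the hard state in any of the $\Theta(N)$ slots, an averaging (fooling-set) argument over the hard slot promotes this to $M=\Omega\big(Ndk^4/(\epsilon^4\log(k/\epsilon))\big)$.

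The routine part is the upper bound, which is just the composition of three known estimates. The delicate step is the lower bound: it hinges on producing a pure-state configuration in which an infidelity-$\epsilon_b$ reconstruction error genuinely propagates to a $\Theta(k\sqrt{\epsilon_b})$ error in $\scD^{(k)}$ — the regime where Lemma~\ref{lem: error introduced by central states} is tight — rather than the milder $\scO(k\epsilon_b)$ behaviour that governs when all pairwise fidelities are near $1$, and then arguing that the estimator cannot circumvent this by cheaply guessing which state is hard. Matching the $\log(k/\epsilon)$ factor and the slot-placement reduction to the precise form of the converse in~\cite{sample_optimal_tomography_7956181} is the part that will require the most care.
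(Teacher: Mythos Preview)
Your upper-bound plan is exactly the paper's: convert a trace-distance tomography guarantee into an infidelity bound, treat the reconstructed state as the ``central state'' in Lemma~\ref{lem: error introduced by central states} to get $\lvert\widehat{\scD}^{(k)}-\scD^{(k)}\rvert\le 16k\sqrt{\epsilon_b}$, set $\epsilon_b=\Theta(\epsilon^2/k^2)$, invoke the $O\bigl(\frac{d}{\tau^2}\log\frac{d}{\tau}\bigr)$ per-state independent-measurement tomography cost, and pay the balls-into-bins $N\log(N/\delta)$ overhead via the same occupancy bound (the paper uses Eq.~\eqref{eq: M needed to achieve Tmin}). Nothing differs here except cosmetics.

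For the lower bound you are actually \emph{more} careful than the paper. The paper's entire argument is the single line ``Then $Nt_1$ gives the lower bound,'' obtained by substituting $\delta_{\rm tomo}=\Theta(\epsilon^2/k^2)$ into the independent-measurement tomography converse of~\cite{sample_optimal_tomography_7956181} and multiplying by~$N$. It does \emph{not} exhibit an instance in which an infidelity-$\epsilon_b$ reconstruction error genuinely produces a $\Theta(k\sqrt{\epsilon_b})$ shift in $\scD^{(k)}$ --- i.e., it silently assumes Lemma~\ref{lem: error introduced by central states} is tight, precisely the step you flag as delicate. So your proposed construction of a tight pure-state configuration and the reduction to the tomography converse are additions to, not deviations from, the paper's route.

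Two small simplifications relative to what you wrote. First, because the theorem is about the \emph{fixed} tomography-then-compute estimator, there is no adversarial ``guessing which state is hard'' to defeat: the estimator tomographs every slot regardless, so the only question is how many copies the hard slot receives. Second, the factor of $N$ does not need a fooling-set or averaging argument: sampling is uniform, so the hard slot gets $\Theta(M/N)$ copies with high probability, and the per-state converse then forces $M/N\ge t_1$ directly.
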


\begin{proof}
We use the $\epsilon$-ball model to quantify the error introduced by tomography as follows.
For each true pure state $\ket{\psi}$ in the ensembles, tomography (using only independent single-copy measurements) outputs a pure state estimate $\ket{\widehat{\psi}}$ such that, with high probability,
\begin{equation}
1-\abs{\braket{\psi}{\widehat{\psi}}}^2 \le \epsilon_{\mathrm{tomo}}.
\label{eq: tomo_eps_ball}
\end{equation}
We interpret \eqref{eq: tomo_eps_ball} exactly as saying that the true state $\ket{\psi}$ lies in the $\epsilon$-ball
$\scB(\ket{\widehat{\psi}},\epsilon_{\mathrm{tomo}})$ whose central state is the reconstructed state $\ket{\widehat{\psi}}$
(cf.\ Definition~\ref{def: epsilon ball}).

Let $\widehat{\scE}_1$ and $\widehat{\scE}_2$ denote the ensembles obtained by replacing every true state in $\scE_1,\scE_2$
by its tomographic reconstruction, while keeping the same ensemble weights.
Then the pair $(\scE_1,\scE_2)$ and $(\widehat{\scE}_1,\widehat{\scE}_2)$ satisfy the premise of Lemma~\ref{lem: error introduced by central states}
with $\epsilon_b=\epsilon_{\mathrm{tomo}}$. Therefore,
\begin{equation}
\abs{\scD^{(k)}(\scE_1,\scE_2)-\scD^{(k)}(\widehat{\scE}_1,\widehat{\scE}_2)}
\le 16k\sqrt{\epsilon_{\mathrm{tomo}}}.
\label{eq: D_error_from_tomo_one_radius}
\end{equation}
Hence, to make the tomography-induced error at most $\epsilon/3$, it suffices to choose
\begin{equation}
16k\sqrt{\epsilon_{\mathrm{tomo}}}\le \epsilon/3
\qquad\Longleftrightarrow\qquad
\epsilon_{\mathrm{tomo}} \le \left(\frac{\epsilon}{48k}\right)^2.
\label{eq: choose_eps_tomo}
\end{equation}
(If tomography is the only error source being controlled in this step, replace $\epsilon/3$ by $\epsilon$ accordingly.)

It remains to translate \eqref{eq: choose_eps_tomo} into a copy complexity for independent-measurement tomography.
The reference \cite{sample_optimal_tomography_7956181} uses the infidelity $1-F(\rho,\widehat{\rho})$ (not $1-F^2$) as the accuracy goal.
For pure states, if we ensure
\begin{equation}
1-\abs{\braket{\psi}{\widehat{\psi}}}\le \delta_{\mathrm{tomo}},
\label{eq: tomo_infidelity_linear}
\end{equation}
then
\begin{equation*}
1-\abs{\braket{\psi}{\widehat{\psi}}}^2
=(1-\abs{\braket{\psi}{\widehat{\psi}}})(1+\abs{\braket{\psi}{\widehat{\psi}}})
\le 2\delta_{\mathrm{tomo}}.
\end{equation*}
Thus it suffices to take $\delta_{\mathrm{tomo}}:=\epsilon_{\mathrm{tomo}}/2$, i.e.
\begin{equation}
\delta_{\mathrm{tomo}} \le \frac{1}{2}\left(\frac{\epsilon}{48k}\right)^2.
\label{eq: delta_tomo_choice}
\end{equation}

Under the restriction to independent (product) measurements, \cite{sample_optimal_tomography_7956181} proves a lower bound:
to achieve $1-F(\rho,\widehat{\rho})\le \delta_{\mathrm{tomo}}$ with constant success probability for rank-$1$ states in dimension $d$,
one needs
\begin{equation}
t \ge \Omega\!\left(\frac{d}{\delta_{\mathrm{tomo}}^{\,2}\,\log(1/\delta_{\mathrm{tomo}})}\right) = t_1.
\label{eq: tomo_lower_indep}
\end{equation}
Moreover, known tomography schemes based on independent measurements achieve trace-distance accuracy $\Delta$ using
$t = O\!\left(\frac{d}{\Delta^2}\log\frac{d}{\Delta}\right)$ copies (as summarized in \cite{sample_optimal_tomography_7956181});
since $1-F(\rho,\widehat{\rho})\le \tfrac{1}{2}\norm{\rho-\widehat{\rho}}_1$, setting $\Delta=\delta_{\mathrm{tomo}}$ yields an upper bound
\begin{equation}
t \le \mathcal{O}\!\left(\frac{d}{\delta_{\mathrm{tomo}}^{\,2}}\log\frac{d}{\delta_{\mathrm{tomo}}}\right)=t_2,
\label{eq: tomo_upper_indep}
\end{equation}
up to logarithmic factors.

Finally, to reconstruct all states when each sample reveals the state index and we cannot measure across copies, we require that
each state is observed at least $t$ times (i.e.\ $T_{\min}\ge t$).
Using \eqref{eq: M needed to achieve Tmin} with $n=N$ and $L=\log(N/\delta)$, it suffices to take
\begin{equation}
M \ge \frac{t+\log(N/\delta)}{p_{\min}} \label{eq: M to reach}
\end{equation}
for each ensemble, where $p_{\min}$ denotes the minimum probability of sampling any state in that ensemble, $p_{\min} = N$ for uniform pure state ensembles.

Then $Nt_1$ gives the lower bound while Eq.~(\ref{eq: M to reach}) and $t_2$ give the upper bound, together we have Eq.~(\ref{eq: sample complexity tomography}).
\end{proof}

\section{Technical details used in proofs}
Some technical results needed are collected in this section.
\subsection{Preliminary results on Binomial distributions}
For any integer $m$, we write $[m] := \{1, \dots, m\}$. Fix \(n\in\mathbb N\) and a finite state space \(\{x_1,\dots,x_{n}\}\). Let
\[
  Z_1,\dots,Z_M \in [n] 
\]
be i.i.d.\ samples with
\(\Pr(Z_j = \ell) = 1/n\) for each $\ell\in [n]$ and $j\in [M]$. Define for $\ell\in[n]$,
\[
  T_\ell := \sum_{j=1}^M \mathbf 1\{Z_j = \ell \}
\]
so that \(\sum_{\ell=1}^n T_\ell = M\).
Fix an integer \(k\geq 1\), and define
\[
  m := \sum_{\ell=1}^n \mathbf 1\{T_\ell \geq k\}.
\]
Thus \(m\) is the number of states that appear at least $k$ times among the $M$ samples. By symmetry, each \(T_\ell\) follows a binomial distribution
\[
  T_\ell \sim \mathrm{Bin}\!\left(M,\frac1n\right),
  \qquad
  \mu = \dsE[T_\ell] = \frac{M}{n}.
\]
We have, 
\begin{align*}
  \dsE [m]
  &= \sum_{\ell=1}^n \dsE\big[ \mathbf 1\{T_\ell \geq k\} \big]
   = \sum_{\ell=1}^n \Pr(T_\ell \geq k) \\
  &= n \,\Pr\!\left(\mathrm{Bin}\!\left(M,\frac1n\right)\geq k\right)\\
  &= n \sum_{j=k}^M \binom{M}{j}\left(\frac1n\right)^j\left(1-\frac1n\right)^{M-j} .
\end{align*}

\subsection{Bound on Expectation $\dsE[m]$}
For $k \geq 1$, we have
\begin{align*}
  \dsE [m]
  & = n \sum_{j=k}^{M}
        \binom{M}{j}
        \left(\frac{1}{n}\right)^j
        \left(1-\frac{1}{n}\right)^{M-j}\\
  & = n \binom{M}{k}
        \left(\frac{1}{n}\right)^k
        \left(1-\frac{1}{n}\right)^{M-k}(1 + R_n),   
\end{align*}
where the remainder
\begin{align*}
  R_n
  := \sum_{j=k+1}^{M}
       \frac{\binom{M}{j}}{\binom{M}{k}}
       \left(\frac1n\right)^{j-k}
       \left(1-\frac1n\right)^{k-j}.
\end{align*}
First note that
\begin{align*}
  \binom{M}{k}
  = \frac{M(M-1)\cdots(M-k+1)}{k!}
  = \Theta \left( \frac{M^k}{k!}\right),
\end{align*}
as well as
\begin{align*}
  \left(1-\frac{1}{n}\right)^{M-k}
  = \exp\left(
      (M-k)\log\left(1-\frac{1}{n}\right)
    \right)
    = \Theta \left( e^{-M/n} \right).
\end{align*}
So we have 
\begin{align*}
  n \binom{M}{n}
    \left(\frac{1}{n}\right)^k
    \left(1-\frac{1}{n}\right)^{M-k}
    = \Theta \left( \frac{M^k e^{-M/n}}{k!\,n^{k-1}} \right).
\end{align*}

Now we deal with the remainder. For \(j\ge k+1\), note that $\frac{\binom{M}{j}}{\binom{M}{k}}\leq \frac{M^{j-k}}{(j-k)!}$. 
Thus,
\begin{align*}
  |R_n|\leq
  \sum_{j=k+1}^{M}
    \frac{M^{j-k}}{(j-k)!}
    \left(\frac1n\right)^{j-k}
  \leq \sum_{r=1}^{\infty}
    \frac{1}{r!} \left( \frac{M}{n} \right)^r
   = e^{M/n}-1 = \Theta \left( \frac{M}{n} \right).
\end{align*}
Combining all the analysis above, we obtain
\begin{align}\label{eqn_ES_bound_order}
\dsE [m]
  = \Theta \left(\frac{M^k e^{-M/n}}{k!\,n^{k-1}} \right) . 
\end{align}
With the condition $M\le cn$, where $c$ is a constant, we have
\begin{equation}
    \dsE [m]
  = \Theta \left(\frac{M^k}{k!\,n^{k-1}} \right).\label{eq: Em M<=cN2}
\end{equation}

\subsection{Negative Association and Chernoff-type inequality}
We recall the notion of negative association \cite{Dubhashi_Ranjan_Negative_Dependence96}.

\begin{definition}[Negative Association]
Let $X := (X_1,\dots,X_n)$ be a vector of random variables. The random variables $X$ are \emph{negatively associated} (NA) if for every two disjoint index sets 
$I, J \subseteq [n]$,
\begin{align*}
  \dsE\!\left[ f(X_i,\, i\in I)\, g(X_j,\, j\in J) \right]
  \;\le\;
  \dsE\!\left[f(X_i,\, i\in I)\right]\,
  \dsE\!\left[g(X_j,\, j\in J)\right]
\end{align*}
for all functions $f : \mathbb{R}^{|I|} \to \mathbb{R}$ and 
$g : \mathbb{R}^{|J|} \to \mathbb{R}$ that are both non-decreasing or both non-increasing.
\end{definition}
We will use several standard properties of NA
(see, e.g., \cite{Dubhashi_Ranjan_Negative_Dependence96}).

\begin{lemma}[Basic properties of NA]\label{lem:NA-closure}

\begin{itemize}\quad
  \item If \((X_1,\dots,X_n)\) is NA and
    \(f_i:\mathbb R\to\mathbb R\) are coordinatewise nondecreasing
    functions, then \((f_1(X_1),\dots,f_n(X_n))\) is NA.
  \item If \((X_1,\dots,X_n)\) and \((X'_1,\dots,X'_n)\) are independent
    NA families, then \((X_1+X'_1,\dots,X_n+X'_n)\) is NA.
   \item If \((X_1,\dots,X_n)\) is
    NA, then for any non-decreasing functions $f_i,\,i\in[n]$, 
    $$\dsE \prod_{i\in[n]} f_i(X_i) \leq \prod_{i\in[n]}\dsE  f_i(X_i). $$
\end{itemize}

\end{lemma}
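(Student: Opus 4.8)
The plan is to establish the three items of Lemma~\ref{lem:NA-closure} in increasing order of difficulty, using only the definition of negative association (NA) together with elementary monotonicity arguments; all three are classical facts about NA, and the reference already cited in the statement contains each of them.

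For the product-expectation bound (third item), I would induct on $n$. The case $n=1$ is trivial. For the inductive step, split $[n]$ into the disjoint blocks $\{1,\dots,n-1\}$ and $\{n\}$; taking the $f_i$ nonnegative (the case relevant to the applications in this paper), the partial product $\prod_{i=1}^{n-1} f_i(X_i)$ is a nondecreasing function of $(X_1,\dots,X_{n-1})$ while $f_n(X_n)$ is nondecreasing in $X_n$, so the defining NA inequality gives $\dsE\big[\prod_{i=1}^{n} f_i(X_i)\big]\le \dsE\big[\prod_{i=1}^{n-1} f_i(X_i)\big]\,\dsE[f_n(X_n)]$, and the induction hypothesis closes the estimate.

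For closure under coordinatewise nondecreasing maps (first item), I would observe that if $g$ and $h$ are both nondecreasing (resp.\ both nonincreasing) and each $f_i$ is nondecreasing, then $x\mapsto g\big((f_i(x_i))_{i\in I}\big)$ and $x\mapsto h\big((f_j(x_j))_{j\in J}\big)$ are again both nondecreasing (resp.\ both nonincreasing) in the corresponding blocks of coordinates. Hence the NA inequality for $(X_i)$ applied to this pair of composed functions is exactly the NA inequality for $(f_i(X_i))$.

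The step I expect to require the most care is closure under addition of independent NA families (second item). Set $Y_i:=X_i+X_i'$, and for a block $K$ write $Y_K:=(Y_i)_{i\in K}$ (and similarly $X_K$, $X_K'$). Fix disjoint $I,J$ and nondecreasing $g,h$ and condition on the second family: $\dsE[g(Y_I)h(Y_J)]=\dsE_{X'}\big[\dsE_X[g(X_I+X_I')\,h(X_J+X_J')\mid X']\big]$. For each fixed value of $X'$ the maps $X_I\mapsto g(X_I+X_I')$ and $X_J\mapsto h(X_J+X_J')$ are nondecreasing, so NA of $(X_i)$ bounds the inner expectation by $\phi(X_I')\,\psi(X_J')$, where $\phi(x):=\dsE_X[g(X_I+x)]$ and $\psi(y):=\dsE_X[h(X_J+y)]$ are again nondecreasing. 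Applying NA of $(X_i')$ to the disjoint blocks $I,J$ then bounds $\dsE_{X'}[\phi(X_I')\psi(X_J')]$ by $\dsE_{X'}[\phi(X_I')]\,\dsE_{X'}[\psi(X_J')]$; by independence of the two families and the tower property these two factors equal $\dsE[g(Y_I)]$ and $\dsE[h(Y_J)]$, which finishes the nondecreasing case, and replacing $(g,h)$ with $(-g,-h)$ handles the nonincreasing case since the inequality is invariant under this joint sign change. The main obstacle here is purely organizational: keeping the two nested conditionings straight and checking that $\phi$ and $\psi$ inherit monotonicity, which is precisely what makes the second application of NA legitimate.
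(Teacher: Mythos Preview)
The paper does not supply its own proof of this lemma: it is stated as a black box with the remark ``We will use several standard properties of NA (see, e.g., \cite{Dubhashi_Ranjan_Negative_Dependence96})'' and no proof environment follows. Your sketch is exactly the standard argument one finds in that reference and is correct, with the one caveat you already flag: the product bound in the third item needs the $f_i$ to be nonnegative so that the partial product $\prod_{i<n} f_i(X_i)$ is genuinely coordinatewise nondecreasing (without nonnegativity the inequality as written can fail). Since the only downstream use in the paper is Lemma~\ref{lem:mgf-NA} with $f_i(y)=e^{\theta y}\ge 0$, this restriction is harmless here.
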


Recall that $(T_1,\dots,T_n)$ is NA by \cite{Dubhashi_Ranjan_Negative_Dependence96} and \(Y_\ell = \mathbf 1\{T_\ell \geq K\}\).
Each \(Y_\ell\) is a coordinatewise nondecreasing function of \(T_\ell\), and
does not depend on other coordinates \(T_{\ell'}\) with \(\ell'\neq \ell\).
Thus we can apply Lemma~\ref{lem:NA-closure} coordinatewise and obtain that the family $(Y_1,\dots,Y_d)$ is negatively associated.

We now derive Chernoff-type bounds for
$S = \sum_{\ell=1}^d Y_\ell$, using the
negative association of the $Y_\ell$'s.

\begin{lemma}[MGF bound for NA Bernoulli sums]\label{lem:mgf-NA}
Let $Y_1,\dots,Y_d$ be negatively associated random variables taking
values in $\{0,1\}$.
Let $S = \sum_{\ell=1}^d Y_\ell$ and $\mu := \dsE S = \sum_\ell \dsE Y_\ell$.
Then, for all $\theta > 0$,
\begin{align*}
  \dsE [ e^{\theta S} ]
  \leq \exp\big( (e^{\theta}-1)\mu \big).
\end{align*}
\end{lemma}

\begin{proof}
Since the $Y_\ell$'s are NA and $y\mapsto e^{\theta y}$ is nondecreasing for
$\theta>0$, the definition of NA implies
\begin{align*}
  \dsE \Big[\prod_{\ell=1}^d e^{\theta Y_\ell}\Big]
  \le \prod_{\ell=1}^d \dsE [e^{\theta Y_\ell} ].
\end{align*}
The left-hand side is precisely \(\dsE e^{\theta S}\).
For each \(\ell\), since \(Y_\ell\in\{0,1\}\),
\[
  \dsE [ e^{\theta Y_\ell} ]
  = (1-p_\ell)  + p_\ell e^{\theta}
  = 1 + p_\ell(e^{\theta}-1),
  \qquad p_\ell := \dsE [Y_\ell ].
\]
Thus
\[
  \prod_{\ell=1}^d \dsE [ e^{\theta Y_\ell} ]
  = \prod_{\ell=1}^d \big(1 + p_\ell(e^{\theta}-1)\big).
\]
Using \(\log(1+u)\le u\) for \(u>-1\), we get
\begin{align*}
\prod_{\ell=1}^d \dsE [ e^{\theta Y_\ell} ]
  = \exp \left( \sum_{\ell=1}^d \log(1 + p_\ell(e^{\theta}-1)\big) \right)
  \leq \exp \left( \sum_{\ell=1}^d p_\ell(e^{\theta}-1) \right)
  = e^{(e^{\theta}-1)\mu}.
\end{align*}
Combination with the NA property completes the proof.
\end{proof}

We are now ready for Chernoff-type tail bounds.

\begin{proposition}[Chernoff bounds]\label{prop:Chernoff-S}
Let \(S = \sum_{\ell=1}^n Y_\ell\) with \(Y_\ell = \mathbf 1\{T_\ell\geq k\}\) as above,
and write \(\mu := \dsE [S] \).
Then the following hold:
\begin{enumerate}[label=(\roman*)]
  \item For any \(\epsilon>0\),
  \begin{align*}
    \Pr(S \ge (1+\epsilon)\mu)
    \leq \exp\big(- \frac{\epsilon^2}{2+\epsilon} \mu\big).       
  \end{align*}
  \item For any \(\epsilon\in(0,1)\),
  \begin{align*}
    \Pr(S \le (1-\epsilon)\mu)
    \leq \exp\big(-\mu\epsilon^2/2\big).  
  \end{align*}
\end{enumerate}
As a result, for $\epsilon\in ( 0, 1)$,
\begin{align*}
     \Pr( |S - \mu| \ge \epsilon \mu)
    \leq 2\exp\big(- \mu\epsilon^2/3 \big).       
\end{align*}
\end{proposition}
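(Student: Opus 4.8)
The plan is a standard Chernoff argument layered on top of the moment generating function bound in Lemma~\ref{lem:mgf-NA}, together with the fact—established in the paragraph preceding the proposition—that $(Y_1,\dots,Y_n)$ is negatively associated. Throughout, write $\mu:=\dsE[S]$ and $p_\ell:=\dsE[Y_\ell]$, so $\mu=\sum_\ell p_\ell$.

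For the upper tail (i), I would fix $\theta>0$ and apply Markov's inequality to $e^{\theta S}$: this gives $\Pr(S\ge(1+\epsilon)\mu)\le e^{-\theta(1+\epsilon)\mu}\,\dsE[e^{\theta S}]\le \exp\big((e^{\theta}-1)\mu-\theta(1+\epsilon)\mu\big)$, where the second inequality is exactly Lemma~\ref{lem:mgf-NA}. Minimizing the exponent over $\theta>0$ at $\theta=\log(1+\epsilon)$ produces the classical bound $\Pr(S\ge(1+\epsilon)\mu)\le\big(e^{\epsilon}(1+\epsilon)^{-(1+\epsilon)}\big)^{\mu}$. It then remains to invoke the elementary inequality $e^{\epsilon}(1+\epsilon)^{-(1+\epsilon)}\le\exp\!\big(-\tfrac{\epsilon^2}{2+\epsilon}\big)$ for all $\epsilon>0$, equivalently $(1+\epsilon)\log(1+\epsilon)-\epsilon\ge\tfrac{\epsilon^2}{2+\epsilon}$, which is a routine one-variable calculus check (compare Taylor expansions at $\epsilon=0$, or differentiate).

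For the lower tail (ii), I would instead apply Markov to $e^{-\theta S}$ with $\theta>0$, obtaining $\Pr(S\le(1-\epsilon)\mu)\le e^{\theta(1-\epsilon)\mu}\,\dsE[e^{-\theta S}]$. Since $y\mapsto e^{-\theta y}$ is non-increasing, negative association of $(Y_\ell)$ still decouples the product, $\dsE[e^{-\theta S}]\le\prod_\ell\dsE[e^{-\theta Y_\ell}]=\prod_\ell\big(1+p_\ell(e^{-\theta}-1)\big)\le\exp\big((e^{-\theta}-1)\mu\big)$, which is the exact analogue of Lemma~\ref{lem:mgf-NA} for negative exponent. Optimizing over $\theta>0$ at $\theta=-\log(1-\epsilon)$ yields $\Pr(S\le(1-\epsilon)\mu)\le\big(e^{-\epsilon}(1-\epsilon)^{-(1-\epsilon)}\big)^{\mu}$, and the standard estimate $e^{-\epsilon}(1-\epsilon)^{-(1-\epsilon)}\le e^{-\epsilon^2/2}$ for $\epsilon\in(0,1)$ completes (ii). Finally, for the two-sided statement, note that for $\epsilon\in(0,1)$ one has $2+\epsilon<3$, hence $\tfrac{\epsilon^2}{2+\epsilon}>\tfrac{\epsilon^2}{3}$ and also $\tfrac{\epsilon^2}{2}\ge\tfrac{\epsilon^2}{3}$, so a union bound over (i) and (ii) gives $\Pr(|S-\mu|\ge\epsilon\mu)\le 2\exp(-\mu\epsilon^2/3)$.

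The argument is essentially routine, so there is no deep obstacle; the single point that deserves care is the lower-tail MGF bound, where one must observe that negative association yields the product inequality for the non-increasing maps $e^{-\theta Y_\ell}$ as well—this is the non-increasing counterpart of the third bullet of Lemma~\ref{lem:NA-closure}, equivalently the fact that $(-Y_\ell)_\ell$ is again negatively associated—so that Lemma~\ref{lem:mgf-NA} extends verbatim to $\theta<0$. The scalar inequalities $e^{\epsilon}(1+\epsilon)^{-(1+\epsilon)}\le e^{-\epsilon^2/(2+\epsilon)}$ and $e^{-\epsilon}(1-\epsilon)^{-(1-\epsilon)}\le e^{-\epsilon^2/2}$ are classical and I would either cite them or verify them by elementary calculus without grinding through the details here.
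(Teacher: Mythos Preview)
Your proposal is correct and follows essentially the same route as the paper: Markov's inequality combined with the MGF bound of Lemma~\ref{lem:mgf-NA}, optimized at $\theta=\log(1+\epsilon)$ for the upper tail and at $\theta=\log(1-\epsilon)$ for the lower tail, then reduced via the standard scalar inequalities. The only difference is cosmetic: the paper phrases the key scalar estimate as $\log(1+x)\ge \frac{x}{1+x/2}$, which is exactly your $(1+\epsilon)\log(1+\epsilon)-\epsilon\ge\frac{\epsilon^2}{2+\epsilon}$, and you are a bit more explicit than the paper about why negative association still yields the product bound for the non-increasing maps $e^{-\theta Y_\ell}$ in the lower-tail step.
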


\begin{proof}
The proofs follow the standard Chernoff arguments, using the moment generating function bound in Lemma~\ref{lem:mgf-NA} instead of independence.

For (i), fix \(\epsilon>0\) and set \(a = (1+\epsilon)\mu\).
By Markov's inequality and Lemma~\ref{lem:mgf-NA},
for any \(\theta>0\),
\begin{align*}
  \Pr(S \geq a)
  \le e^{-\theta a}\,\dsE [e^{\theta S}]
  \le e^{-\theta a} \exp\big( (e^{\theta}-1)\mu \big)
  = \exp\big( -\theta a + (e^{\theta}-1)\mu \big).
\end{align*}
Taking $\theta = \log ( 1 + \epsilon )$ yields that 
\begin{align*}
  \log \Pr(S \geq (1+\epsilon)\mu )
   \leq \mu(\epsilon - (1+\epsilon) \log(1+\epsilon) ) \leq -\frac{\epsilon^2}{ 2 + \epsilon }\mu.
\end{align*}
Here for the last inequality, we use the fact that $\log( 1+x ) \geq \frac{ x }{ 1 + x/2}$ for all $x>0$.

The proof for (ii) follows the same procedure but takes $\theta = \log(1-\epsilon)$.
\end{proof}

\subsection{The minimum number of samples}\label{appsub: minM to reach T}
In this subsection, we consider the number of total samples $M$ needed to achieve $T_{\min} \ge t$, that is, with a high probability, the number of samples for every label is no less than $t$. Here we consider the general case with nonuniform probability distribution.

Let each sample fall into label $\ell\in[n]$ with probability $p_\ell>0$, $\sum_{\ell=1}^n p_\ell=1$. After drawing $M$ i.i.d.\ samples, the count vector $(T_1,\dots,T_n)$ is multinomial, and for each $\ell$ the marginal satisfies
\begin{equation*}
    T_\ell \sim \mathrm{Bin}(M,p_\ell),\qquad \mu_\ell:=\dsE[T_\ell]=Mp_\ell.
\end{equation*}
We aim to ensure $\Pr(T_{\min} < t)\le \delta$, where $T_{\min}:=\min_{\ell\in[n]}T_\ell$. By the union bound,
\begin{equation*}
    \Pr(T_{\min} < t)\le \sum_{\ell=1}^n \Pr(T_\ell < t).
\end{equation*}
Fix $\ell$ and assume $\mu_\ell \ge t$. Let $\eta_\ell\in[0,1]$ be defined by $t=(1-\eta_\ell)\mu_\ell$, i.e.
\begin{equation*}
    \eta_\ell = 1-\frac{t}{\mu_\ell}=1-\frac{t}{Mp_\ell}.
\end{equation*}
Applying the Chernoff lower-tail bound as Proposition \ref{prop:Chernoff-S} for a binomial random variable,
\begin{equation*}
    \Pr\!\big(T_\ell \le (1-\eta_\ell)\mu_\ell\big)\le \exp\!\Big(-\frac{\eta_\ell^2\mu_\ell}{2}\Big),
\end{equation*}
we obtain
\begin{equation}
    \Pr(T_\ell < t)\le \exp\!\Big(-\frac{(Mp_\ell-t)^2}{2Mp_\ell}\Big).
\end{equation}
Therefore,
\begin{equation}
    \Pr(T_{\min} < t)\le \sum_{\ell=1}^n \exp\!\Big(-\frac{(Mp_\ell-t)^2}{2Mp_\ell}\Big).
    \label{eq:nonunif_Tmin_union_chernoff}
\end{equation}
A convenient sufficient condition is enforce
$\Pr(T_\ell<t)\le \delta_\ell$ for $\{\delta_\ell\}_{\ell=1}^n$ such that $\delta_\ell\in(0,1)$ and $\sum_{\ell=1}^n \delta_\ell \le \delta$. Namely, it suffices that
\begin{equation*}
    \exp\!\Big(-\frac{(Mp_\ell-t)^2}{2Mp_\ell}\Big)\le \delta_\ell
    \quad\Longleftrightarrow\quad
    \frac{(Mp_\ell-t)^2}{2Mp_\ell}\ge \log\frac{1}{\delta_\ell}.
\end{equation*}
Let $L_\ell:=\log\frac{1}{\delta_\ell}$ and set $\mu_\ell=Mp_\ell$. The inequality becomes
\begin{equation*}
    \frac{(\mu_\ell-t)^2}{2\mu_\ell}\ge L_\ell
    \quad\Longleftrightarrow\quad
    \mu_\ell^2-2(t+L_\ell)\mu_\ell+t^2\ge 0,
\end{equation*}
which holds whenever $\mu_\ell\ge (t+L_\ell)+\sqrt{(t+L_\ell)^2-t^2}=(t+L_\ell)+\sqrt{L_\ell^2+2tL_\ell}$. Hence it suffices that, for every $\ell$,
\begin{equation*}
    Mp_\ell \ge t+L_\ell+\sqrt{L_\ell^2+2tL_\ell}.
\end{equation*}
Equivalently,
\begin{equation}
    M \ge \max_{\ell\in[n]}\frac{t+L_\ell+\sqrt{L_\ell^2+2tL_\ell}}{p_\ell},
    \qquad L_\ell=\log\frac{1}{\delta_\ell},\quad \sum_{\ell=1}^n \delta_\ell\le \delta.
    \label{eq:nonunif_Tmin_M_bound}
\end{equation}
In particular, taking $\delta_\ell=\delta/n$ yields $L_\ell=L:=\log(n/\delta)$ and
\begin{equation}
    M \ge \frac{t+L+\sqrt{L^2+2tL}}{p_{\min}},\qquad p_{\min}:=\min_{\ell\in[n]}p_\ell. \label{eq: M needed to achieve Tmin}
\end{equation}
Since $\sqrt{L^2+2tL} <t+L$, we always use $M\ge \frac{t+L}{p_{\min}}$.

Eq.~(\ref{eq: M needed to achieve Tmin}) gives an upper bound to achieve $T_{\min} \ge t$ (a sufficient condition), we also consider the lower bound.

Let $\ell^\star\in[n]$ be a label attaining $p_{\min}$, i.e., $p_{\ell^\star}=p_{\min}$.
Since the event $\{T_{\min}\ge t\}$ implies $\{T_{\ell^\star}\ge t\}$, we have
\begin{equation}
    \Pr(T_{\min}\ge t)\ \le\ \Pr(T_{\ell^\star}\ge t),
    \qquad
    \Pr(T_{\min}<t)\ \ge\ \Pr(T_{\ell^\star}<t).
\end{equation}
In particular, if $\Pr(T_{\min}<t)\le \delta$, then necessarily $\Pr(T_{\ell^\star}<t)\le \delta$.
Write $\mu_{\min}:=\dsE[T_{\ell^\star}]=Mp_{\min}$ and define $\eta\ge 0$ by
$t=(1+\eta)\mu_{\min}$ (equivalently $\eta=\frac{t}{Mp_{\min}}-1$).
When $\mu_{\min}\le t$ (i.e., $\eta\ge 0$), applying the  Chernoff upper-tail bound as Proposition \ref{prop:Chernoff-S} for a binomial random variable gives
\begin{equation}
    \Pr(T_{\ell^\star}\ge t)
    \ =\ \Pr\!\big(T_{\ell^\star}\ge (1+\eta)\mu_{\min}\big)
    \ \le\ \exp\!\Big(-\frac{\eta^2\mu_{\min}}{2+\eta}\Big)
    \ =\ \exp\!\Big(-\frac{(t-Mp_{\min})^2}{Mp_{\min}+t}\Big).
\end{equation}
Therefore, a necessary condition for $\Pr(T_{\min}<t)\le \delta$ is
\begin{equation}
    1-\delta\ \le\ \Pr(T_{\ell^\star}\ge t)
    \ \le\ \exp\!\Big(-\frac{(t-Mp_{\min})^2}{Mp_{\min}+t}\Big),
\end{equation}
which implies
\begin{equation}
    \frac{(t-Mp_{\min})^2}{Mp_{\min}+t}\ \le\ \log\frac{1}{1-\delta}.
\end{equation}
Let $u:=Mp_{\min}$. Rearranging yields the quadratic inequality
\begin{equation}
    (t-u)^2\ \le\ (u+t)\log\frac{1}{1-\delta}
    \quad\Longleftrightarrow\quad
    u^2-(2t+\Lambda)u+(t^2-\Lambda t)\ \le\ 0,
    \qquad \Lambda:=\log\frac{1}{1-\delta}.
\end{equation}
Hence it is necessary that
\begin{equation}
    u \ \ge\ (t+\tfrac{\Lambda}{2})-\frac{1}{2}\sqrt{\Lambda^2+8t\Lambda},
\end{equation}
and thus
\begin{equation}
    M\ \ge\ \frac{(t+\tfrac{\Lambda}{2})-\frac{1}{2}\sqrt{\Lambda^2+8t\Lambda}}{p_{\min}},
    \qquad \Lambda=\log\frac{1}{1-\delta}.
    \label{eq:nonunif_Tmin_lower}
\end{equation}
In particular, for any constant $\delta\in(0,1)$, $\Lambda=\Theta(1)$, and Eq.~(\ref{eq:nonunif_Tmin_lower})
implies the scaling lower bound
\begin{equation}
    M\ =\ \Omega\!\left(\frac{t}{p_{\min}}\right).
    \label{eq:nonunif_Tmin_lower_simplified}
\end{equation}

A second lower bound comes from the coupon-collector obstruction. Let
\begin{equation*}
    S:=\{\ell\in[n]:\,p_\ell\le 2p_{\min}\},\qquad s:=|S|.
\end{equation*}
If $T_{\min}\ge 1$, then in particular no label in $S$ is empty, i.e.,
\begin{equation*}
    \{T_{\min}\ge 1\}\ \subseteq\ \bigcap_{\ell\in S}\{T_\ell\ge 1\}
    \ =\ \left\{\sum_{\ell\in S}\mathbf{1}\{T_\ell=0\}=0\right\}.
\end{equation*}
Define the number of empty labels in $S$ by
\begin{equation*}
    Z_S\ :=\ \sum_{\ell\in S}\mathbf{1}\{T_\ell=0\}.
\end{equation*}
Then $\Pr(T_{\min}\ge 1)\le \Pr(Z_S=0)$, so a necessary condition for $\Pr(T_{\min}\ge 1)\ge 1-\delta$
is
\begin{equation}
    \Pr(Z_S=0)\ \ge\ 1-\delta.
    \label{eq:Zs_need}
\end{equation}

For any $\ell\in S$, since $p_\ell\le 2p_{\min}$, we have
\begin{equation}
    \Pr(T_\ell=0)=(1-p_\ell)^M \ \ge\ (1-2p_{\min})^M
    \ \ge\ \exp(-4Mp_{\min}),
    \label{eq:empty_prob_lb_rig}
\end{equation}
where the last inequality uses $1-x\ge e^{-2x}$ for $x\in[0,1/2]$ (and we may assume $p_{\min}\le 1/4$;
otherwise $p_{\min}=\Theta(1)$ and the claimed scaling lower bound is trivial).
Therefore,
\begin{equation}
    \dsE[Z_S]
    \ =\ \sum_{\ell\in S}\Pr(T_\ell=0)
    \ \ge\ s\,e^{-4Mp_{\min}}.
    \label{eq:EZs_lb_rig}
\end{equation}

Next we upper bound $\Pr(Z_S=0)$ in terms of $\dsE[Z_S]$.
For multinomial occupancy, the indicators $\{\mathbf{1}\{T_\ell=0\}\}_{\ell\in S}$
are negatively associated, and hence
\begin{equation}
    \Pr(Z_S=0)
    \ =\ \Pr\Big(\bigcap_{\ell\in S}\{T_\ell\ge 1\}\Big)
    \ \le\ \prod_{\ell\in S}\Pr(T_\ell\ge 1)
    \ =\ \prod_{\ell\in S}\big(1-\Pr(T_\ell=0)\big).
    \label{eq:neg_assoc_step}
\end{equation}
Using $1-x\le e^{-x}$ for all $x\in[0,1]$, we further obtain
\begin{equation}
    \Pr(Z_S=0)
    \ \le\ \exp\!\Big(-\sum_{\ell\in S}\Pr(T_\ell=0)\Big)
    \ =\ \exp\!\big(-\dsE[Z_S]\big).
    \label{eq:PZ0_le_expEZ}
\end{equation}
Combining \eqref{eq:Zs_need} and \eqref{eq:PZ0_le_expEZ} yields the necessary condition
\begin{equation}
    1-\delta\ \le\ \Pr(Z_S=0)\ \le\ \exp\!\big(-\dsE[Z_S]\big)
    \quad\Longrightarrow\quad
    \dsE[Z_S]\ \le\ \log\frac{1}{1-\delta}.
    \label{eq:EZs_necessary}
\end{equation}
Finally, plugging \eqref{eq:EZs_lb_rig} into \eqref{eq:EZs_necessary} gives
\begin{equation*}
    s\,e^{-4Mp_{\min}}
    \ \le\ \log\frac{1}{1-\delta}.
\end{equation*}
Equivalently,
\begin{equation}
    M\ \ge\ \frac{1}{4p_{\min}}
    \left(\log s-\log\log\frac{1}{1-\delta}\right).
    \label{eq:nonunif_Tmin_lower_coupon_rig}
\end{equation}
In particular, for $\delta\in(0,1/2)$ we have $\log\frac{1}{1-\delta}=\Theta(\delta)$ and hence
$\log\log\frac{1}{1-\delta}=O(\log\log(1/\delta))$, so \eqref{eq:nonunif_Tmin_lower_coupon_rig}
implies the scaling lower bound
\begin{equation}
    M\ =\ \Omega\!\left(\frac{\log(s/\delta)}{p_{\min}}\right).
    \label{eq:nonunif_Tmin_lower_coupon}
\end{equation}

Combining Eqs.~(\ref{eq:nonunif_Tmin_lower_simplified}) and (\ref{eq:nonunif_Tmin_lower_coupon})
gives the overall necessary scaling
\begin{equation*}
    M\ =\ \Omega\!\left(\frac{t+\log(s/\delta)}{p_{\min}}\right).
\end{equation*}
In the uniform case $p_\ell=1/n$ (hence $p_{\min}=1/n$ and $s=n$), this reduces to
$M=\Omega\!\big(n(t+\log(n/\delta))\big)$.

\subsection{Hypothesis construction used in the proof \ref{appsub: MMD-k proof of lower bound}}\label{Appendix_worst_case_construction}
To construct the two pairs of ensembles mentioned in \ref{appsub: MMD-k proof of lower bound}, we first show that $X_\ell\overset{\text{i.i.d.}}{\sim}\mu_0$ and $X'_\ell\overset{\text{i.i.d.}}{\sim}\mu_1$ can be constructed by two specific pairs of ensembles, as long as $\mu_0,\mu_1 \subset [0,\alpha/N]$ for $\alpha\in(0,1)$.
\begin{lemma}[Realizability of an entrywise-bounded cross-fidelity table]
\label{lem:realize-bounded-fidelity-table}
Let $N\ge 1$ and let $X=(X_{ij})\in[0,1]^{N\times N}$ satisfy
\begin{equation}
    0\le X_{ij} < \frac{1}{N}\qquad \text{for all } i,j\in\{1,\dots,N\}.
\end{equation}
Then there exist two ensembles of $N$ pure states $\{\ket{\psi_i}\}_{i=1}^N$ and $\{\ket{\phi_j}\}_{j=1}^N$
in a Hilbert space of dimension $2N$ such that
\begin{equation*}
    \Abs{\braket{\psi_i}{\phi_j}}^2 = X_{ij}\qquad \text{for all } i,j\in\{1,\dots,N\}.
\end{equation*}
Moreover, both ensembles can be chosen to contain $N$ pairwise distinct states.
\end{lemma}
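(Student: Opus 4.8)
The plan is to give an explicit block construction in a space $\scH\cong\dsC^{N}\oplus\dsC^{N}$ of dimension $2N$. Fix an orthonormal basis $\{\ket{e_1},\dots,\ket{e_N},\ket{f_1},\dots,\ket{f_N}\}$, with the $\ket{e_i}$ spanning the first block and the $\ket{f_j}$ the second. I would take the first ensemble to be the orthonormal family $\ket{\psi_i}:=\ket{e_i}$, and the second ensemble
\begin{equation*}
    \ket{\phi_j} := \sum_{i=1}^{N}\sqrt{X_{ij}}\,\ket{e_i}\;+\;r_j\,\ket{f_j},
    \qquad r_j:=\sqrt{\,1-\sum_{i=1}^{N}X_{ij}\,}.
\end{equation*}
The role of the $N$ extra ``slack'' directions $\ket{f_j}$ is precisely to absorb the defect $1-\sum_i X_{ij}$ so that each $\ket{\phi_j}$ becomes a genuine unit vector, one slack direction per state of the second ensemble.

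First I would check well-posedness and the defining identity. Since $0\le X_{ij}<1/N$ for every $i$, we get $\sum_{i=1}^{N}X_{ij}<N\cdot(1/N)=1$ for each $j$, so $r_j$ is a well-defined and \emph{strictly positive} real number. A one-line computation then gives $\braket{\phi_j}{\phi_j}=\sum_i X_{ij}+r_j^2=1$, so the $\ket{\phi_j}$ are normalized (the $\ket{\psi_i}$ being normalized by construction). Because $\ket{e_i}$ is orthogonal to every $\ket{f_{j}}$ and to every $\ket{e_{i'}}$ with $i'\neq i$, one obtains $\braket{\psi_i}{\phi_j}=\braket{e_i}{\phi_j}=\sqrt{X_{ij}}$, hence $\Abs{\braket{\psi_i}{\phi_j}}^2=X_{ij}$ for all $i,j$, which is the asserted cross-fidelity table.

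It remains to verify the distinctness claim and this is where I would be careful about what ``distinct'' means (distinct density operators, i.e.\ distinct up to global phase, as in Definition~\ref{def: quantum ensemble}). The $\ket{\psi_i}$ are pairwise orthogonal, hence pairwise distinct up to phase. For the second ensemble, observe that $\ket{\phi_j}$ has a nonzero component along $\ket{f_j}$ (because $r_j>0$) but vanishing component along $\ket{f_{j'}}$ for every $j'\neq j$; therefore $\ket{\phi_j}$ and $\ket{\phi_{j'}}$ have incomparable supports and cannot be equal, nor equal up to any global phase $e^{\ii\theta}$. This is exactly the second place where the strict bound $X_{ij}<1/N$ is used: it is what guarantees $r_j>0$, which simultaneously makes each $\ket{\phi_j}$ normalizable and forces the $\ket{\phi_j}$ to be pairwise distinct. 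There is essentially no real obstacle in this lemma; the only points that demand attention are (i) tracking that the strict inequality is invoked both for normalization and for distinctness, and (ii) phrasing the support argument so that it rules out coincidence up to a global phase rather than merely as literal vectors.
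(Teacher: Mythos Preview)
Your proposal is correct and is essentially identical to the paper's own proof: the same block decomposition $\dsC^N\oplus\dsC^N$, the same choice $\ket{\psi_i}=\ket{e_i}$ and $\ket{\phi_j}=\sum_i\sqrt{X_{ij}}\,\ket{e_i}+r_j\ket{f_j}$, and the same distinctness argument via the nonzero $\ket{f_j}$-component. Your added remark that the support argument rules out coincidence up to a global phase (not just as literal vectors) is a nice clarification, but otherwise the two proofs coincide step for step.
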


\begin{proof}
Fix an orthonormal basis $\{e_1,\dots,e_N,f_1,\dots,f_N\}$ of $\mathbb{C}^{2N}$.
Define
\begin{equation*}
    \ket{\psi_i} := \ket{e_i},\qquad i=1,\dots,N.
\end{equation*}
Since $X_{ij}<1/N$ for all $i,j$, for each fixed column $j$ we have
\begin{equation*}
    \sum_{i=1}^N X_{ij} < \sum_{i=1}^N \frac{1}{N} = 1,
\end{equation*}
so the slack term $1-\sum_{i=1}^N X_{ij}$ is strictly positive. Define, for each $j=1,\dots,N$,
\begin{equation}
\label{eq:phi-construction}
    \ket{\phi_j}
    := \sum_{i=1}^N \sqrt{X_{ij}}\;\ket{e_i}
    + \sqrt{\,1-\sum_{i=1}^N X_{ij}\,}\;\ket{f_j}.
\end{equation}
Then
\begin{equation*}
    \norm{\phi_j}^2
    = \sum_{i=1}^N X_{ij} + \Bigl(1-\sum_{i=1}^N X_{ij}\Bigr)
    = 1,
\end{equation*}
so each $\ket{\phi_j}$ is a unit vector. Moreover, for all $i,j$,
\begin{equation*}
    \braket{\psi_i}{\phi_j} = \braket{e_i}{\phi_j} = \sqrt{X_{ij}},
\end{equation*}
hence $\Abs{\braket{\psi_i}{\phi_j}}^2=X_{ij}$ as claimed.

Finally, the states $\{\ket{\psi_i}\}_{i=1}^N$ are orthonormal and therefore pairwise distinct.
Also, since $1-\sum_{i=1}^N X_{ij}>0$, the component of $\ket{\phi_j}$ along $\ket{f_j}$ in
\eqref{eq:phi-construction} is nonzero, and $\braket{f_j}{\phi_{j'}}=0$ for all $j'\neq j$.
Thus $\ket{\phi_j}\neq \ket{\phi_{j'}}$ whenever $j\neq j'$, i.e., $\{\ket{\phi_j}\}_{j=1}^N$ are pairwise distinct.
\end{proof}

Then we need to find two probability distributions $\mu_0,\mu_1$, such that their first $k-1$ moments agree and their $k$-th moment differs. Fix integers $N\ge 1$ and $k\ge 1$, and choose a constant $\alpha\in(0,1)$ with $  a := \frac{\alpha}{N}$. Define $\mu_0$ to be the uniform distribution on $[0,a]$, i.e., it has density
\begin{equation*}
    f_0(x) := \frac{1}{a}\,\mathbf{1}_{[0,a]}(x).
\end{equation*}
Let $P_k(\cdot)$ be the Legendre polynomial of degree $k$ on $[-1,1]$, and define its shift to $[0,a]$ by
\begin{equation*}
    \widetilde{P}_k(x) := P_k\!\left(\frac{2x}{a}-1\right),\qquad x\in[0,a].
\end{equation*}
For any parameter $\eta\in\mathbb{R}$ satisfying
\begin{equation*}
    |\eta|\le \frac{1}{\sup_{t\in[-1,1]}|P_k(t)|},
\end{equation*}
define $\mu_1$ to be the distribution on $[0,a]$ with density
\begin{equation*}
    f_1(x) := \frac{1}{a}\Bigl(1+\eta\,\widetilde{P}_k(x)\Bigr)\mathbf{1}_{[0,a]}(x).
\end{equation*}
Then $f_1(x)\ge 0$ on $[0,a]$, hence $\mu_1$ is a valid probability measure supported on $[0,a]=[0,\alpha/N]$.
Moreover, by orthogonality of Legendre polynomials,
\begin{equation*}
    \int_0^a x^r\, d\mu_0(x) \;=\; \int_0^a x^r\, d\mu_1(x)
    \qquad \text{for all } r=0,1,\dots,k-1,
\end{equation*}
while
\begin{equation*}
    \int_0^a x^k\, d\mu_0(x) \;\neq\; \int_0^a x^k\, d\mu_1(x),
\end{equation*}
whenever $\eta\neq 0$. $\Delta_k : = |\int_0^a x^k\, d\mu_0(x)- \int_0^a x^k\, d\mu_1(x)| = \eta k!(\frac{\alpha}{kN})^k$.

\end{document}